\documentclass[pra,amsmath,amssymb,amsfonts,superscriptaddress]{revtex4-2}
\usepackage{graphicx} % Required for inserting images
\usepackage{appendix}
\usepackage{amsfonts,amssymb}
\usepackage{physics}
\usepackage{tikz}
\usepackage{float}
\usepackage{hyperref}
\usepackage{dsfont}
\usepackage{quantikz}
\usepackage{amsthm}
\usepackage{subfig}
\usepackage{soul}

\newtheorem{theorem}{Theorem}

\newtheorem{corollary}[theorem]{Corollary}

\theoremstyle{definition}

\begin{document}

\title{Quantum Barankin bounds beyond local unbiasedness: Analytic results for Gaussian states via a right-division framework}

\author{Maximilian Reichert}
\affiliation{Department of Physical Chemistry, University of the Basque Country UPV/EHU, Apartado 644, 48080 Bilbao, Spain}
\affiliation{EHU Quantum Center, University of the Basque Country UPV/EHU, Apartado 644, 48080 Bilbao, Spain} 
\author{Javier Navarro}
\affiliation{Department of Physical Chemistry, University of the Basque Country UPV/EHU, Apartado 644, 48080 Bilbao, Spain}
\author{Gerardo Adesso}
\affiliation{School of Mathematical Sciences and Centre for the Mathematics and Theoretical Physics of Quantum Non-Equilibrium Systems, University of Nottingham, University Park, Nottingham NG7 2RD, UK}
\author{Manuel Gessner}
\affiliation{Instituto de Física Corpuscular (IFIC), CSIC-Universitat de València and Departament de Física Teòrica, UV, Avinguda Vicent Andrés Estellés, 19, E-46100 Burjassot (València), Spain}
\author{Mikel Sanz}
\affiliation{Department of Physical Chemistry, University of the Basque Country UPV/EHU, Apartado 644, 48080 Bilbao, Spain}
\affiliation{EHU Quantum Center, University of the Basque Country UPV/EHU, Apartado 644, 48080 Bilbao, Spain} 
\affiliation{Basque Center for Applied Mathematics (BCAM), Alameda de Mazarredo 14, 48009 Bilbao, Spain}

\date{\today}

\begin{abstract}
We propose a quantum version of the Barankin bound as an alternative to the quantum Cram\'er-Rao bound for quantum parameter estimation. The quantum Barankin bound provides a lower bound on the mean squared error of estimators satisfying arbitrarily chosen bias constraints at arbitrarily chosen parameter points. In particular, unbiasedness over the entire parameter space can be imposed, yielding precision limits for globally unbiased quantum parameter estimation. In contrast to the recently derived quantum Barankin bound, which is based on a symmetric division superoperator, our bound is based on a nonsymmetric right-division superoperator. This formulation enables us to find analytic expressions for the Barankin matrix for Gaussian states, which allows for an efficient calculation of the bound. We demonstrate the usefulness of our results by applying them to various examples that exhibit the threshold effect in the few-shot regime, which is invisible to the standard quantum Cram\'er-Rao bound approach.
\end{abstract}

\maketitle

\section{Introduction}
Quantum parameter estimation theory is concerned with determining the ultimate precision limits of quantum estimation protocols \cite{helstrom1969a,paris2009a,giovannetti2011,toth2014,liu2019}. A key figure of merit in the field is the quantum Cram\'er-Rao bound (QCRB), which is a lower bound on  the  mean squared error (MSE) for any \emph{locally} unbiased estimator. The precision set by the QCRB can be attained with sufficiently good prior knowledge of the parameter or in the regime of many repetitions of the experiment. In realistic scenarios, often neither of the aforementioned two conditions are satisfied, and thus, the actual estimation performance can be far worse than what the QCRB suggests. In these situations, it is desirable to have tighter, more informative bounds. One route is to employ Bayesian methods~\cite{rubio2019,rubio2020a,rubio2018,li2018,albarelli2026}, which lead to the derivations of various Bayesian bounds including  Weiss-Weinstein \cite{weiss1985,weinstein1988,lu2016} and Ziv-Zakai bounds \cite{ziv1969,tsang2012}. Another approach is the frequentist framework, which we consider in this paper. In the classical  frequentist framework, tighter generalizations of the CRB have been  proposed. Two prominent examples are the families of Barankin~\cite{barankin1949}  and Bhattacharyya~\cite{bhattacharyya1946} bounds. The Bhattacharyya bounds generalize the CRB by extending the estimator's bias constraints from the first derivative to derivatives of arbitrary order. The Barankin bounds, on the other hand, impose bias constraints  at a freely chosen set of parameter test points. A particularly natural choice for the bias constraints is to impose unbiasedness at the test points. With this, by optimizing over the number and position of test points, one obtains the greatest lower bound on the MSE, evaluated at the true parameter value, among all estimators that are unbiased at all possible parameter values. Barankin showed that this lower bound is attainable if it exists \cite{barankin1949}.

 Recently, Ref.~\cite{gessner2023hierarchies} derived quantum versions of the Barankin  and Bhattacharyya bounds. The derivation relies on symmetric division superoperators, of which the symmetric logarithmic derivative (SLD) relevant for the QCRB is a special case. While explicit formulas to calculate these general quantum bounds were provided, the evaluation seems challenging for high or infinite dimensional states, such as Gaussian states, as the formulas rely on finding the eigendecomposition of the quantum state. For Gaussian states, no analytic formulas in terms of mean vector and covariance matrix are known.

 In this paper, we derive a novel quantum Barankin bound, which is based on a non-symmetric ``right'' division superoperator. Our derivation is general in the sense that it also applies to infinite dimensional quantum systems. We provide explicit sufficient existence conditions for the infinite dimensional case. Furthermore, we establish a hierarchy between our non-symmetric quantum Barankin bound and the symmetric version and show that the non-symmetric bound is always smaller than the symmetric version, in analogy to the QCRB hierarchy between the right logarithmic derivative (RLD) and the SLD version. We furthermore show that our quantum Barankin bound recovers the QCRB based on the RLD under suitable regularity conditions in the limit of only two test points of vanishing distance.
 
 A key advantage of our approach is that for Gaussian states, we found analytic formulas for the Barankin matrix in terms of the mean vectors and covariance matrices in analogy to Gaussian state formulas for the QCRB~\cite{safranek2018a,safranek2015,gao2014bounds,bakmou2020multiparameter,monras2013,jiang2014,sorelli2024a,shoukang2026}. This allows for an efficient evaluation of the bound even for highly multi-mode Gaussian states.

\section{Preliminaries}

Before starting, let us note that vectors are printed bold, quantum operators carry a hat, the matrix transpose is denoted as $()^T$, the Hermitian conjugate is denoted as $()^\dag$ and the complex conjugate is denoted as $()^*$.

\subsection{Classical Estimation Theory}\label{subsec:ClassicalEstimation}
We begin with a brief review of a general framework of local classical estimation theory that contains as special cases the well-known Cram\'er-Rao bound and the Barankin bound. Our derivation follows Refs.~\cite{gessner2023hierarchies,Reichert2026}.

In local estimation theory the goal is to estimate a fixed parameter $\lambda \in \Lambda$ (the true parameter) from a set $\Lambda \subset \mathds{R}$ of possible parameter values  by observing a sample $\mathbf x\in \mathds{R}^d$ (a measurement outcome in the context of physics) drawn from the probability density function $p(\mathbf x\vert\lambda)$. For this, an estimator $\lambda_{\text{est}}(\mathbf x)$  maps the observed data $\mathbf x$ to an estimate. As a performance metric for estimators we use the mean squared error (MSE), which is defined as
\begin{align}
  \text{MSE}_\lambda[\lambda_{\text{est}}]   =  \mathds{E}_\lambda\left[ \left(\lambda_{\text{est}}(\mathbf X) - \lambda \right)^2\right] = \mathrm{Var}_\lambda[\lambda_{\text{est}}(\mathbf{X})] + b_\lambda^2 ,
\end{align}
where $\mathds{E}_{\lambda}[(\cdot)] = \int_D \mathrm{d}^d x \, p(\mathbf x\vert\lambda)  (\cdot) $. We integrate over the support $D = \overline{\{ \mathbf x : p(\mathbf x\vert \lambda) > 0\}}$ of the probability density $p(\mathbf x\vert\lambda)$. In the discrete case, the integral is understood as a sum over outcomes $\mathds{E}_{\lambda}[(\cdot)] = \sum_{\mathbf x\in D} p(\mathbf x\vert \lambda) (\cdot)$.

We also note that $ \mathrm{Var}_\lambda[\lambda_{\text{est}}(\mathbf{X})] = \mathds{E}_{\lambda}[(\lambda_{\text{est}}(\mathbf{X})-\mathds{E}_{\lambda}[\lambda_{\text{est}}(\mathbf{X})])^2]$ is the variance and
\begin{align}
    b_{\lambda} = \mathds{E}_{\lambda} [\lambda_{\text{est}}(\mathbf{X})] -\lambda
\end{align}
is the bias of the estimator at the true value of the parameter $\lambda$.

We now derive a family of lower bounds on the MSE for estimators that satisfy certain constraints.
For this, let us introduce the complex vector  function $\mathbf g (\mathbf x) = (g_1(\mathbf x),\ldots,g_K(\mathbf x))^T$ , and the complex vector $\mathbf a = (a_1,\ldots, a_K)^T$.
 With this, we define the functions
 $
    v(\mathbf x) =  \sqrt{p  (\mathbf x\vert\lambda)} \left(\lambda_{\text{est}}(\mathbf x) - \lambda \right) 
$
and 
 $
    u(\mathbf x) = \frac{1}{\sqrt{p  (\mathbf x\vert\lambda)}} \mathbf a^\dag \mathbf g(\mathbf x)
 $, defined on the support $D$ of $p(\mathbf x\vert \lambda)$.
Let  $u(\mathbf x),v(\mathbf x) \in L^2 (D)$, then 
the Cauchy-Schwarz inequality holds
   $ \left \vert \int \mathrm{d}^d x \, u^*(\mathbf x) v(\mathbf x)\right\vert^2 \leq \left( \int \mathrm{d}^d x \, \vert u(\mathbf x) \vert^2\right)\left( \int \mathrm{d}^d x \, \vert v(\mathbf x) \vert^2\right)$. With this, we obtain the general lower bound on the MSE
\begin{align}
    \text{MSE}_\lambda [\lambda_{\text{est}}] \geq \frac{\vert \mathbf a^\dag\boldsymbol\gamma \vert^2}{\mathbf a^\dag C\mathbf a} ,  
\end{align}
for estimators $\lambda_{\text{est}} (\mathbf x)$ that satisfy the constraint equations
\begin{align}
    \int_{D} \mathrm{d}^d x \,  \mathbf g(\mathbf x) ( \lambda_{\text{est}} (\mathbf x)- \lambda) = \boldsymbol{\gamma}\label{sec3:constraintgamma} ,
\end{align}
where $\boldsymbol\gamma \in\mathds{C}^K$.
Here, $C$ is a Hermitian positive semidefinite $K\times K$ matrix with elements
\begin{align}
    C_{kl} = \int_{D} \mathrm{d}^d x \, \frac{g_k(\mathbf x)g_l^*(\mathbf x)}{p(\mathbf x\vert\lambda)} \label{eq:InformationMatrix} .
\end{align}

Now, for given $\mathbf{g}(\mathbf x)$ and $\boldsymbol \gamma$ we can optimize over $\mathbf a\in \mathds{C}^K$ which yields
\begin{align}
   \text{MSE}_\lambda [\lambda_{\text{est}}] \geq \sup_{\mathbf a\neq 0} \frac{\vert \mathbf a^\dag\boldsymbol\gamma \vert^2}{\mathbf a^\dag C\mathbf a} \label{eq:GeneralBound} .
\end{align}

Assuming that $C$ is invertible, one can again use the Cauchy-Schwarz inequality for vectors $\mathbf w,\mathbf q\in \mathds{C}^K$: $\vert \mathbf q^\dag \mathbf w\vert^2 \leq \Vert \mathbf q\Vert^2 \Vert \mathbf w\Vert^2$. Now, set $\mathbf q = C^{1/2} \mathbf a$ and $\mathbf w = C^{-1/2}\boldsymbol \gamma$. With this, we find
$
    \frac{\vert \mathbf a^\dag\boldsymbol\gamma \vert^2}{\mathbf a^\dag C\mathbf a} =  \frac{\vert \mathbf q^\dag\mathbf w \vert^2}{\mathbf a^\dag C\mathbf a} \leq \boldsymbol\gamma^\dag C^{-1} \boldsymbol \gamma .
$
We have equality above, if and only if $\mathbf w$ and $\mathbf q$ are parallel, which is the case for the choice $\mathbf a = C^{-1} \boldsymbol \gamma$. Thus, we arrive at the optimized bound
\begin{align}
    \text{MSE}_\lambda [\lambda_{\text{est}}] \geq \boldsymbol\gamma^\dag C^{-1} \boldsymbol \gamma \label{eq:GeneralBoundInvertibile}.
\end{align}
In the case in which the inverse does not exist, we can replace the $C^{-1}$ above by the Moore-Penrose pseudo inverse if $\boldsymbol \gamma $ is in the range of $C$. If $C$ is not invertible and $\boldsymbol\gamma$ is not in the range of $C$, the supremum tends to infinity and the bound becomes meaningless. Analogous discussions regarding invertibility apply to the quantum versions of the quantum Barankin bounds that we discuss later.

\subsubsection{Cram\'er-Rao bound}
Let us now recover the well-known CRB from this very general approach.
 For this, one chooses the two test functions  $\mathbf g(\mathbf x) = ( g_1 (\mathbf x), g_2(\mathbf x))^T  $ with $g_1(\mathbf x) = p  (\mathbf x\vert\lambda)$ and $g_2(\mathbf x) = \partial_{\tilde\lambda} p  (\mathbf x\vert \tilde\lambda) \vert_{\tilde\lambda = \lambda}$. We assume that the usual regularity conditions are satisfied. For this choice of $\mathbf g(\mathbf x)$, we can calculate the matrix in Eq.~\eqref{eq:InformationMatrix} and obtain
\begin{align}
    C_{\text{CRB}} = \begin{pmatrix}
        1 & 0 \\
        0 & F (\lambda )  
    \end{pmatrix} ,
\end{align}
where
\begin{align} \label{sec3:FIdefinition}
    F(\lambda) &= \int_{D}\mathrm{d}^dx\, \frac{(\partial_{\tilde\lambda} p  (\mathbf x\vert\tilde\lambda) \vert_{\tilde\lambda = \lambda})^2}{p (\mathbf x\vert\lambda)} 
\end{align}
is the Fisher information.
With this follows a general form of the CRB
\begin{align}
    \text{MSE}_{\lambda}[\lambda_{\text{est}}]  \geq \boldsymbol \gamma^\dag C^{-1} \boldsymbol \gamma &= \gamma_1^2 \cdot 1 + \frac{\gamma_2^2}{F(\lambda)}  \\
    &= b_\lambda^2 + \frac{(1+\partial_{\tilde\lambda} b_{\tilde\lambda} \vert_{\tilde\lambda = \lambda})^2}{F(\lambda)} ,
\end{align}
with $\gamma_1^2 = b_\lambda^2$ and $\gamma_2^2= (1+\partial_{\tilde\lambda} b_{\tilde\lambda} \vert_{\tilde\lambda = \lambda})^2$.
Note that the bound explicitly depends on the bias and its derivative at the true parameter $\lambda$. When setting $b_\lambda=\partial_{\tilde\lambda} b_{\tilde\lambda} \vert_{\tilde\lambda = \lambda} = 0$, we recover the most known form
\begin{align}
      \text{MSE}_\lambda[\lambda_{\text{est}}]  \geq  \frac{1}{F(\lambda)}
\end{align}
of the CRB valid for all locally unbiased estimators.

\subsubsection{Barankin bound}
Let us now recover the classical Barankin bound from our general framework. The Barankin bound imposes bias constraints at test points $\{\lambda_k\}_{k=1}^K$ freely chosen from the set  $\Lambda$. This set $\Lambda$ usually also contains $\lambda$.
As the test functions $\mathbf g (\mathbf x) = (g_1(\mathbf x),\ldots, g_K(\mathbf x))^T$ one chooses $g_k(\mathbf x) = p(\mathbf{ x}\vert\lambda_k)$, i.e., the probability density evaluated at the parameter test points. For this choice, we obtain the Barankin bound
\begin{align}
    \text{MSE}_\lambda [\lambda_{\text{est}}] \geq \frac{\vert \mathbf a^\dag\boldsymbol\gamma \vert^2}{\mathbf a^\dag C_{\mathrm{BB}}\mathbf a} ,  
\end{align}
and further we have $ \text{MSE}_\lambda [\lambda_{\text{est}}]\geq \mathbf{\gamma}^\dag C_{\mathrm{BB}}^{-1}\mathbf{\gamma}$ if $C_{\mathrm{BB}}$ invertible. For the case of $C_{\mathrm{BB}}$ not invertible, refer to the discussion in Section~\ref{subsec:ClassicalEstimation}

Note that $C_{\mathrm{BB}}$ is the Barankin matrix with elements
\begin{align}
   (C_{\mathrm{BB}})_{kl}= \int_{D}\mathrm{d}^dx\,\frac{p(\mathbf x\vert\lambda_k)p(\mathbf x\vert\lambda_l)}{p(\mathbf x\vert\lambda)} ,
\end{align}
and the estimators to which the Barankin bound applies have   to satisfy the bias constraints
\begin{align}
    \gamma_k = \int_{D}\mathrm{d}^dx\, p  (\mathbf x\vert\lambda_k) (\lambda_{\text{est}}(\mathbf x) - \lambda) = \mathds{E}_{\lambda_k} [\lambda_{\text{est}}] - \lambda = b_{\lambda_k} +\lambda_k-\lambda ,
\end{align}
where $b_{\lambda_k}$ is the bias of the estimator $\lambda_{\text{est}}$ if the value of the parameter were $\lambda_k$. 

A natural choice for the constraints is $\gamma_k = \lambda_k-\lambda$, which forces unbiasedness $b_{\lambda_k}=0$ at all test points. For a given finite set of test points, one obtains a lower bound on the MSE (evaluated at $\lambda$) for all estimators unbiased at these test points.
By taking the supremum of this bound over all finite collections of test points, i.e., over both  their locations and their number, one obtains a lower bound on the MSE, evaluated at the true value of $\lambda$, that applies to all estimators that are globally unbiased on all of $\Lambda$. Barankin showed in Ref.~\cite{barankin1949} that, under the appropriate existence conditions, this supremum characterizes the MSE of the best unbiased estimator on $\Lambda$ evaluated at $\lambda$. This optimization is often hard to perform analytically. In practice, one can use a sequence of nested and increasingly dense test points and compute the corresponding bounds. Under suitable regularity conditions, one will observe convergence to the optimized bound.
Due to these convergence properties, we will only study the Barankin bounds over discrete sets of test points, although a generalization to a possibly uncountable set of test points can be done straightforwardly.

\subsection{Quantum estimation theory}
Thus far we have only considered classical estimation theory. This theory can readily be extended to the quantum case, and we will give a brief review on existing results.

First, we consider a family of quantum states $\{\hat\rho_{\tilde \lambda}\}_{\tilde\lambda\in\Lambda}$, where $\tilde\lambda$ labels the state. The actual state of the system is $\hat\rho_\lambda$, corresponding to an unknown but fixed value $\lambda\in\Lambda$. To make contact to the classical theory, let us introduce positive-operator valued measure (POVM) elements $\{\hat \Pi_{\mathbf x}\}_{\mathbf x}$ with $\int_D\mathrm{d}^d x\, \hat \Pi_{\mathbf x}  = \hat I$ which represent the measurement, so that
\begin{align}
    p(\mathbf x\vert \lambda) = \mathrm{Tr}[\hat\rho_\lambda \hat\Pi_{\mathbf{x}}]
  \end{align}
is a classical probability density. Maximizing the classical FI $F(\lambda)$ over $\{\hat \Pi_{\mathbf x}\}_{\mathbf x}$ yields the quantum Fisher information \cite{paris2009a}
\begin{align}
      \mathcal{F}^{(S)}(\hat\rho_\lambda)   =\max_{\{\hat \Pi_{\mathbf x}\}_{\mathbf x}}F(\lambda)=\mathrm{Tr}[\hat \rho_\lambda (\hat L_\lambda ^{(S)})^2],
\end{align}
where $\hat L_\lambda ^{(S)} = \hat\Omega_{\hat \rho_\lambda} (\partial_{\tilde\lambda}\hat\rho_{\tilde\lambda}\vert_{\tilde\lambda=\lambda})$ is the symmetric logarithmic derivative (SLD) based on the symmetric division operator $\hat \Omega_{\hat A}(\hat B)$ implicitly defined via the equation
\begin{align}
    \hat B= \frac{\hat A \hat \Omega_{\hat A}(\hat B)+\hat \Omega_{\hat A}(\hat B) \hat A}{2} .\label{eq:SymmetricDivision}
\end{align}
The QFI based on the SLD is not the only quantum generalization of the FI. One can also define a QFI based on the right logarithmic derivative (RLD) \cite{yuen1973multiple}
\begin{align}
       \mathcal{F}^{(R)}(\hat\rho_\lambda) = \mathrm{Tr}[\hat \rho_\lambda  \hat L_\lambda ^{(R)}  (\hat L_\lambda ^{(R)})^\dag ] ,
\end{align}
where $\hat L_\lambda ^{(R)} = \hat R_{\hat \rho_{\lambda}}(\partial_{\tilde\lambda}\hat\rho_{\tilde\lambda}\vert_{\tilde\lambda=\lambda})$ and the right division operator is defined via
\begin{align}
    \hat B = \hat A\hat R_{\hat A}(\hat B)  . \label{eq:RightDivision}
\end{align}
If $\hat A$ is faithful, the superoperator $\hat R_{\hat A}$ exists (on a suitable domain) and is unique and linear. For our upcoming derivations in Section~\ref{sec:GeneralResult}, the operator $\hat R_{\hat A}$ never appears isolated, and always appears in the form $\hat A^{1/2}\hat R_{\hat A}(\hat B)$. Again, this operator is only defined on a certain domain, but in the following derivations in Section~\ref{sec:GeneralResult}, we will assume that $\hat A^{1/2}\hat R_{\hat A}(\hat B)$ can be extended to a trace class operator on the full domain and we will use the same symbol $\hat A^{1/2}\hat R_{\hat A}(\hat B)$ for this extension.

The QFIs based on the SLD and RLD belong to a larger family of QFIs based on a family of Riemannian monotone metrics \cite{petz1996}.
These different QFIs lead to different quantum versions of the Cram\'er-Rao bound, however, with different attainability properties. In the multi-parameter case, which we will not study here, there is generally no clear hierarchy between the SLD and RLD based bounds, but it can be shown that for the special class of D-invariant models the RLD QCRB is the only one attainable \cite{suzuki2016,liu2019}.
In the single-parameter case, there is a clear hierarchy, and we have
\begin{align}
     \text{MSE}_\lambda[\lambda_{\text{est}}]   \geq \frac{1}{ \mathcal{F}^{(S)}(\hat\rho_\lambda)} \geq \frac{1}{ \mathcal{F}^{(R)}(\hat\rho_\lambda)} .\label{eq:CRBhierarchy}
\end{align}
The single-parameter SLD QCRB can always be attained by measuring the observable $\lambda_{\text{pr}} \hat I + \hat L_{\lambda_{\text{pr}}}^{(S)}/\mathcal{F}^{(S)}(\hat\rho_{\lambda_\text{pr}})$, provided the prior guess $\lambda_{\text{pr}}$ of $\lambda$ is good enough. From this, it follows that the RLD QCRB can generally not be attained and is less informative than the SLD version in the single-parameter case.

Recently, Ref.~\cite{gessner2023hierarchies} has derived a quantum version of  the general classical bound given in Eq.~\eqref{eq:GeneralBound} that we discussed in Section~\ref{subsec:ClassicalEstimation}. For this, they used the symmetric division operator from Eq.~\eqref{eq:SymmetricDivision}. Particularly, they derived a quantum Barankin bound $\mathrm{MSE}_\lambda[\lambda_{\mathrm{est}}] \geq \sup_{\mathbf a\neq 0} \frac{\vert \mathbf a^\dag\boldsymbol\gamma \vert^2}{\mathbf a^\dag Q_{\text{BB}}^{(S)}\mathbf a} $ based on the symmetric division operator with quantum Barankin matrix
\begin{align}
    (Q_{\text{BB}}^{(S)})_{kl} = \mathrm{Tr}[\hat\rho_{\lambda_k} \hat\Omega_{\hat\rho_\lambda}(\hat\rho_{\lambda_l})] .
\end{align}

A natural question to ask is whether one can derive a quantum Barankin bound based on the ``right'' division operator
\begin{align}
    \hat R_k := \hat R_{\hat\rho_\lambda}(\hat\rho_k) . \label{eq:AbbreviationRightDivision}
\end{align}
 In this paper, we answer this question in the affirmative.

\section{Derivation of Quantum Barankin bound based on right division operator} \label{sec:GeneralResult}
Let us now derive a quantum Barankin bound based on the right division operator, which is a different  quantum version of the quantum Barankin bound based on the symmetric division operator derived in Ref.~\cite{gessner2023hierarchies}. We compare these two bounds in Section~\ref{subsec:Attainability}.

Our focus in this paper is on infinite-dimensional Hilbert spaces, but note that our results also hold for the finite dimensional case.
In Section~\ref{sec:Gaussian}, we consider bosonic Gaussian states and will derive analytic formulas for the Barankin matrix elements in terms of their mean vectors and covariance matrices. Comparable formulas do not exist for the symmetric version of Ref.~\cite{gessner2023hierarchies}.

\subsection{Main result}

\begin{theorem}\label{theorem:1}
    Let  $\{\hat \rho_{\tilde\lambda}\}_{\tilde\lambda\in\Lambda}$ be a family of quantum states  and let $\hat\rho_\lambda$ be the true quantum state, where   $\lambda$ is the (fixed but unknown) value of the parameter to be estimated. Assume $\hat\rho_\lambda$ to be faithful. Let $\{\lambda_k \}_{k=1}^K$ be parameter test points with $\lambda_k\in\Lambda$ so that $ \hat\rho_\lambda^{1/2} \hat R_k$ admits a trace-class extension, where $\hat R_k =\hat R_{\hat\rho_\lambda}(\hat\rho_k)$ as defined in Eq.~\eqref{eq:AbbreviationRightDivision}. 
    
    It then follows that for every pair of POVM $\{\hat \Pi_{\mathbf x}\}_{\mathbf x }$ and estimator $\lambda_{\mathrm{est}}$ that satisfy the bias constraints 
    \begin{align}
    \int_D \mathrm{d}^d x \, \mathrm 
    {Tr}[\hat \rho_{\lambda_k}\hat \Pi_{\mathbf x}] \lambda_{\mathrm{est}}(\mathbf x) = \lambda_k+b_{\lambda_k},
    \end{align}
    we have
    \begin{align}
         \mathrm{MSE}_\lambda[\lambda_{\mathrm{est}}] \geq \sup_{\mathbf{a}\neq 0} \frac{\vert\mathbf{a}^\dag \boldsymbol{\gamma}\vert^2}{\mathbf{a}^\dag Q_{\mathrm{BB}}^{(R)} \mathbf{a}} 
    \end{align}
    with $\mathbf{a}\in \mathds{C}^K$, $\gamma_k = b_{\lambda_k }+\lambda_k-\lambda$  and
    \begin{align}\label{eq:RBaB}
        (Q_{\mathrm{BB}}^{(R)})_{kl} = \mathrm{Tr} \left[ \hat R_k^\dag\hat\rho_{\lambda}  \hat R_l\right] = \mathrm{Tr}[\hat\rho_{\lambda_k}\hat\rho_{\lambda}^{-1}\hat\rho_{\lambda_l}]  .
    \end{align}
     If additionally $Q_{\mathrm{BB}}^{(R)}$ is invertible~\cite{BarankinMatrix}, we have
    \begin{align}
         \mathrm{MSE}_\lambda[\lambda_{\mathrm{est}}] \geq \boldsymbol \gamma^\dag (Q_{\mathrm{BB}}^{(R)})^{-1} \boldsymbol \gamma .
    \end{align}
\end{theorem}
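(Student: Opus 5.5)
The approach is a Cauchy--Schwarz inequality for the Hilbert--Schmidt inner product $\langle \hat X,\hat Y\rangle = \mathrm{Tr}[\hat X^\dag \hat Y]$, mirroring the classical derivation of Eq.~\eqref{eq:GeneralBound}. We form the estimator observable
\begin{align}
\hat M = \int_D \mathrm{d}^d x\, (\lambda_{\mathrm{est}}(\mathbf x)-\lambda)\hat\Pi_{\mathbf x}.
\end{align}
The bias constraints then read $\mathrm{Tr}[\hat\rho_{\lambda_k}\hat M]=\gamma_k$. Using $\hat\rho_{\lambda_k}=\hat\rho_\lambda\hat R_k$ from Eq.~\eqref{eq:RightDivision}, we write
\begin{align}
\gamma_k=\mathrm{Tr}[\hat\rho_\lambda \hat R_k \hat M]=\mathrm{Tr}[(\hat\rho_\lambda^{1/2}\hat R_k)\,\hat M\hat\rho_\lambda^{1/2}],
\end{align}
where the cyclic step is justified because $\hat\rho_\lambda^{1/2}\hat R_k$ has a trace-class extension by hypothesis. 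Contracting with $\mathbf a$ gives
\begin{align}
\mathbf a^\dag\boldsymbol\gamma=\mathrm{Tr}[\hat U^\dag \hat V],
\end{align}
with $\hat U=\sum_k a_k\hat R_k^\dag\hat\rho_\lambda^{1/2}$ and $\hat V=\hat M\hat\rho_\lambda^{1/2}$.

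Applying Cauchy--Schwarz yields $|\mathbf a^\dag\boldsymbol\gamma|^2\le \mathrm{Tr}[\hat U^\dag\hat U]\,\mathrm{Tr}[\hat V^\dag\hat V]$. The first factor is
\begin{align}
\mathrm{Tr}[\hat U^\dag\hat U]=\sum_{kl}a_k^* a_l\,\mathrm{Tr}[\hat R_k^\dag\hat\rho_\lambda\hat R_l]=\mathbf a^\dag Q^{(R)}_{\mathrm{BB}}\mathbf a
\end{align}
(the index convention should be matched to Eq.~\eqref{eq:RBaB}, possibly after swapping $\mathbf a\to\mathbf a^*$, which does not affect the supremum). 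Since $\hat\rho_\lambda$ is faithful, $\hat R_k=\hat\rho_\lambda^{-1}\hat\rho_{\lambda_k}$ and $\hat R_k^\dag=\hat\rho_{\lambda_k}\hat\rho_\lambda^{-1}$, giving the second expression $\mathrm{Tr}[\hat\rho_{\lambda_k}\hat\rho_\lambda^{-1}\hat\rho_{\lambda_l}]$. The second factor is $\mathrm{Tr}[\hat\rho_\lambda\hat M^2]$. Operator convexity, namely the Kadison/Naimark-dilation inequality
\begin{align}
\Big(\int f\hat\Pi_{\mathbf x}\Big)^2\le\int f^2\hat\Pi_{\mathbf x},
\end{align}
gives $\mathrm{Tr}[\hat\rho_\lambda\hat M^2]\le \int \mathrm{d}^dx\,\mathrm{Tr}[\hat\rho_\lambda\hat\Pi_{\mathbf x}](\lambda_{\mathrm{est}}-\lambda)^2=\mathrm{MSE}_\lambda$. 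Dividing by $\mathbf a^\dag Q\mathbf a$ and taking the supremum gives the first bound. The invertible case follows verbatim from the vector Cauchy--Schwarz argument leading to Eq.~\eqref{eq:GeneralBoundInvertibile}.

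An alternative that avoids the operator-convexity step is to pass through the classical bound directly: define $p(\mathbf x|\lambda_k)$ and note that the constraints are exactly the classical Barankin constraints, so $\mathrm{MSE}\ge\sup|\mathbf a^\dag\boldsymbol\gamma|^2/\mathbf a^\dag C_{\mathrm{BB}}\mathbf a$. It then suffices to show $C_{\mathrm{BB}}\le Q^{(R)}_{\mathrm{BB}}$ for each POVM, which is again a Cauchy--Schwarz estimate,
\begin{align}
|\mathrm{Tr}[\hat\Pi_{\mathbf x}\hat\rho_\lambda \hat Y]|^2\le \mathrm{Tr}[\hat\Pi_{\mathbf x}\hat\rho_\lambda]\,\mathrm{Tr}[\hat\Pi_{\mathbf x}\hat Y^\dag\hat\rho_\lambda\hat Y]
\end{align}
with $\hat Y=\sum_l a_l\hat R_l$, integrated over $\mathbf x$ using $\int\hat\Pi_{\mathbf x}=\hat I$. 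I would present this route, since it makes the measurement dependence transparent.

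The main obstacle is rigor in infinite dimensions rather than algebra. Unbounded $\hat\rho_\lambda^{-1}$, unbounded $\hat M$ or $\lambda_{\mathrm{est}}$, and the cyclicity and trace manipulations must all be justified, using the assumed trace-class extension of $\hat\rho_\lambda^{1/2}\hat R_k$ and Fubini for exchanging $\int\mathrm{d}^dx$ with the trace. If $\mathrm{MSE}_\lambda=\infty$ the claim is trivial, so finite MSE can be assumed, which makes $\hat V$ Hilbert--Schmidt.
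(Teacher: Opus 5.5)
The route you say you would present is the paper's proof. The paper starts from the classical Barankin bound with $g_k(\mathbf x)=\mathrm{Tr}[\hat\Pi_{\mathbf x}\hat\rho_{\lambda_k}]$. It then pairs $\hat\rho_\lambda^{1/2}\hat\Pi_{\mathbf x}^{1/2}$ with $\hat\rho_\lambda^{1/2}\hat Y\hat\Pi_{\mathbf x}^{1/2}$ in the Hilbert--Schmidt Cauchy--Schwarz inequality pointwise in $\mathbf x$; after cycling $\hat\Pi_{\mathbf x}^{1/2}$ this is exactly your $\hat Y$-inequality. Finally it integrates with $\int_D\mathrm d^dx\,\hat\Pi_{\mathbf x}=\hat I$, giving $\mathbf a^\dag C_{\mathrm{BB}}\mathbf a\le\mathbf a^\dag Q^{(R)}_{\mathrm{BB}}\mathbf a$ for every POVM.

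Your first route is a genuinely different argument: the operator-level derivation familiar from the SLD/RLD Cram\'er--Rao bounds. It is also correct. The only bookkeeping point is the one you flagged. $\mathrm{Tr}[\hat U^\dag\hat U]=\mathrm{Tr}[\hat U\hat U^\dag]=\sum_{k,l}a_k a_l^*(Q^{(R)}_{\mathrm{BB}})_{kl}$ is the quadratic form evaluated at $\mathbf a^*$. The relabeling is legitimate precisely because $\boldsymbol\gamma$ is real, so $|\mathbf a^\dag\boldsymbol\gamma|=|\mathbf a^T\boldsymbol\gamma|$.

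That route does not use the classical bound as a black box, but it costs more in two places:
\begin{itemize}
\item It needs the dilation inequality $\hat M^2\le\int_D\mathrm d^dx\,(\lambda_{\mathrm{est}}-\lambda)^2\hat\Pi_{\mathbf x}$.
\item It must handle the unbounded $\hat M$. Finite MSE is needed to make $\hat M\hat\rho_\lambda^{1/2}$ Hilbert--Schmidt, and a Fubini argument is needed to justify $\gamma_k=\mathrm{Tr}[(\hat\rho_\lambda^{1/2}\hat R_k)\hat M\hat\rho_\lambda^{1/2}]$.
\end{itemize}

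The paper's route (your second) only ever multiplies the trace-class $\hat\rho_\lambda^{1/2}\hat R_k$ by the bounded operators $\hat\Pi_{\mathbf x}^{1/2}$. All dependence on the possibly unbounded estimator is left to the classical inequality. It also yields the stronger, measurement-wise statement $C_{\mathrm{BB}}\le Q^{(R)}_{\mathrm{BB}}$. In particular this guarantees $\mathbf a^\dag C_{\mathrm{BB}}\mathbf a<\infty$, which is the $L^2$ hypothesis on $u$ used in the classical derivation.
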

\begin{proof}
To keep notation compact, let us introduce the abbreviations
$  \hat \rho_\lambda =: \hat \rho$, $\hat \rho_{\lambda_k} =: \hat \rho_k $ inside this proof.

As a starting point we use the classical Barankin bound where the test functions are chosen as  $g_k (\mathbf x) = p(\mathbf x\vert \lambda_k)$. To connect this to the quantum description, we write $g_k (\mathbf x) = \text{Tr} [ \hat{\Pi}_{\mathbf x} \hat{\rho}_k]$. Now, we have
\begin{align}
    \mathbf{a}^\dag C_{\mathrm{BB}} \mathbf{a} &= \sum_{k,l=1}^Ka_k^* \int_D \mathrm{d}^d x\, \frac{\text{Tr} [ \hat{\Pi}_{\mathbf x} \hat{\rho}_k] \text{Tr} [ \hat{\Pi}_{\mathbf x} \hat{\rho}_l]}{\text{Tr} [ \hat{\Pi}_{\mathbf x} \hat{\rho}]} a_l = \int_D \mathrm{d}^d x\,\sum_{k,l=1}^Ka_k^*  \frac{\text{Tr} [ \hat{\Pi}_{\mathbf x} \hat{\rho}_k] \text{Tr} [ \hat{\Pi}_{\mathbf x} \hat{\rho}_l]}{\text{Tr} [ \hat{\Pi}_{\mathbf x} \hat{\rho}]} a_l  \\
    &=   \int_D \mathrm{d}^d x\, \frac{\left\vert\sum_{k=1}^K a_k \text{Tr}[ \hat{\Pi}_{\mathbf x} \hat \rho_k] \right\vert^2}{\text{Tr} [ \hat{\Pi}_{\mathbf x} \hat{\rho}    ]}  = \int_D \mathrm{d}^d x\, \frac{\left\vert \text{Tr}[\sum_{k=1}^K a_k  \hat{\Pi}_{\mathbf x} \hat \rho_k] \right\vert^2}{\text{Tr} [ \hat{\Pi}_{\mathbf x} \hat{\rho}   ]} .
\end{align}
In the first line we interchange the sum with the integral, which is justified by  the linearity of the integral and the finiteness of the sum. In the second line we rewrite the integrand and then interchange the sum with the trace, which is justified by the linearity of the trace and the finiteness of the sum.

With this we have
\begin{align}
     \mathbf{a}^\dag C_{\mathrm{BB}} \mathbf{a} &= \int_D \mathrm{d}^d x\, \frac{\left\vert \text{Tr}\left[ \sum_{k=1}^K a_k \hat{\Pi}_{\mathbf x}^{1/2} \hat{\Pi}_{\mathbf x}^{1/2}\hat \rho_k\right] \right\vert^2}{\text{Tr}\left[ \hat{\Pi}_{\mathbf x}  \hat{\rho} \right]} = \int_D \mathrm{d}^d x\, \frac{\left\vert \text{Tr}\left[ \sum_{k=1}^K a_k\hat{\Pi}_{\mathbf x}^{1/2} \hat \rho_k \hat{\Pi}_{\mathbf x}^{1/2}\right] \right\vert^2}{\text{Tr}\left[ \hat{\Pi}_{\mathbf x}  \hat{\rho} \right]} \\
     &= \int_D \mathrm{d}^d x\, \frac{\left\vert \text{Tr}\left[ \sum_{k=1}^K a_k \hat{\Pi}_{\mathbf x}^{1/2} \hat \rho^{1/2}  \hat \rho^{1/2} \hat R_k \hat{\Pi}_{\mathbf x}^{1/2}\right] \right\vert^2}{\text{Tr}\left[ \hat{\Pi}_{\mathbf x}  \hat{\rho} \right]}
     =  \int_D \mathrm{d}^d x\, \frac{\left\vert \text{Tr}\left[  \left(\hat \rho^{1/2}\hat{\Pi}_{\mathbf x}^{1/2}\right)^{\dag}   \left( \sum_{k=1}^K a_k \hat \rho^{1/2} \hat R_k \hat{\Pi}_{\mathbf x}^{1/2}\right)\right] \right\vert^2}{\text{Tr}\left[ \hat{\Pi}_{\mathbf x}  \hat{\rho} \right]}\\
     &\leq \int_D \mathrm{d}^d x\, \frac{\text{Tr}\left[  \left(\hat \rho^{1/2}\hat{\Pi}_{\mathbf x}^{1/2}\right)^{\dag}   \rho^{1/2}\hat{\Pi}_{\mathbf x}^{1/2}\right]}{\text{Tr}\left[ \hat{\Pi}_{\mathbf x}  \hat{\rho} \right]}  \text{Tr}\left[    \left( \sum_{k=1}^K a_k   \hat \rho^{1/2} \hat R_k \hat{\Pi}_{\mathbf x}^{1/2}\right)^{\dag} \left( \sum_{l=1}^K a_l   \hat \rho^{1/2} \hat R_l \hat{\Pi}_{\mathbf x}^{1/2}\right)\right]   \\
     &= \int_D \mathrm{d}^d x\,   1 \cdot \sum_{k,l=1}^K a_k^* a_l \text{Tr}[ \hat\Pi_{\mathbf x}^{1/2} \hat R_k^\dag \hat\rho^{1/2}  \hat\rho^{1/2}\hat R_l \hat\Pi_{\mathbf x}^{1/2}]  =     \sum_{k,l=1}^K a_k^* a_l\int_D \mathrm{d}^d x\,   \text{Tr}[ \hat\Pi_{\mathbf x} \hat R_k^\dag \hat\rho^{1/2}  \hat\rho^{1/2}\hat R_l ] \label{eq:33}\\
    & =    \sum_{k,l=1}^K a_k^* a_l \text{Tr}[\int_D \mathrm{d}^d x\, \hat\Pi_{\mathbf x} \hat R_k^\dag    \hat\rho \hat R_l ] \\
    & =     \sum_{k,l=1}^K a_k^* a_l \text{Tr}[  \hat R_k^\dag    \hat\rho \hat R_l] \\
    &=  \mathbf a^\dag Q_{\mathrm{BB}}^{(R)} \mathbf a.
\end{align} 
In the first line, we introduce the unique positive semidefinite square root $\hat \Pi_{\mathbf x}^{1/2}$ of the POVM element $\hat \Pi_{\mathbf x}$, both of which are bounded operators. Because $\hat \rho_k$ is trace class and $\hat \Pi_{\mathbf x}^{1/2}$ is bounded, $\hat \Pi_{\mathbf x}^{1/2} \hat \rho_k $ is trace class and we can use the cyclic property of the trace with the remaining $\hat \Pi_{\mathbf x}^{1/2}$. Note that $\mathrm{Tr}[\hat A \hat B] = \mathrm{Tr}[\hat B \hat A ]$ if $\hat A$ trace class and $\hat B$ bounded. 
In the second line we introduce the right division operator $\hat\rho_k = \hat \rho\hat R_k = \hat \rho^{1/2}\hat \rho^{1/2}\hat R_k $. Note that $\hat R_k$ exists because of the assumption that $\hat \rho$ is faithful. Also, by the sufficient assumption, the operator $\hat{\rho}^{1/2}\hat{R}_k$ is trace class, or more accurately, admits a trace-class extension to the entire Hilbert space, which we denote again by the same symbol $\hat{\rho}^{1/2}\hat{R}_k$. Then we regroup  the terms inside the trace using $(\hat \rho^{1/2})^\dag = \hat \rho^{1/2}$ and $(\hat \Pi_{\mathbf x}^{1/2})^\dag = \hat \Pi_{\mathbf x}^{1/2}$. 
In the third line, we use $\vert \mathrm{Tr}[\hat A^\dag \hat B]\vert^2 \leq \mathrm{Tr}[\hat A^\dag \hat A]\mathrm{Tr}[\hat B^\dag \hat B]$, which is valid when $\hat A,\hat B$ are Hilbert-Schmidt operators, i.e., $\mathrm{Tr}[\hat A^\dag \hat A]< \infty$ and $\mathrm{Tr}[\hat B^\dag \hat B]< \infty$. In our case, we have $\hat A= \left(\hat \rho^{1/2}\hat{\Pi}_{\mathbf x}^{1/2}\right)^{\dag}$ which is evidently Hilbert-Schmidt as $\hat A^\dag \hat A = \hat{\Pi}_{\mathbf x}^{1/2} \hat \rho \hat{\Pi}_{\mathbf x}^{1/2}$ is trace class. We further have $\hat B =   \sum_{k=1}^K a_k \hat \rho^{1/2} \hat R_k \hat{\Pi}_{\mathbf x}^{1/2}$. By assumption, $\hat \rho^{1/2} \hat R_k$ is trace class, and thus also $\hat \rho^{1/2} \hat R_k \hat \Pi_{\mathbf x}^{1/2}$ as $\hat \Pi_{\mathbf x}^{1/2}$ is bounded. A finite linear combination of trace class operators is trace class, and thus $\hat B$ is trace class and therefore also Hilbert-Schmidt. 
In the fourth line, we used $\text{Tr}\left[  \left(\hat \rho^{1/2}\hat{\Pi}_{\mathbf x}^{1/2}\right)^{\dag}   \rho^{1/2}\hat{\Pi}_{\mathbf x}^{1/2}\right] = \text{Tr}\left[      \hat{\Pi}_{\mathbf x}^{1/2} \hat \rho^{1/2}\rho^{1/2}\hat{\Pi}_{\mathbf x}^{1/2}\right] =\text{Tr}\left[      \hat{\Pi}_{\mathbf x}  \hat \rho   \right]$, using the self adjointness of $\hat \rho^{1/2}$ and $\hat{\Pi}_{\mathbf x}^{1/2}$, and the fact that $\hat{\Pi}_{\mathbf x}^{1/2} \hat \rho^{1/2}\rho^{1/2}$ is trace class and $\hat{\Pi}_{\mathbf x}^{1/2}$ is bounded which allows us to use the cyclic property of the trace. In the second term, we used the self adjointness of $\hat \rho^{1/2}$ and $\hat\Pi_{\mathbf x}^{1/2}$ and interchanged the sums with the trace, justified by the finiteness of the sum and the fact that $ \hat\Pi_{\mathbf x}^{1/2} \hat R_k^\dag \hat\rho^{ 1/2}  \hat\rho^{ 1/2}\hat R_l \hat\Pi_{\mathbf x}^{1/2}$ is trace class, which follows by assumption that $\hat \rho^{1/2}\hat R_r$ is trace class from which also follows that $ \hat R_r^\dag\hat \rho^{1/2}$ is trace class. After the second equals sign in the fourth line, we interchange the integration and sum, and we also use the cyclic property   of the trace as $\hat\Pi_{\mathbf x}^{1/2} \hat R_k^\dag \hat\rho^{1/2}  \hat\rho^{1/2}\hat R_l $ is trace class and $\hat\Pi_{\mathbf x}^{1/2}$ bounded.
In the fifth line we pulled the integral into the trace, which is valid as the operator inside the trace is trace class. 
In the sixth line we used the completeness of the POVM elements.

We thus have shown that
\begin{align}
   \mathbf a^\dag C_{\mathrm{BB}}  \mathbf a \leq \mathbf a^\dag Q_{\mathrm{BB}}^{(R)} \mathbf a
\end{align}
from which follows  
\begin{align}
     \mathrm{MSE}[\lambda_{\mathrm{est}}] \geq \sup_{\mathbf{a}\neq 0} \frac{\vert\mathbf{a}^\dag \boldsymbol{\gamma}\vert^2}{\mathbf{a}^\dag C_{\mathrm{BB}}  \mathbf{a}}  \geq \sup_{\mathbf{a}\neq 0} \frac{\vert\mathbf{a}^\dag \boldsymbol{\gamma}\vert^2}{\mathbf{a}^\dag Q_{\mathrm{BB}}^{(R)} \mathbf{a}} .
\end{align} 
Assuming that $Q_{\mathrm{BB}}^{(R)}$ is invertible, we find following the same reasoning as in Section~\ref{subsec:ClassicalEstimation} that
\begin{align}
      \mathrm{MSE}[\lambda_{\mathrm{est}}] \geq  \boldsymbol\gamma^\dag (Q_{\mathrm{BB}}^{(R)})^{-1} \boldsymbol\gamma .
\end{align}
In the case that $Q_{\mathrm{BB}}^{(R)}$ is not invertible, refer to the discussion below Eq.~\eqref{eq:GeneralBoundInvertibile}.
This completes the proof.
\end{proof}
Note that we  assumed   $\hat\rho_\lambda$ to be faithful  to ensure that $\hat R_k$ exists and is unique. The assumption is not necessary for the existence of $\hat R_k$; however, without faithfulness of $\hat\rho_\lambda$, $\hat R_k$ need not be unique.
Also note that in the above proof we do not require $\hat\rho_{\lambda_k}$ to be faithful for all $k=1,\ldots,K$ with $\lambda_k\neq\lambda$.

Let us now compare the derived quantum Barankin matrix
\begin{align}
     (Q_{\mathrm{BB}}^{(R)})_{kl} = \mathrm{Tr}[\hat\rho_{\lambda_k}\hat\rho_{\lambda}^{-1}\hat\rho_{\lambda_l}]  
\end{align}
with the classical Barankin matrix $ (C_{\mathrm{BB}})_{kl}= \int_{D}\mathrm{d}^dx\,\frac{p(\mathbf x\vert\lambda_k)p(\mathbf x\vert\lambda_l)}{p(\mathbf x\vert\lambda)}$. In this form, it becomes evident that our quantum Barankin matrix based on the right division operator is a natural generalization of the classical version.

When $\hat\rho_\lambda $ is faithful, we can write $\hat R_k$ in the eigenbasis $\vert\mathbf n\rangle$ of $\hat\rho_{\lambda} = \sum_{\mathbf n} p_{\mathbf n} \vert\mathbf{n}\rangle\langle\mathbf{n}\vert$ as
\begin{align}\label{eq:RDO}
    \hat R_k = \sum_{\mathbf{n},\mathbf{m}}\frac{\langle\mathbf m\vert   \hat \rho_{\lambda_k}\vert\mathbf n\rangle }{p_{\mathbf m}}\vert \mathbf m \rangle\langle\mathbf n\vert  .
\end{align}

\subsection{Attainability and relation to Quantum Barankin bound based on the symmetric division operator} \label{subsec:Attainability}
Let us discuss how our newly derived quantum Barankin bound based on  the right division operator (RD QBB) compares to the quantum Barankin bound based on the symmetric division operator (SD QBB) proposed in \cite{gessner2023hierarchies}. It was shown that an optimal measurement that attains the SD QBB is given by projecting onto $\hat\Pi_{ x}$, which are projectors onto the eigenstates of $\hat \Omega_{\hat\rho_{\lambda_{\mathrm{pr}}}}(\sum_{k=1}^K \hat\rho_{\lambda_k} a_k)$. Note that when $Q_{\text{BB}}^{(S)}$ is  invertible, which is the case we consider here for simplicity, the optimal choice for $\mathbf{a}$ is $\mathbf{a} = (Q_{\text{BB}}^{(S)})^{-1}\boldsymbol\gamma$. The corresponding probability distribution is $p(x\vert \lambda) = \text{Tr}[\hat \Pi_x \hat \rho_\lambda]$. Furthermore, it was shown that the estimator $\lambda_{\text{est}} (x) = \lambda_{\text{pr}} +  \sum_{k,l=1}^K\frac{p(x\vert\lambda_k)}{p(x\vert\lambda_{\text{pr}})} (Q_{\text{BB}}^{(S)})_{kl}^{-1} \gamma_l$ satisfies the constraints encoded in $\boldsymbol{\gamma}$ and satisfies $\text{MSE}_\lambda[\lambda_{\text{est}}]  \simeq \boldsymbol\gamma^\dag (Q_{\text{BB}}^{(S)})^{-1} \boldsymbol\gamma$, provided the prior guess $\lambda_{\text{pr}}$ is close enough to the true value $\lambda$. Thus, the SD QBB is attainable which leads us to the conclusion:
\begin{align}
    \text{MSE}_\lambda[\lambda_{\text{est}}]   \geq \boldsymbol{\gamma}^\dag (Q_{\text{BB}}^{(S)})^{-1} \boldsymbol{\gamma} \geq \boldsymbol{\gamma}^\dag (Q_{\text{BB}}^{(R)})^{-1} \boldsymbol{\gamma} ,
\end{align}
which is analogous to the QCRB case in Eq.~\eqref{eq:CRBhierarchy}.
This means that the RD QBB is generally unattainable. The SD QBB is therefore more informative. However, as we will see later in Section~\ref{sec:Gaussian}, the RD QBB is often easier to calculate, especially for Gaussian states.

\subsection{Recovering the RLD QCRB}\label{subsec:RecoverRLDCRBgeneral}
Let us now show the close connection between the RD QBB and RLD CRB. We only consider two test points: $(\lambda,\lambda_2)^T$, where $\lambda$ is the true value. In fact, this choice leads to a quantum generalization of the Hammersley–Chapman–Robbins bound \cite{hammersley1950,chapman1951}. We thus find
\begin{align}
    Q_{\text{BB}}^{(R)} = \begin{pmatrix}
        1 & 1 \\
        1 & (Q_{\text{BB}}^{(R)})_{22}
    \end{pmatrix}
\end{align}
from which follows
\begin{align}
    (Q_{\text{BB}}^{(R)})^{-1} = \frac{1}{(Q_{\text{BB}}^{(R)})_{22}-1}\begin{pmatrix}
        (Q_{\text{BB}}^{(R)})_{22} & -1 \\
        -1 & 1
    \end{pmatrix}
\end{align}
for $(Q_{\text{BB}}^{(R)})_{22}\neq 1$.

Now, we set $\gamma_k = \lambda_k-\lambda$ to enforce   unbiasedness at the test points $\lambda_2$ and $\lambda$ for the considered estimators. We thus get $\boldsymbol \gamma = (0,\lambda_2 - \lambda)^T$. This leads to
\begin{align}
    \boldsymbol \gamma ^\dag (Q_{\mathrm{BB}}^R)^{-1} \boldsymbol \gamma &= \frac{(\lambda_2-\lambda)^2}{(Q_{\text{BB}}^{(R)})_{22}-1} = \frac{(\lambda_2-\lambda)^2}{\text{Tr}[\hat R_2^\dag \hat\rho_\lambda \hat R_2]-1}    \\
    &=\frac{(\lambda_2-\lambda)^2}{\text{Tr}[\hat R_{\hat\rho_{\lambda}}^\dag (\hat\rho_{\lambda_2}) \hat\rho_\lambda \hat R_{\hat\rho_{\lambda}} (\hat\rho_{\lambda_2})]-1} = \frac{\Delta\lambda^2}{\text{Tr}[\hat R_{\hat\rho_{\lambda}}^\dag (\hat\rho_{\lambda+\Delta\lambda}) \hat\rho_\lambda \hat R_{\hat\rho_{\lambda}} (\hat\rho_{\lambda+\Delta\lambda})]-1}   ,
\end{align}
where we introduced $\Delta\lambda = \lambda_2-\lambda$.

It is easy to see that the right division operator is linear in its argument and we thus have $\hat R_{\hat \rho_\lambda} (\frac{\hat \rho_{\lambda+\Delta\lambda}-\hat\rho_\lambda}{\Delta\lambda}) = (\hat R_{\hat \rho_\lambda}(\hat\rho_{\lambda+\Delta\lambda})-\hat R_{\hat \rho_\lambda}(\hat\rho_\lambda))/\Delta\lambda$.
With this, consider the following expression:
\begin{align}
     \text{Tr}&\left[ \hat R_{\hat \rho_\lambda}^\dag \left(\frac{\hat \rho_{\lambda+\Delta\lambda}-\hat\rho_\lambda}{\Delta\lambda}  \right) \hat\rho_{\lambda}  \hat R_{\hat \rho_\lambda} \left(\frac{\hat \rho_{\lambda+\Delta\lambda}-\hat\rho_\lambda}{\Delta\lambda}  \right) \right] \\
     &=   \frac{1}{\Delta\lambda^2}\text{Tr}\left[ \left(\hat R_{\hat \rho_\lambda}^\dag \left(\hat \rho_{\lambda+\Delta\lambda} \right) -\hat R_{\hat \rho_\lambda}^\dag \left(\hat \rho_{\lambda } \right)\right) \hat\rho_{\lambda}   \left(\hat R_{\hat \rho_\lambda}  \left(\hat \rho_{\lambda+\Delta\lambda} \right) -\hat R_{\hat \rho_\lambda}  \left(\hat \rho_{\lambda } \right)\right) \right]  \\
     & =\frac{1}{\Delta\lambda^2}\text{Tr}\left[ \left(\hat R_{\hat \rho_\lambda}^\dag \left(\hat \rho_{\lambda+\Delta\lambda} \right)\hat\rho_{\lambda}   -\hat R_{\hat \rho_\lambda}^\dag \left(\hat \rho_{\lambda } \right)\hat\rho_{\lambda}  \right)  \left(\hat R_{\hat \rho_\lambda}  \left(\hat \rho_{\lambda+\Delta\lambda} \right) -\hat R_{\hat \rho_\lambda}  \left(\hat \rho_{\lambda } \right)\right) \right] \\
       & =\frac{1}{\Delta\lambda^2}\text{Tr}\left[  \hat R_{\hat \rho_\lambda}^\dag \left(\hat \rho_{\lambda+\Delta\lambda} \right)\hat\rho_{\lambda}\hat R_{\hat \rho_\lambda}  \left(\hat \rho_{\lambda+\Delta\lambda} \right)   -\hat R_{\hat \rho_\lambda}^\dag \left(\hat \rho_{\lambda } \right)\hat\rho_{\lambda} \hat R_{\hat \rho_\lambda}  \left(\hat \rho_{\lambda+\Delta\lambda} \right)    - \hat R_{\hat \rho_\lambda}^\dag \left(\hat \rho_{\lambda+\Delta\lambda} \right)\hat\rho_{\lambda} \hat R_{\hat \rho_\lambda}  \left(\hat \rho_{\lambda } \right)  +\hat R_{\hat \rho_\lambda}^\dag \left(\hat \rho_{\lambda } \right)\hat\rho_{\lambda} \hat R_{\hat \rho_\lambda}  \left(\hat \rho_{\lambda } \right)      \right] \\
          & =\frac{1}{\Delta\lambda^2}\text{Tr}\left[  \hat R_{\hat \rho_\lambda}^\dag \left(\hat \rho_{\lambda+\Delta\lambda} \right)\hat\rho_{\lambda}\hat R_{\hat \rho_\lambda}  \left(\hat \rho_{\lambda+\Delta\lambda} \right)   -2+1    \right]  
           = \frac{\text{Tr}[\hat R_{\hat\rho_{\lambda}}^\dag (\hat\rho_{\lambda+\Delta\lambda}) \hat\rho_\lambda \hat R_{\hat\rho_{\lambda}} (\hat\rho_{\lambda+\Delta\lambda})]-1}{\Delta\lambda^2} ,
\end{align}
where we used the property $\hat R_{\hat A}(\hat A)=\hat I$ and the definition $\hat B = \hat A\hat R_{\hat A}(\hat B)$ and $\hat B^\dag = \hat R_{A}^\dag (\hat B)\hat A^\dag$ to simplify the above.

It thus follows that
\begin{align}
     \boldsymbol \gamma ^\dag (Q_{\mathrm{BB}}^R)^{-1} \boldsymbol \gamma = \frac{1}{\text{Tr} \left[ \hat R_{\hat \rho_\lambda}^\dag \left(\frac{\hat \rho_{\lambda+\Delta\lambda}-\hat\rho_\lambda}{\Delta\lambda}  \right) \hat\rho_{\lambda}  \hat R_{\hat \rho_\lambda} \left(\frac{\hat \rho_{\lambda+\Delta\lambda}-\hat\rho_\lambda}{\Delta\lambda}  \right) \right] } .
\end{align}
We now can take the limit $\Delta\lambda\rightarrow 0$, and under suitable regularity conditions, we find 
\begin{align}
   \mathrm{MSE}_\lambda[\lambda_{\mathrm{est}}] \geq \lim_{\Delta\lambda\rightarrow 0} \boldsymbol \gamma ^\dag (Q_{\mathrm{BB}}^{(R)})^{-1} \boldsymbol \gamma =   \frac{1}{\text{Tr} \left[ \hat R_{\hat \rho_\lambda}^\dag \left( \partial_{\tilde\lambda }\hat\rho_{\tilde\lambda}\vert_{\tilde\lambda=\lambda} \right) \hat\rho_{\lambda}  \hat R_{\hat \rho_\lambda} \left( \partial_{\tilde\lambda }\hat\rho_{\tilde\lambda}\vert_{\tilde\lambda=\lambda}   \right) \right] } = \frac{1}{\mathcal{F}^{(R)}(\hat\rho_\lambda)} .
\end{align}
Thus, in the limit $\Delta\lambda\rightarrow 0$ we recover the RLD QCRB from the RD QBB.

\section{Gaussian-state formulas for Quantum Barankin bound}\label{sec:Gaussian}
Above we have derived the general quantum Barankin bound based on the right division operator that applies to general infinite and finite dimensional systems. Let us now specialize the general results to the class of Gaussian states. Such Gaussian state formulas exist for various versions of the QCRBs~\cite{safranek2018a,safranek2015,gao2014bounds,bakmou2020multiparameter,monras2013,jiang2014,sorelli2024a}, but formulas for the SD QBB are unknown. Here we show that Gaussian state formulas for the RD QBB are readily derivable.

\subsection{Preliminaries on Gaussian states}
Let us first  give a brief review of the bosonic Gaussian state formalism. We follow the notation of Ref.~\cite{seshadreesan2018renyi}. Many useful properties that were stated or proved in Ref.~\cite{seshadreesan2018renyi} are listed in Appendix~\ref{app:GaussianRelations}.

We consider $n$ bosonic modes with the $2n$ quadrature operators
\begin{align}
   \hat{\mathbf{x}} =  (\hat x_1,\hat p_1,\hat x_2,\hat p_2,\ldots, \hat x_n,\hat p_n)^T
\end{align}
 that satisfy the commutation relations
\begin{align}
    [\hat x_i,\hat x_j] = i\Omega_{ij}, \quad \text{where }  \Omega = I_n\otimes \begin{pmatrix}
        0 & 1\\
        -1 & 0
    \end{pmatrix}
\end{align}
is the symplectic form and $I_n$ is the $n\times n$ identity matrix.

 A faithful  Gaussian state of $n$ modes can be written as
 \begin{align}
       \hat \rho =\frac{1}{Z_{\hat\rho}   } \hat D(-\mathbf s_{\hat\rho}  )e^{\hat H_{\hat\rho}  } \hat D(\mathbf s_{\hat\rho}  ), \label{eq:GaussianStateDef}
 \end{align}
 where
\begin{align}
    \hat H_{\hat\rho}    = -\frac{1}{2}\hat {\mathbf x}^T H_{\hat\rho}   \hat {\mathbf x}
\end{align} 
is the Hamiltonian, and $H_{\hat\rho} >0$ is the $2n\times 2n$  positive-definite real symmetric Hamiltonian matrix, and 
\begin{align}
    \hat D (\mathbf{s}) = \exp [ \mathbf{s}^T i \Omega \hat{ \mathbf{x}}] = \exp [ - \hat {\mathbf{x}} ^T i \Omega  \mathbf{s}] \label{eq:defDisp}
\end{align}
is the displacement operator with $\mathbf s\in \mathds{R}^{2n}$ and
\begin{align}
    Z_{\hat\rho} = \sqrt{\mathrm{Det}\left( \frac{V_{\hat\rho}+i\Omega}{2} \right)} .
\end{align}
We refer to $\mathbf s_{\hat\rho}\in \mathds{R}^{2n}$ as the mean vector and $V_{\hat\rho}$ as the covariance matrix, which is related to $H_{\hat\rho}$ as:
\begin{align}
 V_{\hat\rho} &=  \coth (i\Omega H_{\hat\rho} /2) i\Omega\label{eq:VofH}, \\
    H_{\hat\rho}  &=  2i\Omega \mathrm{arccoth}( V_{\hat\rho} i\Omega) \label{eq:HofV} .
\end{align}
Note that $H_{\hat\rho}$ real, symmetric and positive definite implies that $V_{\hat\rho}$ is a real and symmetric $2n\times2n$ matrix and that $V_{\hat\rho}+i\Omega>0$ is satisfied. Pure states, that are not faithful, can be included by allowing $V_{\hat\rho}+i\Omega\geq0$.
We also note that for (normalized) Gaussian states  we have $\mathbf s_{\hat\rho} = \mathrm{Tr}[\hat{ \mathbf x}\hat\rho ]$ and $ (V_{\hat\rho})_{ij} =   \mathrm{Tr}[\{\hat{  x}_i- s_{\hat\rho,i},\hat{  x}_j- s_{\hat\rho,j} \}  \hat\rho ]$.

Let us also briefly discuss powers of Gaussian states.
The (unnormalized) power $\hat \rho^\alpha$ of the  Gaussian state $\hat\rho$ with $\alpha>0$ has covariance matrix
\begin{align}
    V_{\hat\rho^\alpha} = \coth(\alpha \mathrm{arccoth}(V_{\hat\rho}i\Omega))i\Omega = \frac{(I+(V_{\hat\rho}i\Omega)^{-1})^\alpha +(I-(V_{\hat\rho}i\Omega)^{-1})^\alpha }{(I+(V_{\hat\rho}i\Omega)^{-1})^\alpha -(I-(V_{\hat\rho}i\Omega)^{-1})^\alpha }i\Omega   ,
\end{align}
as was shown in Ref.~\cite{seshadreesan2018renyi} by using Eqs.~\eqref{eq:VofH}-\eqref{eq:HofV}. Note that according to our definitions, we have $V_{\hat\rho^\alpha} = V_{\hat\rho^\alpha/\mathrm{Tr}[\hat\rho^\alpha]}$, i.e., our definition gives the same covariance matrix for normalized and unnormalized Gaussian states.
Also note that that the above can be extended to negative powers and we have $ V_{\hat\rho^{-\alpha}} =  -V_{\hat\rho^{\alpha}}$.

We further note that the square $\hat \rho^2$ of a Gaussian $\hat\rho$  is an (unnormalized) Gaussian state and has covariance matrix (see Corollary 3 of Ref.~\cite{seshadreesan2018renyi})
\begin{align}
    V_{\hat\rho^2} = \frac{1}{2} \left( V_{\hat\rho} + \Omega^T V_{\hat\rho}^{-1}\Omega\right)\label{eq:SquareCov}
\end{align}
and the (unnormalized) Gaussian state $\hat \rho^{1/2}$ has covariance matrix (see Corollary 4 of Ref.~\cite{seshadreesan2018renyi})
\begin{align}
    V_{\hat\rho^{1/2}} = \left( \sqrt{I+(V_{\hat\rho}\Omega)^{-2}} +  I \right)   V_{\hat\rho} \:.\label{eq:SquareRootCov}
\end{align}

We may also represent continuous variable quantum states, and more generally, trace-class operators $\hat O$,  with their characteristic function
\begin{align}
  \chi_{\hat O}(\mathbf z)  = \mathrm{Tr}[\hat O \hat D(-\mathbf z)]
\end{align}
as they can be decomposed as $ \hat O = \frac{1}{(2\pi)^n} \int \mathrm{d}^{2n} z \,   \chi_{\hat O}(\mathbf z) \hat D(\mathbf z)$.
We refer to trace-class operators as Gaussian trace-class operators if their characteristic function is Gaussian, i.e., if they are of the form
\begin{align}
    \chi_{\hat O}(\mathbf  z) = c_{\hat O} e^{-\frac{1}{4}\mathbf  z^T \Omega^T V_{\hat O}\Omega \mathbf  z}e^{i\mathbf z ^T\Omega^T \mathbf s_{\hat O}},
\end{align}
where $\mathrm{Tr}[\hat O] = c_{\hat O} \in \mathds{C}$, $\mathbf s_{\hat O}\in\mathds{C}^{2n}$  and $V_{\hat O}$ is a complex symmetric $2n\times 2n$ matrix. We will also refer to $\mathbf s_{\hat O}$ and $V_{\hat O}$ as (generalized) mean vector and covariance matrix, respectively, of the Gaussian operator $\hat O$.   Note that a Gaussian state $ \hat O =\hat\rho $ is a special case of Gaussian trace-class operator with $c_{\hat O} = 1$ and mean vector $\mathbf s_{\hat O}$ and covariance matrix $V_{\hat O}$  with properties as described above.  Another example of such a Gaussian trace-class operator is the product of two Gaussian states $\hat O = \hat\rho_1\hat\rho_2$ whose characteristic function we calculate in Appendix~\ref{app:PropertiesProduc} in Theorem~\ref{theorem:CharacteristicProd}.

\subsection{Quantum Barankin bound for Gaussian states}
In Section~\ref{sec:GeneralResult}, we derived the quantum Barankin bound based on the right division operator for general infinite dimensional systems. We derived general sufficient conditions for the existence and a general form of the quantum Barankin matrix. Now, we  narrow down the set of quantum states to the class of Gaussian states, and we  reformulate Theorem~\ref{theorem:1}. This allows us to express the sufficient existence conditions and Barankin matrix elements entirely in terms of mean vectors and covariance matrices.

\begin{theorem}
    Let $\{ \hat\rho_{\tilde\lambda}\}_{\tilde\lambda\in\Lambda}$ be a family of Gaussian states. 
    Let $\hat\rho_{\lambda}  =: \hat\rho$ be the actual Gaussian state, where $\lambda\in\Lambda$ is the true fixed value to be estimated.  Let $\hat\rho$ be faithful with mean vector $\mathbf s_{\hat \rho }$ and covariance matrix $V_{\hat \rho }$. Let $\{ \lambda_k\}_{k=1}^K$ be a set of test points with $\lambda_k\in\Lambda$  and $\hat\rho_{\lambda_k} =:\hat\rho_k$ a Gaussian state with mean vector $\mathbf s_{\hat\rho_{k}}$ and covariance matrix $V_{\hat\rho_{k}}$ so that
    \begin{align}
          2 V_{\hat\rho }- V_{\hat\rho_k} - \Omega^T V_{\hat\rho_k}^{-1}\Omega >0  \quad \forall k\in\{1,\ldots ,K\} . \label{eq:CovConditionBB}
    \end{align}
     It then follows that for every pair of POVM  $\{\Pi_{\mathbf x}\}_{\mathbf x}$ and estimator $\lambda_{\mathrm{est}}$ that satisfy the bias constraints 
     \begin{align}
     \int_D \mathrm{d}^d x \, \mathrm 
    {Tr}[\hat \rho_{\lambda_k}\hat \Pi_{\mathbf x}] \lambda_{\mathrm{est}}(\mathbf x) = \lambda_k+b_{\lambda_k}, 
    \end{align}
    we have
    \begin{align}
         \mathrm{MSE}_\lambda[\lambda_{\mathrm{est}}] \geq \sup_{\mathbf{a}\neq 0} \frac{\vert\mathbf{a}^\dag \boldsymbol{\gamma}\vert^2}{\mathbf{a}^\dag Q_{\mathrm{BB}}^{(R)} \mathbf{a}} 
    \end{align}
    with $\mathbf{a}\in \mathds{C}^K$, $\gamma_k = b_{\lambda_k }+\lambda_k-\lambda$.
     If additionally $Q_{\mathrm{BB}}^{(R)}$ is invertible~\cite{BarankinMatrix}, we have
    \begin{align}
         \mathrm{MSE}_\lambda[\lambda_{\mathrm{est}}] \geq \boldsymbol \gamma^\dag (Q_{\mathrm{BB}}^{(R)})^{-1} \boldsymbol \gamma .
    \end{align}
    
    The elements of the quantum Barankin matrix $Q_{\mathrm{BB}}^{(R)}$  are given by
    \begin{align}
        ( Q_{\mathrm{BB}}^{(R)})_{kl} &= \mathrm{Tr}[\hat \rho_k\hat\rho^{-1}  \hat \rho_l] \\
   &= c_{\hat\rho_k\hat\rho^{-1/2}}c_{\hat\rho^{-1/2}\hat\rho_l}\frac{\exp\left(-(\mathbf s_{\hat\rho_k\hat\rho^{-1/2}}-\mathbf s_{\hat\rho^{-1/2}  \hat\rho_l})^T (V_{\hat\rho_k\hat\rho^{-1/2}}+V_{\hat\rho^{-1/2}  \hat\rho_l})^{-1}(\mathbf s _{\hat\rho_k\hat\rho^{-1/2}}-\mathbf s _{\hat\rho^{-1/2}  \hat\rho_l})\right)  }{\exp\left(\frac{1}{2}\mathrm{Tr}[\log(\frac{V_{\hat\rho_k\hat\rho^{-1/2}}+V_{\hat\rho^{-1/2}  \hat\rho_l}}{2})]\right)}   ,\label{eq:QBBgaussianFormaula}
    \end{align}
    where $\log$ is the principal matrix logarithm.   Alternatively, we can write
    $\exp\left(\frac{1}{2}\mathrm{Tr}[\log(\frac{V_{\hat\rho_k\hat\rho^{-1/2}}+V_{\hat\rho^{-1/2}  \hat\rho_l}}{2})]\right)= \prod_{i=1}^{2n}\sqrt{ \mu_i}=:\sqrt{\mathrm{Det}\left(\frac{V_{\hat\rho_k\hat\rho^{-1/2}}+V_{\hat\rho^{-1/2}  \hat\rho_l}}{2}\right)}$ where each $\sqrt{ \mu_i}$ is the principal square root of the eigenvalues $\mu_i$  of $ \frac{V_{\hat\rho_k\hat\rho^{-1/2}}+V_{\hat\rho^{-1/2}  \hat\rho_l}}{2}$.
    We further have
    \begin{align}
    c_{\hat\rho_k\hat\rho^{-1/2}} &=     \mathrm{Det}\!\left[\tfrac{V_{\hat\rho}+i\Omega}{2}\right]^{1/4}\sqrt{\frac{\mathrm{Det}[\frac{V_{\hat\rho^{1/2}}+i\Omega}{2}]}{ \mathrm{Det}[\frac{V_{\hat\rho^{1/2}}-V_{\hat\rho_k}}{2}]}}e^{- \delta\mathbf s_k^T (V_{\hat\rho_k}-V_{\hat\rho^{1/2}})^{-1} \delta\mathbf s_k} \, \label{def:c}\\
   \mathbf s_{\hat\rho_k\hat\rho^{-1/2}} &=     \mathbf s_{\hat\rho}   + (-V_{\hat\rho^{1/2}}+i\Omega) (-V_{\hat\rho^{1/2}}+V_{\hat\rho_k})^{-1}   \delta \mathbf s_k \label{def:s} \\
   V_{\hat\rho_k\hat\rho^{-1/2}} &=  -i\Omega + (-V_{\hat\rho^{1/2}}+i\Omega) ( V_{\hat\rho_k}-V_{\hat\rho^{1/2}})^{-1} (V_{\hat\rho_k}+i\Omega) ,\label{def:V}
\end{align}
where  $\delta\mathbf{s}_k := \mathbf s_{\hat\rho_k}- \mathbf s_{\hat\rho}$,  and we further have $c_{\hat\rho^{-1/2}\hat\rho_k}  = c_{\hat\rho_k \hat\rho^{-1/2}}^*$,  $ \mathbf s_{\hat\rho^{-1/2}\hat\rho_k}  = \mathbf s_{\hat\rho_k\hat\rho^{-1/2}}^* 
    $ and $ V_{\hat\rho^{-1/2}\hat\rho_k}  = V_{\hat\rho_k\hat\rho^{-1/2}}^*$
\end{theorem}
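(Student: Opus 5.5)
The plan is to reduce the statement to Theorem~\ref{theorem:1}. That requires two things: showing that condition~\eqref{eq:CovConditionBB} guarantees that $\hat\rho^{1/2}\hat R_k$ admits a trace-class extension, and then evaluating $\mathrm{Tr}[\hat\rho_k\hat\rho^{-1}\hat\rho_l]$ in closed form using Gaussian characteristic functions. Since $\hat\rho$ is faithful, $\hat R_k=\hat\rho^{-1}\hat\rho_k$, so $\hat\rho^{1/2}\hat R_k=\hat\rho^{-1/2}\hat\rho_k$.

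\textbf{Trace-class property.} I would show that $\hat\rho^{-1/2}\hat\rho_k$ is Hilbert--Schmidt, that is,
\[
\mathrm{Tr}[\hat\rho_k\hat\rho^{-1}\hat\rho_k]<\infty .
\]
Hilbert--Schmidt alone is already enough for the Cauchy--Schwarz step in the proof of Theorem~\ref{theorem:1}. For a genuinely trace-class statement I would factor
\[
\hat\rho^{-1/2}\hat\rho_k=(\hat\rho^{-1/2}\hat\rho_k^{1/2})\,\hat\rho_k^{1/2},
\]
a product of two Hilbert--Schmidt operators, under the analogous condition. To make the condition explicit, I would write $\hat\rho^{-1/2}$ as an exponential of a quadratic form using Eq.~\eqref{eq:GaussianStateDef}, with Hamiltonian matrix $-H_{\hat\rho}/2$. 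The product $\hat\rho_k\hat\rho^{-1/2}$ is then formally Gaussian, and its characteristic function follows from Theorem~\ref{theorem:CharacteristicProd} applied with the power $\alpha=-1/2$; Appendix~\ref{app:GaussianRelations} gives $V_{\hat\rho^{-1/2}}=-V_{\hat\rho^{1/2}}$. Normalizability of the resulting Gaussian integral is exactly the requirement that the matrix $V_{\hat\rho_k}-V_{\hat\rho^{1/2}}$ appearing in \eqref{def:c}--\eqref{def:V} be invertible with the correct sign, and that the relevant quadratic form is positive. Using the square formula \eqref{eq:SquareCov} and the identity $\hat\rho_k\hat\rho^{-1}\hat\rho_k=(\hat\rho_k)^2\,\#\,\hat\rho^{-1}$, I expect this condition to reduce to $2V_{\hat\rho}-V_{\hat\rho_k^2}\cdot 2>0$, i.e.\ to $2V_{\hat\rho}-V_{\hat\rho_k}-\Omega^TV_{\hat\rho_k}^{-1}\Omega>0$, which is \eqref{eq:CovConditionBB}. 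A cleaner way to see this is in the normal-mode basis of $\hat\rho$, where $\hat\rho^{-1}$ is an unbounded thermal-type exponential: finiteness of $\mathrm{Tr}[\hat\rho_k\hat\rho^{-1}\hat\rho_k]$ is then a Gaussian integral $\int \exp(-\tfrac14 \mathbf z^T M\mathbf z)$ with $M\propto 2V_{\hat\rho}-2V_{\hat\rho_k^2}$, which converges iff $M>0$.

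\textbf{Matrix elements.} With the first part in place, I would split
\[
\mathrm{Tr}[\hat\rho_k\hat\rho^{-1}\hat\rho_l]=\mathrm{Tr}[(\hat\rho_k\hat\rho^{-1/2})(\hat\rho^{-1/2}\hat\rho_l)]
\]
and compute the generalized Gaussian data $(c,\mathbf s,V)$ of $\hat\rho_k\hat\rho^{-1/2}$ via the product theorem. This means substituting $V_{\hat\rho^{-1/2}}=-V_{\hat\rho^{1/2}}$ and mean $\mathbf s_{\hat\rho}$, and accounting for the normalization $c_{\hat\rho^{-1/2}}=\mathrm{Tr}[\hat\rho^{-1/2}]$ formally via $Z$-factors, which yields the prefactor $\mathrm{Det}[(V_{\hat\rho}+i\Omega)/2]^{1/4}\sqrt{\mathrm{Det}[(V_{\hat\rho^{1/2}}+i\Omega)/2]}$. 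This gives \eqref{def:c}--\eqref{def:V}. Taking adjoints gives $\hat\rho^{-1/2}\hat\rho_k=(\hat\rho_k\hat\rho^{-1/2})^\dag$; since $\chi_{\hat O^\dag}(\mathbf z)=\chi_{\hat O}(-\mathbf z)^*$, the data of the adjoint are the complex conjugates. Finally, the trace of a product of two Gaussian trace-class operators follows from the overlap formula $\mathrm{Tr}[\hat A\hat B]=(2\pi)^{-n}\int \chi_{\hat A}(\mathbf z)\chi_{\hat B}(-\mathbf z)\,\mathrm d^{2n}z$. This is a complex Gaussian integral, and it produces the exponent $-(\Delta\mathbf s)^T(V_A+V_B)^{-1}\Delta\mathbf s$ and the determinant $\sqrt{\mathrm{Det}((V_A+V_B)/2)}$.

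\textbf{Main obstacle.} The hardest step is rigor with complex symmetric matrices. Both the determinant square root (hence the principal-log and eigenvalue prescription) and the convergence of the complex Gaussian integral require $\mathrm{Re}(V_A+V_B)>0$. I would establish this by showing that $\mathrm{Re}(V_{\hat\rho_k\hat\rho^{-1/2}}+V_{\hat\rho^{-1/2}\hat\rho_l})$ is a positive combination inherited from \eqref{eq:CovConditionBB}. The branch of the square root is then fixed by analytic continuation from the real positive case $k=l$, $\hat\rho_k=\hat\rho$.
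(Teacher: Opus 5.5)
Your evaluation of the matrix elements follows the paper's proof step for step: you split $\hat\rho^{-1}=\hat\rho^{-1/2}\hat\rho^{-1/2}$, use Corollary~\ref{corollary:CharacteristicInverse} with $V_{\hat\rho^{1/2}}$ entering with a minus sign, complex-conjugate for the adjoint, and evaluate the overlap integral. The problem lies in the existence step. This is the only place where \eqref{eq:CovConditionBB} is used, and none of the arguments you sketch for it works.

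\textbf{Circular route.} Reading the condition off the normalizability of $\chi_{\hat\rho_k\hat\rho^{-1/2}}$ is circular. That characteristic function is obtained by cycling displacement operators inside the trace and by evaluating $\mathrm{Tr}[e^{\hat H_3}]$. Both steps already require $e^{\hat H_{\hat\rho_k}}e^{-\hat H_{\hat\rho}/2}$ to be trace class.

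\textbf{Sign of the Gaussian integral.} The integral $\int\exp(-\tfrac14\mathbf z^TM\mathbf z)$ with $M\propto V_{\hat\rho}-V_{\hat\rho_k^2}$ is not justified. If you carry it out with characteristic functions, as your notation suggests, the sign comes out reversed. The operator $\hat\rho^{-1}$ has no characteristic function, and the formal one with covariance $-V_{\hat\rho}$ gives the integrand $\exp[-\tfrac14\mathbf z^T\Omega^T(V_{\hat\rho_k^2}-V_{\hat\rho})\Omega\mathbf z]$. This converges only when $V_{\hat\rho_k^2}>V_{\hat\rho}$. Yet for commuting thermal states one checks directly that $\sum_n p_n^{-1}\langle n|\hat\rho_k^2|n\rangle<\infty$ iff $V_{\hat\rho_k^2}<V_{\hat\rho}$. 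A correctly signed heuristic pairs the Husimi function of $\hat\rho_k^2$ with the growing $P$-function of $\hat\rho^{-1}$ in the normal-mode frame of $\hat\rho$. It would still need a limiting argument for the unbounded $\hat\rho^{-1}$.

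\textbf{Factorization aside.} Factoring through $\hat\rho^{-1/2}\hat\rho_k^{1/2}$ cannot work under the stated hypothesis. That factor being Hilbert--Schmidt requires $V_{\hat\rho}>V_{\hat\rho_k}$. This is strictly stronger than \eqref{eq:CovConditionBB}, because $V_{\hat\rho_k^2}\le V_{\hat\rho_k}$. For $\lambda_k=\lambda$ the factor is the identity.

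\textbf{The missing idea} is Theorem~\ref{theorem:TraceClassCondition}. Take $\hat A=e^{\hat H_{\hat\rho_k}}e^{-\hat H_{\hat\rho}/2}$. Proposition~12 of Ref.~\cite{seshadreesan2018renyi} combines $\hat A\hat A^\dagger=e^{\hat H_{\hat\rho_k}}e^{-\hat H_{\hat\rho}}e^{\hat H_{\hat\rho_k}}$ into a single Gaussian exponential $e^{\hat H'}$ with $H'>0$ as soon as $V_{\hat\rho}-V_{\hat\rho_k^2}>0$. By \eqref{eq:SquareCov}, this is exactly \eqref{eq:CovConditionBB}; a relative displacement only adds a linear term to the exponent.

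This proves your Hilbert--Schmidt claim $\mathrm{Tr}[\hat\rho_k\hat\rho^{-1}\hat\rho_k]<\infty$. At no extra cost it also gives trace-class-ness, since $\sqrt{\hat A\hat A^\dagger}=e^{\hat H'/2}$ is again a Gaussian with positive-definite Hamiltonian matrix.

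Your remark that the proof of Theorem~\ref{theorem:1} only needs $\hat\rho^{1/2}\hat R_k$ to be Hilbert--Schmidt is correct. All later steps only use that a product of two Hilbert--Schmidt operators is trace class. For Gaussian states, however, it buys nothing, since the two properties coincide here.

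Similarly, $\mathrm{Re}\,V_{\hat\rho_k\hat\rho^{-1/2}}>0$ is most easily obtained from the factors being Hilbert--Schmidt: by Plancherel, their characteristic functions lie in $L^2$. This is simpler than extracting it from \eqref{eq:CovConditionBB} directly. Your branch prescription for the square root is then the standard one.
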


\begin{proof}
    As a starting point we use Theorem~\ref{theorem:1}, which is the general statement of the theorem. There it was proven that $\hat\rho_{\lambda_k}\hat\rho_\lambda^{-1/2}$ being trace class is a sufficient condition for the existence of the quantum Barankin matrix elements. Now, for Gaussian states this sufficient condition can be translated to a sufficient condition on the covariance matrices. First, note that 
    $
        \hat \rho = \frac{1}{Z_{\hat\rho}} \hat D(-\mathbf s_{\hat\rho}) e^{\hat H_{\hat\rho}} \hat D(\mathbf s_{\hat\rho}) $ and $\hat \rho_k = \frac{1}{Z_{\hat\rho_k}} \hat D(-\mathbf s_{\hat\rho_k}) e^{\hat H_{\hat\rho_k}} \hat D(\mathbf s_{\hat\rho_k})$. Furthermore, we have $\hat \rho^{-1/2} = \frac{1}{Z_{\hat\rho}^{-1/2}} \hat D(-\mathbf s_{\hat\rho}) e^{-\hat H_{\hat\rho}/2} \hat D(\mathbf s_{\hat\rho}) $.
        In Appendix~\ref{app:PropertiesProduc} Theorem~\ref{theorem:CharacteristicProd} in Eqs.~\eqref{eq:C1}-\eqref{eq:C4} in combination with the comments from Corollary~\ref{corollary:CharacteristicInverse} it was shown that we can rewrite the product of exponential operators as
        \begin{align}
             \hat\rho_k \hat\rho^{-1/2} = c \hat D(\mathbf r_1)  e^{\hat H_{\hat\rho_k}} e^{-\hat H_{\hat\rho}/2}\hat D(\mathbf r_2) ,
        \end{align}
        where $c$ is an irrelevant complex constant and $\mathbf r_1,\mathbf r_2$ are irrelevant real vectors. Now, in Theorem~\ref{theorem:TraceClassCondition}, it was shown that the product $\hat \rho_1 \hat \rho_2^{-1}$ of two (unnormalized) Gaussian states $\hat\rho_1,\hat\rho_2$ is trace class if $V_{\hat\rho_1^2} - V_{\hat \rho_2^2}> 0 $. In our case, we have $\hat\rho_1 =e^{\hat H_{\hat\rho_k}}$ and $\hat\rho_2 =(e^{\hat H_{\hat\rho}})^{1/2}$, and thus, $e^{\hat H_{\hat\rho_k}}e^{-\hat H_{\hat\rho}/2}$ is trace class if $V_{(\hat\rho^{1/2})^2} - V_{\hat \rho_k^2} = V_{\hat\rho } - V_{\hat \rho_k^2}> 0 $, which is equivalent to the condition given in Eq.~\eqref{eq:CovConditionBB}, which we can show by using Eq.~\eqref{eq:SquareCov}.

 The next step is to find analytic expressions of $Q_{\text{BB}}^{(R)}$ in terms of the mean and covariance. Note that $(Q_{\text{BB}}^{(R)})_{kl} = \mathrm{Tr}[\hat \rho_k\hat\rho^{-1}  \hat \rho_l] = \mathrm{Tr}[(\hat \rho_k\hat\rho^{-1/2}) (\hat\rho^{-1/2}  \hat \rho_l)]$. To calculate this trace, let us first calculate the characteristic functions of $\hat \rho_k\hat\rho^{-1/2}$ and $\hat\rho^{-1/2}  \hat \rho_l$. We start with $\hat \rho_k\hat\rho^{-1/2}$:
 \begin{align}
         \chi_{\hat \rho_k\hat\rho^{-1/2}} (\mathbf z) &= \mathrm{Tr}[\hat \rho_k\hat\rho^{-1/2}\hat D(-\mathbf z)]\\
           &= \mathrm{Tr}\left[\left(\frac{1}{Z_{\hat\rho_k}} \hat D(-\mathbf s_{\hat\rho_k}) e^{\hat H_{\hat\rho_k}} \hat D(\mathbf s_{\hat\rho_k})\right)\left(\frac{1}{Z_{\hat\rho}^{-1/2}} \hat D(-\mathbf s_{\hat\rho}) e^{-\hat H_{\hat\rho}/2} \hat D(\mathbf s_{\hat\rho}) \right)\hat D(-\mathbf z)\right]\\
            &=  \frac{Z_{\hat\rho}^{1/2}}{Z_{\hat\rho_k}}\mathrm{Tr}\left[  \hat D(-\mathbf s_{\hat\rho_k}) e^{\hat H_{\hat\rho_k}} \hat D(\mathbf s_{\hat\rho_k})  \hat D(-\mathbf s_{\hat\rho}) e^{-\hat H_{\hat\rho}/2} \hat D(\mathbf s_{\hat\rho}) \hat D(-\mathbf z)\right].
 \end{align}
Now, using Corollary~\ref{corollary:CharacteristicInverse}, which follows from Theorem~\ref{theorem:CharacteristicProd} in Appendix~\ref{app:PropertiesProduc}, we find that the characteristic function is a complex Gaussian
\begin{align}
    \chi_{\hat \rho_k\hat\rho^{-1/2}} (\mathbf z)  = c_{\hat\rho_k\hat\rho^{-1/2}} e^{i\mathbf{z}^T \Omega^T \mathbf s_{\hat\rho_k\hat\rho^{-1/2}}} e^{- \frac{1}{4}\mathbf z^T \Omega^T V_{\hat\rho_k\hat\rho^{-1/2}}  \Omega \mathbf z } ,\label{eq:CharacteristicFunctionProduct}
\end{align}
where $c_{\hat\rho_k\hat\rho^{-1/2}}, \mathbf s_{\hat\rho_k\hat\rho^{-1/2}}, V_{\hat\rho_k\hat\rho^{-1/2}}$ are defined in Eqs.~\eqref{def:c}-\eqref{def:V}. Note that the inverse $(V_{\hat\rho_k}-V_{\hat{\rho}^{1/2}})^{-1}$ exists, which follows from the fact that $V_{\hat\rho}-V_{\hat\rho_k^2}>0 $ (by assumption) and Corollary~\ref{corollary:Invertibility}.

To calculate the characteristic function of $\hat\rho^{-1/2}\hat\rho_k$, we  note that  $\text{Tr}[A] = \text{Tr}[A^\dag]^*$ for $\hat A$ trace class. Thus, we have 
\begin{align}
    \chi_{\hat\rho^{-1/2}  \hat\rho_k}(\mathbf z) &= \text{Tr}[\hat \rho^{-1/2}\hat \rho_k \hat D(-\mathbf z)] = \text{Tr}[(\hat \rho^{-1/2}\hat \rho_k \hat D(-\mathbf z))^\dag]^*\\
    &= \text{Tr}[\hat D(-\mathbf z)^\dag (\hat \rho^{-1/2}\hat \rho_l)^\dag]^* = \text{Tr}[\hat D(-(-\mathbf z)) \hat \rho_k\hat \rho^{-1/2}]^* \\
    &= \text{Tr}[ \hat \rho_k\hat \rho^{-1/2}\hat D(-(-\mathbf z))]^* \\
    &= \chi_{\hat\rho_k \hat \rho^{-1/2}} (-\mathbf z)^* .
\end{align}
Now, $\chi_{\rho_k  \rho^{-1/2}} (\mathbf z)$ is given in Eq.~\eqref{eq:CharacteristicFunctionProduct}. We thus obtain
\begin{align}
   \chi_{\hat\rho^{-1/2}  \hat\rho_k}(\mathbf z) =   ( \chi_{\hat \rho_k\hat\rho^{-1/2}} (- \mathbf z) )^* &= \left(c_{\hat\rho_k\hat\rho^{-1/2}} e^{i(-\mathbf{z})^T \Omega^T \mathbf s_{\hat\rho_k\hat\rho^{-1/2}}} e^{- \frac{1}{4}  (-\mathbf z)^T \Omega^T V_{\hat\rho_k\hat\rho^{-1/2}}  \Omega (-\mathbf z )} \right)^*\\
    &= \left(c_{\hat\rho_k\hat\rho^{-1/2}}\right)^* \left( e^{i(-\mathbf{z})^T \Omega^T \mathbf s_{\hat\rho_k\hat\rho^{-1/2}}}\right)^* \left( e^{- \frac{1}{4}   \mathbf z ^T \Omega^T V_{\hat\rho_k\hat\rho^{-1/2}}  \Omega  \mathbf z  } \right)^* \\
    &= c_{\hat\rho_k\hat\rho^{-1/2}} ^*   e^{(-i)(-\mathbf{z})^T \Omega^T \mathbf s_{\hat\rho_k\hat\rho^{-1/2}}^*}    e^{- \frac{1}{4}   \mathbf z ^T \Omega^T V_{\hat\rho_k\hat\rho^{-1/2}}^*  \Omega  \mathbf z  }   \\
    &= c_{\hat\rho_k\hat\rho^{-1/2}} ^*   e^{ i \mathbf{z} ^T \Omega^T \mathbf s_{\hat\rho_k\hat\rho^{-1/2}}^*}    e^{- \frac{1}{4}   \mathbf z ^T \Omega^T V_{\hat\rho_k\hat\rho^{-1/2}}^*  \Omega  \mathbf z  }   .
\end{align}
Now, we find that  
\begin{align}
    \chi_{\hat\rho^{-1/2}  \hat\rho_k}(\mathbf z) = c_{\hat\rho^{-1/2}\hat\rho_k}    e^{ i \mathbf{z} ^T \Omega^T \mathbf s_{\hat\rho^{-1/2}\hat\rho_k}}    e^{- \frac{1}{4}   \mathbf z ^T \Omega^T V_{\hat\rho^{-1/2}\hat\rho_k}  \Omega  \mathbf z  }   
\end{align}
is also Gaussian with
\begin{align}
    c_{\hat\rho^{-1/2}\hat\rho_k} &= c_{\hat\rho_k \hat\rho^{-1/2}}^*\\
    \mathbf s_{\hat\rho^{-1/2}\hat\rho_k} &= \mathbf s_{\hat\rho_k\hat\rho^{-1/2}}^* \\
     V_{\hat\rho^{-1/2}\hat\rho_k} &= V_{\hat\rho_k\hat\rho^{-1/2}}^* .
\end{align}
Thus, we have two trace class operators with complex Gaussian characteristic function. The trace of these can be calculated as
\begin{align}
    \mathrm{Tr}[(\hat \rho_k\hat\rho^{-1/2}) (\hat\rho^{-1/2}  \hat \rho_l)] &= \frac{1}{(2\pi)^n(2\pi)^n}\int\mathrm d^{2n}z_1 \int\mathrm d^{2n}z_2  \, \chi_{\hat \rho_k\hat\rho^{-1/2}} (\mathbf z_1) \chi_{\hat\rho^{-1/2}  \hat \rho_l} (\mathbf{z_2}) \underbrace{\mathrm{Tr}[\hat D(\mathbf z_1) \hat D(\mathbf z_2)]}_{=(2\pi)^n\delta^{2n}(\mathbf z_1+ \mathbf z_2)} 
    \\
    &= \frac{1}{(2\pi)^n}\int\mathrm d^{2n}z    \, \chi_{\hat \rho_k\hat\rho^{-1/2}} (\mathbf z) \chi_{\hat\rho^{-1/2}  \hat \rho_l} (-\mathbf{z })  \\
    &= \frac{c_{\hat \rho_k\hat\rho^{-1/2}} c_{\hat\rho^{-1/2}  \hat \rho_l}}{(2\pi)^n} \int\mathrm d^{2n}z\, e^{i\mathbf{z}^T \Omega^T (\mathbf s_{\hat\rho_k\hat\rho^{-1/2}} - \mathbf s_{\hat\rho^{-1/2}\hat\rho_l}) } e^{- \frac{1}{4}\mathbf z^T \Omega^T (V_{\hat\rho_k\hat\rho^{-1/2}} +V_{\hat\rho^{-1/2}\hat\rho_l} )  \Omega \mathbf z }  \\
    &= \frac{c_{\hat \rho_k\hat\rho^{-1/2}} c_{\hat\rho^{-1/2}  \hat \rho_l}}{(2\pi)^n} \frac{\pi^n}{\exp(\frac{1}{2}\mathrm{Tr}[\log (\frac{1}{4}\Omega^T (V_{\hat\rho_k\hat\rho^{-1/2}} +V_{\hat\rho^{-1/2}\hat\rho_l} )  \Omega)])}\nonumber \\
    &\quad  \times e^{\frac{1}{4} (i\Omega^T (\mathbf s_{\hat\rho_k\hat\rho^{-1/2}} - \mathbf s_{\hat\rho^{-1/2}\hat\rho_l}) )^T (\frac{1}{4}\Omega^T (V_{\hat\rho_k\hat\rho^{-1/2}} +V_{\hat\rho^{-1/2}\hat\rho_l} )  \Omega)^{-1} (i\Omega^T (\mathbf s_{\hat\rho_k\hat\rho^{-1/2}} - \mathbf s_{\hat\rho^{-1/2}\hat\rho_l}) )}
\end{align}
Here we used $\int_{\mathds R^{2n}}\mathrm d^{2n} z\, e^{-\mathbf z^T A \mathbf z+ \mathbf z^T \mathbf b} = \frac{\pi^n}{\exp(\frac{1}{2}\mathrm{Tr}[\log A])} e^{\frac{1}{4}\mathbf b^T A^{-1}\mathbf b}$ where $A$ is complex symmetric with $\text{Re}[A]>0$ and $\log$ is the principal matrix logarithm. Note that $\exp(\frac{1}{2}\mathrm{Tr}[\log A]) = \prod_{i=1}^{2n} \sqrt{\mu_i}=: \sqrt{\mathrm{Det}[A]}$ where $\sqrt{\mu_i}$ is the principal square root of the eigenvalues $\mu_i$ of $A$.
For our case, we have
$A = \frac{1}{4}\Omega^T (V_{\hat\rho_k\hat\rho^{-1/2}} +V_{\hat\rho^{-1/2}\hat\rho_l} )  \Omega$ and $\mathbf b= i\Omega^T (\mathbf s_{\hat\rho_k\hat\rho^{-1/2}} - \mathbf s_{\hat\rho^{-1/2}\hat\rho_l})$. 
Note that due to  $\hat \rho_k\hat\rho^{-1/2}$ and$ \hat\rho^{-1/2}  \hat \rho_l$ being trace class operators it follows that $\text{Re}[V_{\hat\rho_k\hat\rho^{-1/2}} ],\text{Re}[V_{\hat\rho^{-1/2}\hat\rho_l} ]>0$\footnote{If we did not have $\text{Re}[V_{\hat\rho_k\hat\rho^{-1/2}} ],\text{Re}[V_{\hat\rho^{-1/2}\hat\rho_l} ]>0$, the characteristic functions $  \chi_{\hat\rho^{-1/2}  \hat\rho_k}(\mathbf z)$ and $  \chi_{\hat\rho_l\hat\rho^{-1/2}  }(\mathbf z)$ would be unbounded, which is a contradiction to the initial assumption that $\hat \rho_k\hat\rho^{-1/2}$ and $\hat\rho^{-1/2}  \hat \rho_l$ are trace class}. From this follows that $\text{Re}[A]>0$. Further, note that we can simplify $\exp(\frac{1}{2}\mathrm{Tr}[\log(\frac{1}{4}\Omega^T (V_{\hat\rho_k\hat\rho^{-1/2}} +V_{\hat\rho^{-1/2}\hat\rho_l} )  \Omega)]) = \sqrt{\mathrm{Det}[\frac{1}{4}\Omega^T (V_{\hat\rho_k\hat\rho^{-1/2}} +V_{\hat\rho^{-1/2}\hat\rho_l} )  \Omega]} = \frac{1}{2^{n}}\sqrt{\mathrm{Det}[\frac{1}{2}  (V_{\hat\rho_k\hat\rho^{-1/2}} +V_{\hat\rho^{-1/2}\hat\rho_l} )   ]} = \frac{1}{2^n} \exp(\frac{1}{2}\mathrm{Tr}[\log(\frac{1}{2}  (V_{\hat\rho_k\hat\rho^{-1/2}} +V_{\hat\rho^{-1/2}\hat\rho_l} )   )])$. Now, further simplifications are made eliminating the remaining $\Omega$ matrices which leads us to the final result in Eq.~\eqref{eq:QBBgaussianFormaula}.    
\end{proof}

We thus have successfully restated our general Theorem~\ref{theorem:1} for Gaussian states and have expressed the sufficient existence conditions and the quantum Barankin matrix elements in terms of mean vectors $\mathbf s_{\hat\rho}$, $\hat s_{\hat\rho_k}$ and covariance matrices $V_{\hat\rho}$, $V_{\hat \rho_k}$.

Note that the observations made for the general case in Section~\ref{sec:GeneralResult} also apply to this special case of Gaussian states. Particularly,  the recovery of the RLD QCRB shown for the general case in \ref{subsec:RecoverRLDCRBgeneral} could now be directly done on the level of mean vectors and covariance matrices. The explicit formulas we found were rather cumbersome and we were unable to further simplify them. However, in Section~\ref{sec:Applications}, we numerically verified the recovery of the RLD QCRB from our Gaussian state formulas for the quantum Barankin matrix.

We also note that we have calculated the classical Barankin matrix for classical Gaussian probability distributions in Appendix~\ref{app:ClassicalBarankinGaussian}. Comparing this to its quantum version reveals an interesting structural similarity.

\section{Applications}\label{sec:Applications}
In this section, we evaluate the right quantum Barankin bound (RD QBB) for several examples of interest, imposing the condition of unbiasedness at the selected test points.
First, we derive an analytical expression for the RD QBB matrix elements in the context of phase estimation with $N$ qubits. For this specific task, an efficient method to compute the symmetric Barankin bound (SD QBB) was recently developed \cite{gessner2023hierarchies}, providing a natural benchmark for our results. By comparing the RD QBB with both the SD QBB and the symmetric Cramér-Rao Bound (SLD QCRB), we observe that the RD QBB closely follows the scaling of the symmetric bound as a function of $N$. 
Secondly, we use the results for Gaussian states, namely Eq.~\eqref{eq:QBBgaussianFormaula}, to compute the RD QBB for several estimation tasks. To demonstrate the versatility of our approach, we consider three different scenarios: (i) phase sensing, where the parameter is encoded in the first moment (displacement) of the state; (ii) thermometry, where the temperature is encoded exclusively in the second moment (covariance matrix) \cite{cenni2022thermometry}; and (iii) a generalized case where both the displacement and the covariance matrix are parameter-dependent. In each of these applications, we highlight the emergence of threshold behavior \cite{mcaulay1971a, knockaert1997barankin}. This refers to the abrupt transition in estimation performance occurring at low signal-to-noise ratios, where the MSE increases significantly above the SLD-CRB due to global ambiguities in the parameter encoding. Because the RD QBB incorporates information from test points across the entire parameter space, it successfully captures these non-local effects, providing a tighter and more realistic precision limit in the non-asymptotic regime than local bounds like the CRB.

\subsection{Qubits}
We now apply our formalism to the case of quantum phase estimation with $N$ qubits in the state $\hat{\rho}_\lambda^{\otimes N}$ where $\hat{\rho}_\lambda = \hat{U}_\lambda\hat{\rho}_0 \hat{U}_\lambda^\dagger$ \cite{toth2014,giovannetti2004quantum}. The unitary transformation is defined as $\hat{U}_\lambda = e^{-i\frac{\lambda}{2}\mathbf{n}\cdot\boldsymbol{\sigma}}$, with $\mathbf{n}=(n_x,n_y,n_z)$ being the rotation axis and $\boldsymbol{\sigma}=(\sigma_x,\sigma_y,\sigma_z)$ the vector of Pauli matrices. The single-qubit state can be conveniently expressed in the Bloch representation as $\hat{\rho}_\lambda = \frac{1}{2} \left( \mathbb{I} + \mathbf{r}_\lambda \cdot \boldsymbol{\sigma} \right)$, where the evolved Bloch vector is given by 
\begin{align}
    \mathbf{r}_\lambda = (\cos \lambda)(\mathbf{r}_0 - (\mathbf{n} \cdot \mathbf{r}_0)\mathbf{n}) + (\mathbf{n} \cdot \mathbf{r}_0)\mathbf{n} + (\sin \lambda)\mathbf{n} \times \mathbf{r}_0.
\end{align}
The eigendecomposition of $\hat{\rho}^{\otimes N}_\lambda$ is determined by the single-qubit eigenvectors $|\pm\rangle\langle\pm| = \frac{1}{2} \left( \mathbb{I} \pm \frac{\mathbf{r}_\lambda}{|\mathbf{r}_\lambda|} \cdot \boldsymbol{\sigma} \right)$ and their corresponding eigenvalues $p_\pm = \frac{1 \pm |\mathbf{r}_\lambda|}{2}$ \cite{gessner2023hierarchies}. Therefore, using Eq.~\eqref{eq:RDO} and Eq.~\eqref{eq:RBaB}, the matrix elements of the RD QBB are,
\begin{align}\label{eq:quBa}
     (Q_{\mathrm{BB}}^{(R)})_{kl} = \sum_{\mathbf{m}}\frac{\langle\mathbf m\vert \hat \rho_{\lambda_l}^{\otimes N}\hat\rho_{\lambda_k}^{\otimes N}\vert\mathbf m\rangle }{p_{\mathbf m}} = \sum_{\mathbf{m}}\prod^{N}_i\frac{\langle m_i\vert \hat \rho_{\lambda_l}\hat\rho_{\lambda_k}\vert m_i\rangle }{p_{m_i}} = \prod^{N}_i \left[\frac{\langle +\vert \hat \rho_{\lambda_l}\hat\rho_{\lambda_k}\vert +\rangle }{p_{+}}+\frac{\langle -\vert \hat \rho_{\lambda_l}\hat\rho_{\lambda_k}\vert -\rangle }{p_{-}} \right]
\end{align}

with $\mathbf{m} = (m_1, \dots, m_N)$, $p_\mathbf{m} = p_{m_1} \cdots p_{m_N}$ and $|\mathbf{m}\rangle = |m_1\rangle \otimes \cdots \otimes |m_N\rangle$, where $m_i \in \{+, -\}$ for $i = 1, \dots, N$. Due to the factorization of the eigenbasis of $\hat\rho_\lambda^{\otimes N}$, the terms in the product in \eqref{eq:quBa} are independent of $i$, which simplifies the expression to a power of $N$, i.e., $\prod_i^N[\,\cdots]=[\,\cdots]^N$. Now we simplify $\hat \rho_{\lambda_l}\hat\rho_{\lambda_k}$
\begin{align*}
\hat \rho_{\lambda_l}\hat\rho_{\lambda_k} &= \frac{1}{4} (\mathbb{I} + \mathbf{r}_l \cdot \boldsymbol{\sigma})(\mathbb{I} + \mathbf{r}_k \cdot \boldsymbol{\sigma}) \\
&= \frac{1}{4} \left[ \mathbb{I} + \mathbf{r}_l \cdot \boldsymbol{\sigma} + \mathbf{r}_k \cdot \boldsymbol{\sigma} + (\mathbf{r}_l \cdot \boldsymbol{\sigma})(\mathbf{r}_k \cdot \boldsymbol{\sigma}) \right] \\
&= \frac{1}{4} \left[ (1 + \mathbf{r}_l \cdot \mathbf{r}_k)\mathbb{I} + (\mathbf{r}_l + \mathbf{r}_k + i \mathbf{r}_l \times \mathbf{r}_k) \cdot \boldsymbol{\sigma} \right]
\end{align*}
where we have used $(\mathbf{a} \cdot \boldsymbol{\sigma})(\mathbf{b} \cdot \boldsymbol{\sigma}) = (\mathbf{a} \cdot \mathbf{b})\mathbb{I} + i(\mathbf{a} \times \mathbf{b}) \cdot \boldsymbol{\sigma}$. Noting that $\langle \pm | A | \pm \rangle = \text{Tr}(|\pm\rangle\langle\pm| A)$ and defining $\mathbf{v}=\frac{\mathbf{r}_\lambda}{|\mathbf{r}_\lambda|} $, the numerator of Eq.~\eqref{eq:quBa} reduces to,
\begin{align*}
\langle \pm | \hat \rho_{\lambda_l}\hat\rho_{\lambda_k} | \pm \rangle &= \text{Tr} \left( \frac{1}{2}(\mathbb{I} \pm \mathbf{v} \cdot \boldsymbol{\sigma}) \cdot \frac{1}{4} \left[ (1 + \mathbf{r}_l \cdot \mathbf{r}_k)\mathbb{I} + (\mathbf{r}_l + \mathbf{r}_k + i \mathbf{r}_l \times \mathbf{r}_k) \cdot \boldsymbol{\sigma} \right] \right) \\
&= \frac{1}{8} \left[ 2(1 + \mathbf{r}_l \cdot \mathbf{r}_k) \pm 2 \mathbf{v} \cdot (\mathbf{r}_l + \mathbf{r}_k + i \mathbf{r}_l \times \mathbf{r}_k) \right] \\
&= \frac{1}{4} \left[ 1 + \mathbf{r}_l \cdot \mathbf{r}_k \pm  \mathbf{v} \cdot (\mathbf{r}_l + \mathbf{r}_k + i \mathbf{r}_l \times \mathbf{r}_k) \right] ,
\end{align*}

where we used $\text{Tr}(\sigma_i) = 0$ and $\text{Tr}((\mathbf{a} \cdot \boldsymbol{\sigma})(\mathbf{b} \cdot \boldsymbol{\sigma})) = 2(\mathbf{a} \cdot \mathbf{b})$. Therefore, the RD QBB matrix elements for $N$ qubits are,
\begin{align}
     (Q_{\mathrm{BB}}^{(R)})_{kl}  =  \frac{1}{2^N}\left[\frac{1 + \mathbf{r}_l \cdot \mathbf{r}_k +  \mathbf{v} \cdot (\mathbf{r}_l + \mathbf{r}_k + i \mathbf{r}_l \times \mathbf{r}_k) }{1 +|\mathbf{r}_\lambda|}+\frac{1 + \mathbf{r}_l \cdot \mathbf{r}_k -  \mathbf{v} \cdot (\mathbf{r}_l + \mathbf{r}_k + i \mathbf{r}_l \times \mathbf{r}_k)}{1 - |\mathbf{r}_\lambda|} \right]^N.
\end{align}
In Fig.~\ref{fig:Qubit}, we compare the SLD QCRB and the RD QBB and SD QBB for $N$-qubit phase estimation. To compute RD QBB and SD QBB the value of $\boldsymbol{\gamma}$ is chosen to impose zero bias at all test points. Although RD QBB is lower than SD QBB for a fixed number of test points,  both bounds exhibit qualitatively similar scaling with the number of copies $N$. While all bounds converge in the asymptotic limit (large $N$), a significant gap remains between the Barankin bounds and the SLD QCRB for small $N$.

\begin{figure}[htbp]
\centering
\includegraphics[width=0.63\textwidth]{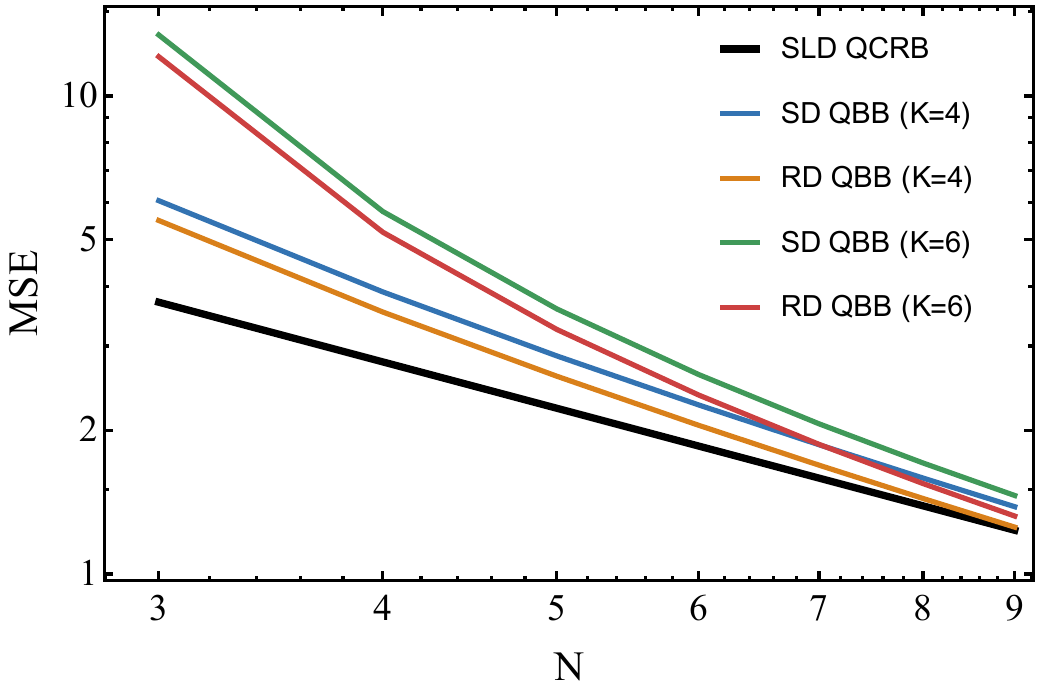}
\caption{Comparison of the RD QBB, SD QBB, and SLD QCRB for qubit phase estimation with parameters $\mathbf{n}=(0,0,1)$, $ \mathbf{r}_0=\frac{1}{10}(0,3,0)$ and $\lambda=\pi/4 $. The $y$-axis shows the MSE while the $x$-axis shows the number of copies $N$. To avoid the potential divergences studied in \cite{gebhart2024fundamental,navarro2025existence}, we only show the region where the bound is finite ($N > 2$). The $K$ test points are equidistributed in the interval $[0,\frac{\pi}{2})$, $\lambda_j=\frac{\pi}{2}\frac{j}{K}$ where $j=0,1,\dots,K-1$. As is common, the unbiasedness condition is imposed at all test points.}
\label{fig:Qubit}
\end{figure}

\subsection{Phase sensing with Gaussian states}
We now consider a single-mode displaced thermal state  $\rho_\lambda $ with first moment $\mathbf{s} _\lambda= \alpha(\cos(\lambda),\sin(\lambda))^T$ and covariance matrix $V_\lambda=\nu I_2$ \cite{fadel2025quantum}. The parameter $\nu \geq 1$ quantifies the thermal noise of the state and is inversely related to its purity. For a system in equilibrium with a thermal bath at temperature $T$, this parameter is given by
\begin{align}
    \nu=\frac{1 + 1/e^{\omega \beta}}{1 - 1/e^{\omega\beta}}=\coth(\frac{\beta\omega}{2}) ,
\end{align}
where $\beta=1/T$ and $\omega$ is the oscillator frequency \cite{serafini2017a}. Since phase sensing consists of the estimation of the phase of the displacement, we assume $\omega\beta=1$ for simplicity. In Fig. \ref{fig:Pha}, we plot the RLD QCRB, SLD QCRB, and the RD QBB when zero bias at all test points is imposed. Consistent with the previous example, the RD QBB and the QCRBs differ significantly in the low-signal regime ($\alpha \ll 1$). This discrepancy is a manifestation of the threshold behavior, which suggests that the MSE predicted by the local SLD CRB is unattainable for globally unbiased estimation when the signal is weak. Remarkably, as the displacement increases ($\alpha \approx 1$), the RD QBB converges to the RLD QCRB and remains below the SLD QCRB.

\begin{figure}[htbp]
\centering
\includegraphics[width=0.65\textwidth]{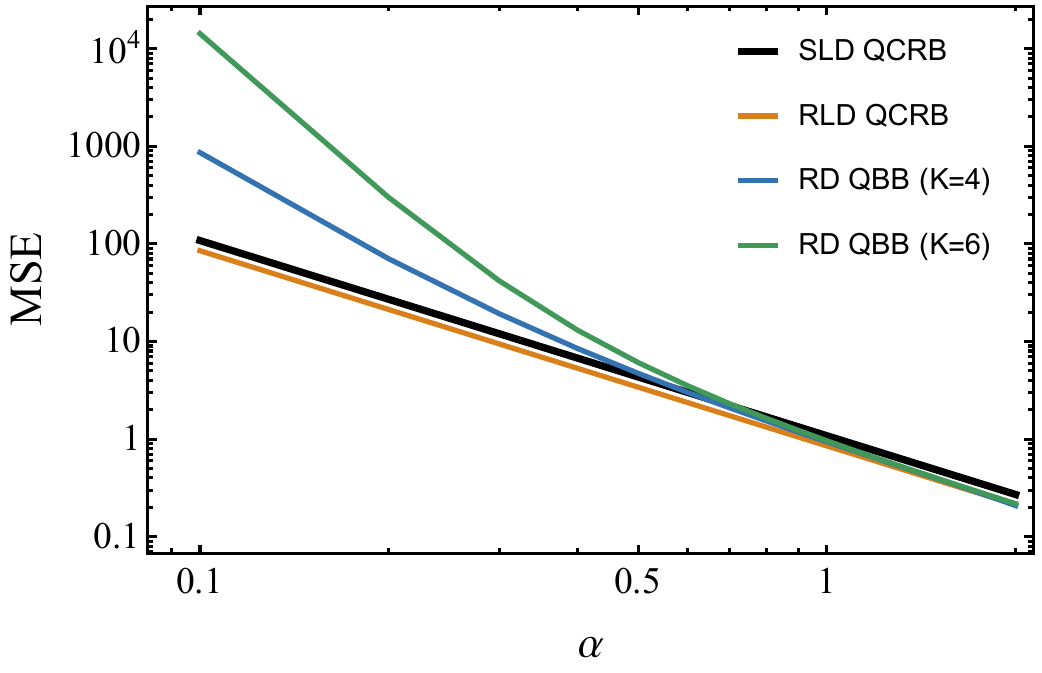}
\caption{Single mode phase estimation with $ \lambda=\frac{\pi}{4}$ and $\omega\beta=1$.  The $y$-axis shows the MSE, while the $x$-axis shows the displacement magnitude $\alpha$. The $K$ test points are equidistributed in the interval   $[0,\frac{\pi}{2})$, $\lambda_j=\frac{\pi}{2}\frac{j}{K}$ where $j=0,1,\dots,K-1$. At all test points $\lambda_j$ the unbiasedness conditions is forced.}. 
\label{fig:Pha}
\end{figure}

\subsection{Temperature sensing}
Thermometry involves the estimation of the temperature $T$ of a thermal state \cite{correa2015individual}. However, it is standard practice to estimate the inverse temperature $\beta = 1/T$ instead \cite{mehboudi2019thermometry,cenni2022thermometry}. For consistency with our notation, we identify the parameter of interest as $\lambda = \beta$. Consequently, the first and second moments of a single-mode state $\hat{\rho}_\lambda$ are given by $\mathbf{s}_\lambda = (0,0)^T$ and $V_\lambda = \coth(\lambda/2) \mathbb{I}_2$, respectively, where we have set $\omega = 1$ for simplicity.
In Fig.~\ref{fig:Tb}, we plot the RD QBB (assuming zero bias at all test points) and the SLD QCRB for the multicopy states $\hat{\rho}_\lambda^{\otimes N} $. As in the previous examples, we observe a clear threshold behavior. Nevertheless, for $N > 50$, the $K=2$ RD QBB is significantly smaller than the SLD QCRB. As stated before, in the high-signal regime ($N \gg 1$), the SLD QCRB is an informative bound because it can always be saturated. For large $N$, the scaling agreement between the RD QBB and the SLD QCRB improves as $K$ increases. Additionally, Fig.~\ref{fig:Ta} illustrates the comparison between the RLD QCRB and the RD QBB using two test points, $(\lambda_1, \lambda_2) = (2, 2 + \Delta \lambda)$. As discussed in Sec.~\ref{subsec:RecoverRLDCRBgeneral}, in the limit $\Delta \lambda \to 0$, the RD QBB converges to the RLD QCRB when the unbiasedness condition is imposed at the test points.

\begin{figure}[H] 
    \centering
    \subfloat[Comparison between the RLD QCRB and the RD QBB with two test points $(\lambda_1, \lambda_2) = (2, 2 + \Delta \lambda)$ for a single copy ($N = 1$). In the limit $\Delta \lambda \to 0$, both bounds converge to the same value.]{%
        \includegraphics[width=0.475\textwidth]{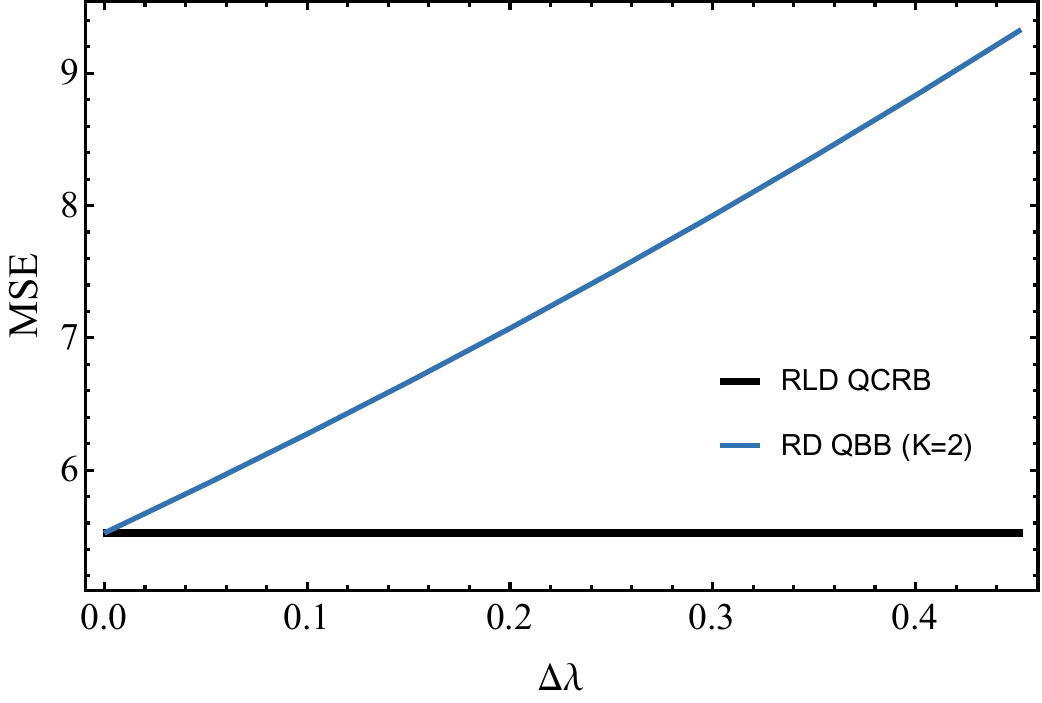}%
        \label{fig:Ta}%
        }%
    \hfill%
    \subfloat[ Comparison between the SLD QCRB and the RD QBB with different numbers of test points $K$, chosen as $\lambda_i = 2 + \frac{i-1}{K}$ for $i = 1, \dots, K$.  ]{%
        \includegraphics[width=0.495\textwidth]{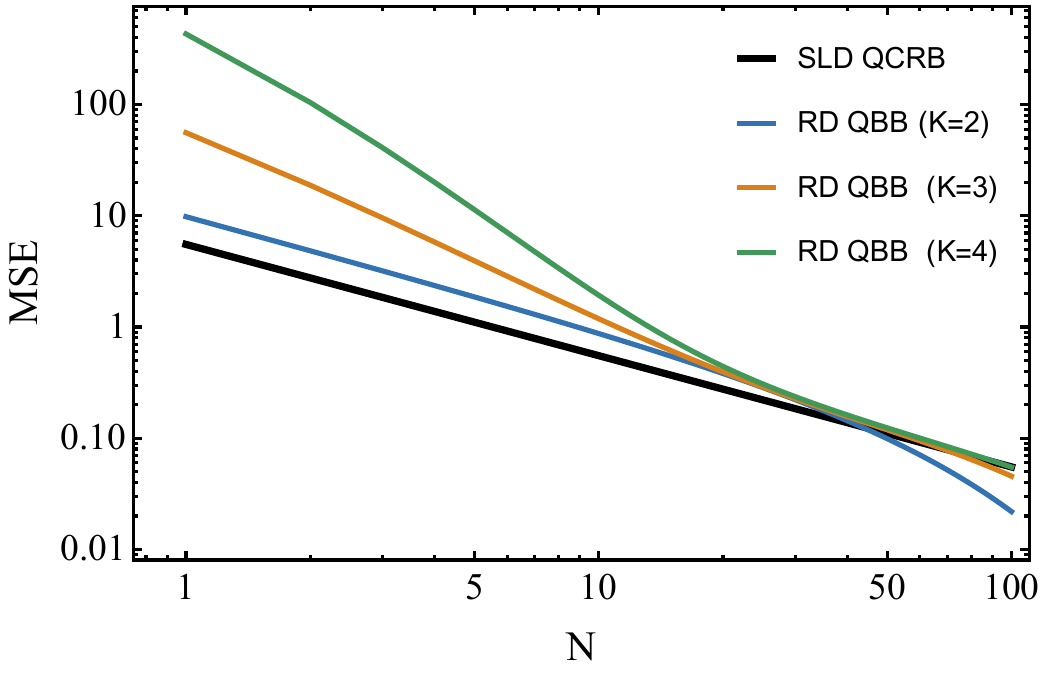}%
        \label{fig:Tb}%
        }%
    \caption{Quantum thermometry with $N$ copies, assuming $\omega = 1$ and a true parameter value $\lambda = 2$. For all RD QBB the value of $\boldsymbol{\gamma}$ is chosen to force the  unbiasedness condition at all test points.}\label{fig:Thermo}
\end{figure}

\subsection{Loss sensing}
We now consider an example where the parameter of interest is encoded in both the displacement and the covariance matrix. An important model in quantum optics is the loss channel, which describes the attenuation of a signal due to its interaction with a thermal environment. The precision limits for this model, characterized by the SLD QCRB, have been extensively studied for both Gaussian probe states \cite{monras2007optimal,navarro2025super,PhysRevA.109.053715} and arbitrary input states \cite{adesso2009optimal}. Here, we analyze this estimation problem using the RD QBB to investigate the conditions under which the SLD QCRB might be unattainable. The action of the loss channel on the first and second moments is defined as follows,
\begin{align}
    \mathbf{s} \to \sqrt{\lambda}\:  \mathbf{s} \eqqcolon  \mathbf{s}_{\lambda}\hspace{1cm} V \to \lambda V + (1-\lambda)\mu I \eqqcolon V_{\lambda}. 
\end{align}
Here, $\mu I$ represents the environment's thermal state. At zero temperature $\mu = 1$, while for finite temperatures $\mu > 1$. The losses are characterized by the transmissivity parameter $\lambda \in [0, 1]$, where $\lambda=1$ corresponds to a noiseless (identity) channel and $\lambda=0$ corresponds to a completely lossy channel. Our objective is to estimate $\lambda$ given an initial probe state.\\
We consider a single-mode coherent state as the input, with $\mathbf{s}=(\alpha,0)^T$, $V=I_2$, and an environment characterized by $\mu = \coth(1/2)$. In Fig.~\ref{fig:Db}, we verify the expected agreement between the RD QBB and the RLD QCRB as $\Delta \lambda \to 0$ when the unbiasedness condition is imposed at the test points. For loss sensing with the multicopy state $\hat{\rho}^{\otimes N}$, as shown in Fig.~\ref{fig:Da}, the threshold gap is less significant than in previous examples. Notably, as the number of test points $K$ is increased, we observe the convergence of the RD QBB. As discussed previously, this convergence reflects the attainment of the supremum over the set of test point, characterizing the performance limit for estimators that are globally unbiased across the parameter range. In all calculations, the vector $\boldsymbol{\gamma}$ was fixed to ensure the zero-bias condition at every test point.
\begin{figure}[H] 
    \centering
        \subfloat[Comparison between the RLD QCRB and the RD QBB with two test points $(\lambda_1, \lambda_2) = (1/4, 1/4 + \Delta \lambda)$ for a single copy ($N=1$). In the limit $\Delta \lambda \to 0$, both bounds converge to the same value.]{%
        \includegraphics[width=0.47\textwidth]{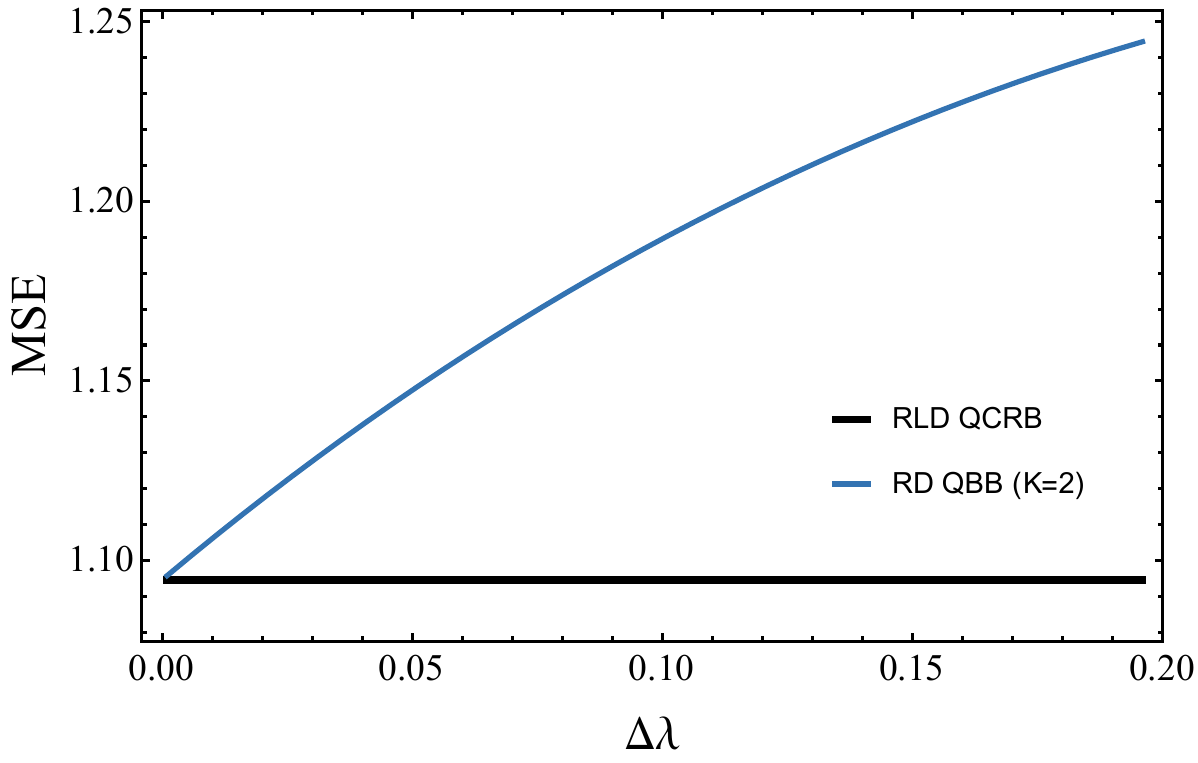}%
        \label{fig:Db}%
        }%
    \hfill%
    \subfloat[Comparison between the SLD QCRB, RLD QCRB, and the RD QBB with $K$ test points given by $\lambda_i = (i-1)/K$ for $i=1, \dots, K$. The overlap between RD QBB $K=10$, $K=20$ and $K=30$ illustrates the approach toward the limit for globally unbiased estimators.]{%
        \includegraphics[width=0.465\textwidth]{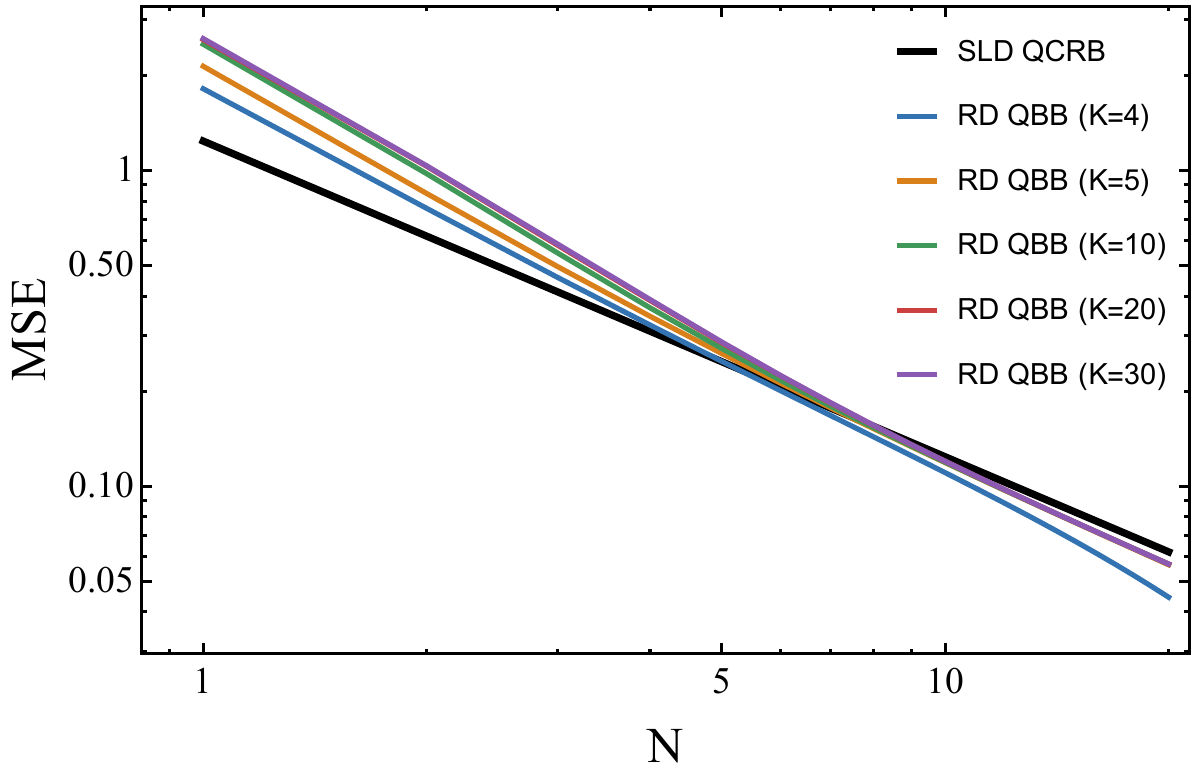}%
        \label{fig:Da}%
        }%
    \caption{Loss sensing with $N$ copies, assuming a true transmissivity $\lambda = 1/4$ and displacement amplitude $\alpha = 1/2$. For all RD QBB the value of $\boldsymbol{\gamma}$ is chosen to force the  unbiasedness condition at all test points.}\label{fig:Dissipative}
\end{figure}

\section{Conclusions}
We have derived a new quantum version of the Barankin bound based on a right-acting division operator. We derived sufficient existence conditions, considered the attainability by establishing a hierarchy between SD QBB and the RD QBB, and demonstrated that the RLD QCRB can be recovered from the RD QBB.

We applied our general results to the class of bosonic Gaussian states and reformulated the sufficient existence conditions and the Barankin matrix elements in terms of the mean vectors and covariance matrices, which allows for an efficient evaluation of the bound. Future work could be directed towards finding such analytic formulas for the Barankin matrix elements of the SD QBB in Ref.~\cite{gessner2023hierarchies}.

We then calculated the RD QBB for various examples, such as phase, temperature, and loss sensing with Gaussian states and compared them to known bounds. In all of our examples, we observed the threshold effect, i.e., a departure of the quantum Barankin bound from the quantum Cram\'er-Rao bound in the few-shot/low-energy regime. This shows that the quantum Cram\'er-Rao bound is  too optimistic in these regimes for globally unbiased estimation, and our RD QBB provides a tighter and more informative bound.

As perspectives, we note that the Barankin bound historically has been used to study estimation in the radar problem~\cite{mcaulay1971a}. Therefore, it would be interesting to apply the RD QBB to quantum radar \cite{giovannetti2001quantum,tan2008quantum,zhuang2017entanglement,reichert2022quantum,reichert2024heisenberg,reichert2026resource} to study the threshold behavior. First steps in the frequentist framework have been taken in Ref.~\cite{Reichert2026}, and in the Bayesian framework Ref.~\cite{zhuang2022a} studied the problem by employing the Ziv-Zakai bound. Ultimately, thanks to the generality of our RD QBB approach, all bosonic Gaussian state estimation protocols can now be reexamined in the few-shot regime, to determine the threshold behavior and determine the ultimate limits for unbiased estimation.

\subsection*{Acknowledgements}
M.~R. thanks Francesco Albarelli and Ludovico Lami for helpful discussions and acknowledges support from UPV/EHU Ph.D. Grant No. PIF21/289. G.~A. acknowledges support from EPSRC Grant No. EP/X010929/1.
This project was supported by the Basque Government BasQ initiative under the Q-STREAM project. We also acknowledge support from OpenSuperQ+100 (Grant No. \allowbreak{101113946}) of the EU Flagship on Quantum Technologies, from Project Grant No. PID2024-156808NB-I00 and Spanish Ram\'on y Cajal Grant No. RYC-2020-030503-I funded by MI-CIU/AEI/10.13039/501100011033 and by “ERDF A way of making Europe” and “ERDF Invest in your Future”, and from the Spanish Ministry for Digital Transformation and of Civil Service of the Spanish Government through the QUANTUM ENIA project call Quantum Spain, and by the EU through the Recovery, Transformation and Resilience Plan–Next Generation EU within the framework of the Digital Spain 2026 Agenda.  
This work was also funded by the project PID2023-152724NA-I00, with funding from MCIU/AEI/10.13039/501100011033 and FSE+, by the project CNS2024-154818 with funding by MICIU/AEI /10.13039/501100011033, by the project CIPROM/2022/66 with funding by the Generalitat Valenciana, and by the Ministry of Economic Affairs and Digital Transformation of the Spanish Government through the QUANTUM ENIA Project call—QUANTUM SPAIN Project, by the European Union through the Recovery, Transformation and Resilience Plan—NextGenerationEU within the framework of the Digital Spain 2026 Agenda, and by the CSIC Interdisciplinary Thematic Platform (PTI+) on Quantum Technologies (PTI-QTEP+). This work is supported through the project CEX2023-001292-S funded by MCIU/AEI.

\appendix

\section{Useful relations for quadratic exponential operators}\label{app:GaussianRelations}
Here, we list useful properties that were stated or derived in Ref.~\cite{seshadreesan2018renyi} so that we can refer to them in our proofs.

\subsection{Property}
 The displacement operator has the following properties:
\begin{align}
    \hat D(\mathbf s)^{-1} &= \hat D(-\mathbf s)  \\
    \hat D (\mathbf s)\hat D(\mathbf t) &= \hat D(\mathbf s+\mathbf t) e^{-\frac{1}{2}\mathbf s^T i\Omega \mathbf t} \label{eq:DispProd} \\
    \hat D(\mathbf s)\hat{ \mathbf x}\hat D(-\mathbf s) &= \hat {\mathbf x} + \mathbf s  .\label{eq:quadTrafDisp}
\end{align}
Note that from the above follows $  \hat D (\mathbf t)\hat D(\mathbf s) = \hat D(\mathbf s+\mathbf t) e^{-\frac{1}{2}\mathbf t^T i\Omega \mathbf s} $ from which follows $\hat D(\mathbf s+\mathbf t)  = \hat D (\mathbf t)\hat D(\mathbf s)e^{+\frac{1}{2}\mathbf t^T i\Omega \mathbf s}$ .

\subsection{Property}
Corollary 14 in \cite{seshadreesan2018renyi} states that
\begin{align}
    \hat D (\mathbf l) \exp[-\frac{1}{2}\hat{\mathbf  x}^T H \hat{ \mathbf x}] = \hat D(-\mathbf s) \exp[-\frac{1}{2}\hat{ \mathbf x}^T H \hat{\mathbf  x}] \hat D (\mathbf s) e^{\frac{1}{4}\mathbf l^T i\Omega W \mathbf l}, \label{eq:DispSand}
\end{align}
where $H$ is a complex symmetric matrix and $\mathbf l= (\exp[-i\Omega H]-I)\mathbf s$,
where $\mathbf s\in\mathds{C}^{2n}$, and 
\begin{align}
    W = \frac{I+\exp(i\Omega H)}{I-\exp(i\Omega H)}
\end{align}
and we have
\begin{align}
    W= - V i\Omega .
\end{align}
Also note that it follows that $i\Omega W = i\Omega (-V i\Omega) = + \Omega V\Omega= -\Omega^T V \Omega$.

\subsection{Property}
Lemma 16 of \cite{seshadreesan2018renyi} stated that
for positive-definite real matrices $H_1,H_2$ such that $e^{-i\Omega H_1}e^{-i\Omega H_2} = e^{-i\Omega H_3}$ (which corresponds to $e^{\hat H_1} e^{\hat H_2} = e^{\hat H_3}$, see proposition 6 of \cite{seshadreesan2018renyi}) and $\mathbf l = (\exp(-i\Omega H_1)-I) \mathbf{s}$, the following holds
\begin{align}
      -\frac{1}{4}\mathbf l^T i\Omega W_1 \mathbf l +\frac{1}{4}\mathbf  l^T i\Omega   W_3 \mathbf  l  =  -\mathbf  s^T (V_1+V_2)^{-1} \mathbf  s  .\label{eq:PropertyW}
\end{align}
 Note that if we replace $H_2\rightarrow -H_2$, we simply replace everywhere $V_2\rightarrow -V_2$ and $W_2\rightarrow -W_2$ as argued in Lemma 7 of Ref.~\cite{seshadreesan2018renyi}. In this case, we get $   -\frac{1}{4}\mathbf l^T i\Omega W_1 \mathbf l +\frac{1}{4}\mathbf  l^T i\Omega   W_3 \mathbf  l  =  -\mathbf  s^T (V_1-V_2)^{-1} \mathbf  s $ if $V_1-V_2$ invertible.

\section{Properties of products of Gaussian exponential operators}\label{app:PropertiesProduc}
\begin{theorem}\label{theorem:TraceClassCondition}
    Let $\hat \rho_1 = e^{\hat H_1}$ and $\hat \rho_2 = e^{\hat H_2}$ be exponential operators (unnormalized Gaussian states) with $\hat H_i = - \frac{1}{2} \hat{\mathbf x}^T H_i \hat{\mathbf{x}}$ where $H_i> 0$. They have covariance matrices $V_{\hat \rho_1}$ and $V_{\hat\rho_2}$. The operator
    \begin{align}
        \hat \rho_1 \hat \rho_2^{-1} = e^{\hat H_1} e^{-\hat H_2}
    \end{align}
    is trace class if \begin{align}
      2( V_{\hat\rho_2^2}  - V_{\hat\rho_1^2} ) =   V_{\hat\rho_2} - V_{\hat\rho_1} +\Omega^T (V_{\hat\rho_2}^{-1} - V_{\hat\rho_1}^{-1}) \Omega > 0 .
    \end{align}
\end{theorem}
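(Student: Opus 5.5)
We reduce trace-class-ness of $e^{\hat H_1}e^{-\hat H_2}$ to a Hilbert--Schmidt factorization. The plan is to write
\begin{align}
e^{\hat H_1}e^{-\hat H_2} = \left(e^{\hat H_1}e^{-\hat H_2/2}\right)\cdot \hat I \cdot \ldots
\end{align}
more concretely as a product $\hat A\hat B$ with $\hat A = e^{\hat H_1/2}$ and $\hat B = e^{\hat H_1/2}e^{-\hat H_2}$. Since a product of two Hilbert--Schmidt operators is trace class, it suffices to show that both factors are Hilbert--Schmidt. The factor $\hat A=e^{\hat H_1/2}$ is a (scaled) Gaussian state because $H_1>0$. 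It is therefore trace class, and in particular Hilbert--Schmidt. The whole task is to show $\mathrm{Tr}[\hat B^\dag \hat B]=\mathrm{Tr}[e^{-\hat H_2}e^{\hat H_1}e^{-\hat H_2}]<\infty$. We interpret $\hat B$ as the closure of the densely defined product on, e.g., finite-particle vectors, and show that the quadratic form defining $\mathrm{Tr}[\hat B^\dag\hat B]$ is finite.

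The key step is to express $\mathrm{Tr}[\hat B^\dag\hat B]=\mathrm{Tr}[e^{\hat H_1}e^{-2\hat H_2}]$ by cyclicity, which is to be justified a posteriori by monotone convergence in an eigenbasis of $\hat H_2$. This is the same structure as in the statement, with $\hat\rho_1$ and $\hat\rho_2^2$ in place of $\hat\rho_1$ and $\hat\rho_2$. It is more convenient to diagonalize. Write $\mathrm{Tr}[e^{\hat H_1}e^{-2\hat H_2}]=\sum_{\mathbf n}e^{2E_{\mathbf n}}\langle\mathbf n|e^{\hat H_1}|\mathbf n\rangle$ in the Williamson (normal-mode) basis of $\hat H_2$, a nonnegative series. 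Finiteness is then equivalent to the Gaussian integral obtained from the characteristic-function (or Husimi/P-type) representation being convergent. Representing $e^{\hat H_1}$ and $e^{-2\hat H_2}$ through the Gaussian kernel calculus of Ref.~\cite{seshadreesan2018renyi}, the trace becomes a Gaussian integral whose quadratic form has real part proportional to $V_{\hat\rho_2^2}-V_{\hat\rho_1^2}$. The role of the square here mirrors the $(V_1\pm V_2)^{-1}$ structure in Eq.~\eqref{eq:PropertyW}. A cleaner route I would try first is an operator inequality: the condition $V_{\hat\rho_1^2}<V_{\hat\rho_2^2}$ should be equivalent, via the monotone relation \eqref{eq:VofH} between $H$ and $V$ applied to $2H_i$, to the existence of $\epsilon>0$ with $e^{2\hat H_1}\le c\,e^{2(1+\epsilon)\hat H_2'}$ for a positive quadratic $\hat H_2'$ close to $\hat H_2$. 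From this, $\mathrm{Tr}[e^{-\hat H_2}e^{2\hat H_1}e^{-\hat H_2}]$ is dominated by the trace of a genuine Gaussian state.

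Finally, we verify the algebraic identity in the statement, namely $2V_{\hat\rho^2}=V_{\hat\rho}+\Omega^TV_{\hat\rho}^{-1}\Omega$, which follows directly from Eq.~\eqref{eq:SquareCov} applied to $\hat\rho_1$ and $\hat\rho_2$ and subtracted. The main obstacle is the middle step: rigorously turning the covariance-matrix inequality into a Hilbert--Schmidt estimate for an operator involving the unbounded factor $e^{-\hat H_2}$. Matrix inequalities on $V$ do not translate naively into operator inequalities on the exponentials when $H_1$ and $H_2$ do not commute. I expect to handle this through the Gaussian integral representation, where convergence is exactly positivity of the real part of the combined covariance, rather than through operator monotonicity.
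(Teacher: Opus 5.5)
\textbf{The split you chose cannot work under the stated hypothesis.} You write $e^{\hat H_1}e^{-\hat H_2}=e^{\hat H_1/2}\bigl(e^{\hat H_1/2}e^{-\hat H_2}\bigr)$, which requires $e^{\hat H_1/2}$ alone to dominate $e^{-\hat H_2}$. You also misidentify the governing quadratic form. The quantity you need is $\mathrm{Tr}[\hat B\hat B^\dag]=\mathrm{Tr}[e^{\hat H_1/2}e^{-2\hat H_2}e^{\hat H_1/2}]$. It has the structure $\mathrm{Tr}[\sigma_1^{1/2}\sigma_2^{-1}\sigma_1^{1/2}]$ with $\sigma_1=\hat\rho_1$ and $\sigma_2=\hat\rho_2^2$. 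Its finiteness is therefore tied to $V_{\hat\rho_2^2}>V_{\hat\rho_1}$ (cf.\ Theorem~\ref{theorem:TraceValue}), not to $V_{\hat\rho_2^2}>V_{\hat\rho_1^2}$. Since $V_{\hat\rho_1^2}<V_{\hat\rho_1}$ for faithful states, this is strictly stronger than the hypothesis.

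A single commuting mode makes the failure concrete. Take $H_i=h_iI_2$. Then $e^{\hat H_i}=e^{-h_i(\hat n+1/2)}$, $V_{\hat\rho_i}=\coth(h_i/2)I_2$ and $V_{\hat\rho_i^2}=\coth(h_i)I_2$. The hypothesis reads $h_2<h_1$, and indeed $e^{\hat H_1}e^{-\hat H_2}=e^{-(h_1-h_2)(\hat n+1/2)}$ is trace class. But your second factor is $\hat B=e^{(h_2-h_1/2)(\hat n+1/2)}$. For $h_2\ge h_1/2$ it is not Hilbert--Schmidt, and for $h_2>h_1/2$ it is not even bounded. No estimate on $\hat B$ can rescue this, whether via Gaussian integrals or via your operator-inequality variant, which targets the same trace.

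\textbf{The missing idea is to avoid the Hilbert--Schmidt split and work with $|\hat A^\dag|=(\hat A\hat A^\dag)^{1/2}$ directly, as the paper does.} One has $\hat A\hat A^\dag=e^{\hat H_1}e^{-2\hat H_2}e^{\hat H_1}=(\hat\rho_1^2)^{1/2}(\hat\rho_2^2)^{-1}(\hat\rho_1^2)^{1/2}$. Proposition 12 of Ref.~\cite{seshadreesan2018renyi} says that under $V_{\hat\rho_2^2}>V_{\hat\rho_1^2}$ this is a single Gaussian exponential $e^{\hat H'}$ with $H'>0$. Then $(\hat A\hat A^\dag)^{1/2}=e^{\hat H'/2}$ is again an unnormalized Gaussian state, hence trace class, and so $\hat A$ is trace class.

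This is where the squares $\hat\rho_i^2$ in the hypothesis come from. The condition constrains $\hat A\hat A^\dag$, and Gaussianity with a positive-definite generator survives taking the square root. In the commuting example, $\hat A\hat A^\dag=e^{-2(h_1-h_2)(\hat n+1/2)}$, which matches $h_2<h_1$ exactly.
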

\begin{proof}
    Define $\hat A = e^{\hat H_1}e^{-\hat H_2}$. Now, $\hat A$ is trace class iff $\vert A\vert = \sqrt{\hat A^\dag \hat A}$, or equivalently $\sqrt{\hat A\hat A^\dag}$, is trace class. So, let us consider
\begin{align}
     \hat A \hat A^\dag&= e^{\hat H_1}e^{-\hat H_2} (e^{\hat H_1}e^{-\hat H_2})^\dag  \\
    &=   e^{\hat H_1}e^{-\hat H_2}e^{-\hat H_2} e^{\hat H_1}  \\
    &=e^{\hat H_1} e^{-2\hat H_2} e^{\hat H_1} .
\end{align}
Now, we use Proposition 12 of Ref.~\cite{seshadreesan2018renyi}, which allows us to combine the three exponential operators into a single one
\begin{align}
    \hat A \hat A^\dag&= e^{\hat H_1} e^{-2\hat H_2} e^{\hat H_1} =  e^{\hat H'} ,
\end{align}
where $\hat H' = -\frac{1}{2} \hat x^T H' \hat x$ and $H'>0$ is defined as $H' =  2i\Omega \mathrm{arccoth}( V' i\Omega)$ with
\begin{align}
    V' = V_{\hat\rho_1^2} + \left( I+(V_{\hat\rho_1^2}\Omega)^{-2}\right)V_{\hat\rho_1^2} \left(V_{\hat\rho_2^2}- V_{\hat\rho_1^2}\right)^{-1} V_{\hat\rho_1^2} \left( I+(\Omega V_{\hat\rho_1^2})^{-2}\right) ,
\end{align}
where
\begin{align}
    V_{\hat\rho_i^2} = \frac{1}{2} \left( V_{\hat\rho_i} - \Omega V_{\hat\rho_i}^{-1} \Omega \right) =  \frac{1}{2} \left( V_{\hat\rho_i} + \Omega^T V_{\hat\rho_i}^{-1} \Omega \right)
\end{align}
is the covariance matrix of the squared state $(e^{\hat H_i})^2 = e^{2\hat H_i}$. Recall the relations in Eqs.~\eqref{eq:HofV} and \eqref{eq:VofH}  between  $H_i $ and $V_i$.

According to Proposition 12 of Ref.~\cite{seshadreesan2018renyi}, the operator $\hat A\hat A^\dag = e^{\hat H'}$ is an unnormalized Gaussian state, and thus trace class, if $V_{\hat\rho_2^2}> V_{\hat\rho_1^2}$. Rearranging yields
\begin{align}
  2\left(  V_{\hat\rho_2^2}  - V_{\hat\rho_1^2}  \right) &=  V_{\hat\rho_2} + \Omega^T V_{\hat\rho_2}^{-1} \Omega - V_{\hat\rho_1} - \Omega^T V_{\hat\rho_1}^{-1} \Omega \\
  &= V_{\hat\rho_2}-V_{\hat\rho_1} + \Omega^T (V_{\hat\rho_2}^{-1}-V_{\hat\rho_1}^{-1}) \Omega \\
  & > 0 ,
\end{align}
where we used Eq.~\eqref{eq:SquareCov}.
Thus $\hat A\hat A^\dag =  e^{\hat H'}$ is a (unnormalized) Gaussian state if $V_{\hat\rho_2}-V_{\hat\rho_1} + \Omega^T (V_{\hat\rho_2}^{-1}-V_{\hat\rho_1}^{-1}) \Omega> 0$. Thus, if  $V_{\hat\rho_2}-V_{\hat\rho_1} + \Omega^T (V_{\hat\rho_2}^{-1}-V_{\hat\rho_1}^{-1}) \Omega > 0$, the square root of this operator $ \sqrt{\hat A\hat A^\dag } = e^{\frac{1}{2}\hat H'}$ is also an unnormalized Gaussian state   as $\frac{1}{2}H' >0$. As Gaussian states are trace class, we also have that $\sqrt{A A^\dag}$ is trace class, from which follows that $A = e^{\hat H_1}e^{-\hat H_2}$ is trace class if  $V_{\hat\rho_2}-V_{\hat\rho_1} + \Omega^T (V_{\hat\rho_2}^{-1}-V_{\hat\rho_1}^{-1}) \Omega > 0$.
\end{proof}

\begin{theorem}\label{theorem:TraceValue}
      Let $\hat \rho_1 = e^{\hat H_1}$ and $\hat \rho_2 = e^{\hat H_2}$ be exponential operators (unnormalized Gaussian states) with $\hat H_i = - \frac{1}{2} \hat{\mathbf x}^T H_i \hat{\mathbf{x}}$ where $H_i> 0$. They have covariance matrices $V_{\hat \rho_1}$ and $V_{\hat\rho_2}$ and let \begin{align}
        V_{\hat\rho_2} - V_{\hat\rho_1} +\Omega^T (V_{\hat\rho_2}^{-1} - V_{\hat\rho_1}^{-1}) \Omega > 0 .
    \end{align}
    We then have
    \begin{align}
        \mathrm{Tr}[e^{ \hat H_1} e^{-\hat H_2} ] = \sqrt{\frac{\mathrm{Det}[\frac{ V_{\hat\rho_1}+i\Omega}{2}]\mathrm{Det}[\frac{ V_{\hat\rho_2}+i\Omega}{2}]}{ \mathrm{Det}[\frac{ V_{\hat\rho_2}- V_{\hat\rho_1}}{2}]}} .
    \end{align}
\end{theorem}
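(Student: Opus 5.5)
We plan to compute the trace through the characteristic-function (or Fock/phase-space) machinery already used in the paper, reducing everything to a Gaussian integral. Theorem~\ref{theorem:TraceClassCondition} guarantees that the hypothesis makes $e^{\hat H_1}e^{-\hat H_2}$ trace class, so the trace is well defined. The cleanest route is \emph{analytic continuation} in the second exponent. For $t\in[0,1]$ set $f(t)=\mathrm{Tr}[e^{\hat H_1}e^{-t\hat H_2}]$, or more generally replace $-\hat H_2$ by $+s\hat H_2$ with $s>0$.

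\textbf{Step 1: the case $s>0$.} For $s>0$ both factors are genuine unnormalized Gaussian states. We would combine them via Proposition~6 of Ref.~\cite{seshadreesan2018renyi} into $e^{\hat H_1}e^{s\hat H_2}=e^{\hat H_3(s)}$. We would then compute the trace as an overlap of two Gaussian operators, $\mathrm{Tr}[\hat O_1\hat O_2]=\frac{1}{(2\pi)^n}\int\mathrm d^{2n}z\,\chi_{\hat O_1}(\mathbf z)\chi_{\hat O_2}(-\mathbf z)$, exactly as in the proof of the Gaussian Barankin theorem. Here each $\hat O_i=e^{\hat H_i}$ has characteristic function $Z_i\,e^{-\frac14 \mathbf z^T\Omega^TV_i\Omega\mathbf z}$ with $Z_i=\sqrt{\mathrm{Det}[(V_i+i\Omega)/2]}$. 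The Gaussian integral gives
\begin{align}
\mathrm{Tr}[e^{\hat H_1}e^{\hat H_2}]=\frac{Z_1Z_2}{\sqrt{\mathrm{Det}[(V_1+V_2)/2]}} .
\end{align}
This is the standard overlap formula.

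\textbf{Step 2: continuation to $-\hat H_2$.} Replacing $\hat H_2\to-\hat H_2$ maps $V_2\to -V_2$ (as noted after Eq.~\eqref{eq:PropertyW}, Lemma~7 of Ref.~\cite{seshadreesan2018renyi}). It also maps $Z_2=\sqrt{\mathrm{Det}[(V_2+i\Omega)/2]}$ to $1/Z_2$, since $e^{-\hat H_2}=(e^{\hat H_2})^{-1}$ and $\mathrm{Det}[(-V_2+i\Omega)/2]=\mathrm{Det}[(V_2+i\Omega)/2]^{-1}\cdot$(sign factors). Here we would use $(V i\Omega)^2-I$ being related to $Z^{-2}$ and $\mathrm{Det}[(V+i\Omega)/2]\,\mathrm{Det}[(V-i\Omega)/2]=\mathrm{Det}[(V^2\ldots)]$. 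More directly, we would argue that $Z(-H)=1/Z(H)$ because $Z$ is multiplicative in the symplectic eigenvalues: $Z=\prod_j \frac{1}{2\sinh(\omega_j/2)}$, up to sign conventions.

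The resulting formula would be $\mathrm{Tr}[e^{\hat H_1}e^{-\hat H_2}]=Z_1 Z_2^{\pm1}/\sqrt{\mathrm{Det}[(V_2-V_1)/2]}$, up to a sign orientation of the denominator. We would then reconcile this with the stated expression, which has $Z_2$ in the numerator. On reflection, the stated result is probably most directly obtained without continuation. Write $\hat\rho_2^{-1}$ via its characteristic function, which is a complex Gaussian exactly as in Corollary~\ref{corollary:CharacteristicInverse}, with prefactor $\propto Z_2$ times a normalization, and covariance $-V_2$. Then $\mathrm{Tr}[\hat\rho_1\hat\rho_2^{-1}]=\frac{1}{(2\pi)^n}\int \chi_{\hat\rho_1}(\mathbf z)\chi_{\hat\rho_2^{-1}}(-\mathbf z)\,\mathrm d^{2n}z$ gives $\sqrt{\mathrm{Det}[(V_1-V_2)/2]}$ in the denominator. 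Because $\mathrm{Det}$ of a $2n\times2n$ matrix is invariant under overall sign, this is $\mathrm{Det}[(V_2-V_1)/2]$.

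\textbf{Main obstacle.} The delicate step is justifying the Gaussian integral when the characteristic function of $e^{-\hat H_2}$ is not a bona fide function, since $e^{-\hat H_2}$ is unbounded. We would handle this by analytic continuation in $s$. Both sides of the formula are analytic in $s$ on the region where the condition of Theorem~\ref{theorem:TraceClassCondition} holds (continuously deformed from $s>0$). The trace is analytic there because the trace-class bound from Theorem~\ref{theorem:TraceClassCondition} is locally uniform. Uniqueness of analytic continuation then transports the $s>0$ identity to $s=-1$. Care is needed with the branch of the square root. We would fix it by continuity from $s>0$, where all determinants are positive, together with the positivity $V_2-V_1>0$ implied by the hypothesis (via Corollary~\ref{corollary:Invertibility}-type arguments), which keeps $\mathrm{Det}[(V_2-V_1)/2]>0$ along the path.
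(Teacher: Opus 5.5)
There is a genuine gap: the computation you finally rely on does not go through, and the continuation argument you fall back on is not established. Throughout, write $V_i=V_{\hat\rho_i}$, let your $Z_i=\sqrt{\mathrm{Det}[(V_i+i\Omega)/2]}$, and let $V_2(s)$ denote the covariance matrix of $e^{s\hat H_2}$.

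\textbf{The direct computation.} The operator $e^{-\hat H_2}$ is unbounded and not trace class, so $\chi_{\hat\rho_2^{-1}}(\mathbf z)=\mathrm{Tr}[\hat\rho_2^{-1}\hat D(-\mathbf z)]$ does not exist. Corollary~\ref{corollary:CharacteristicInverse} concerns the product, not $\hat\rho_2^{-1}$ alone. Invoking it is also circular, because its prefactor is taken from this very theorem: at $\mathbf z=0$, $\mathbf s_1=\mathbf s_2=0$ it \emph{is} the statement. Moreover, the formal overlap integrand is $e^{-\frac14\mathbf z^T\Omega^T(V_1-V_2)\Omega\mathbf z}$. As you yourself note, the hypothesis forces $V_2-V_1>0$, so this Gaussian grows and the integral diverges. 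Replacing $\mathrm{Det}[(V_1-V_2)/2]$ by $\mathrm{Det}[(V_2-V_1)/2]$ because the determinant is sign-invariant hides exactly the sign and branch issue that needs proof.

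\textbf{The continuation sketch.} Its details are also wrong. With $\omega_j$ the symplectic eigenvalues of $H_2$, $\mathrm{Tr}[e^{s\hat H_2}]=\prod_j[2\sinh(s\omega_j/2)]^{-1}$ continues to $(-1)^nZ_2$ at $s=-1$, not to $1/Z_2$. This is consistent with $\mathrm{Det}[(-V_2+i\Omega)/2]=\mathrm{Det}[(V_2+i\Omega)/2]$. On your real path, $V_2(s)\sim\frac2sH_2^{-1}$ and $Z_2(s)\sim s^{-n}\mathrm{Det}[H_2]^{-1/2}$ both have poles at $s=0$. So the right-hand side as written is not analytic along the path. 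The determinant that actually varies, $\mathrm{Det}[(V_1+V_2(s))/2]$, blows up there, so tracking the sign of $\mathrm{Det}[(V_2-V_1)/2]$ does not fix the branch. Finally, analyticity of the trace for complex $s$ with $\mathrm{Re}\,s<0$ is only asserted. Theorem~\ref{theorem:TraceClassCondition} gives trace-class-ness at fixed real parameters and no uniform bounds.

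\textbf{How to repair it.} The continuation idea can be made rigorous, and would then be a genuinely different route from the paper's, but it needs an ingredient you don't supply. Let $f(s)=\mathrm{Tr}[e^{\hat H_1}e^{s\hat H_2}]$. In the eigenbasis $\{\vert m\rangle\}$ of $\hat H_2$ (eigenvalues $\epsilon_m<0$), $f(s)=\sum_m e^{s\epsilon_m}\langle m\vert e^{\hat H_1}\vert m\rangle$. This holds for $s>0$ and at $s=-1$, and the coefficients are nonnegative. The series converges at $s=-1$ because $e^{\hat H_1}e^{-\hat H_2}$ is trace class. Hence it converges uniformly on $\mathrm{Re}\,s\geq-1$, so $f$ is analytic on $\mathrm{Re}\,s>-1$ and continuous and strictly positive at $s=-1$.

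Now compare $f(s)^2$ with the meromorphic function $Z_1^2\,\mathrm{Det}[(V_2(s)+i\Omega)/2]/\mathrm{Det}[(V_1+V_2(s))/2]$, which agrees with it for $s>0$ by your Step~1. The identity theorem plus continuity at $s=-1$ give $f(-1)^2=Z_1^2Z_2^2/\mathrm{Det}[(V_2-V_1)/2]$. This excludes a pole, which also yields invertibility of $V_2-V_1$. Then $f(-1)>0$ selects the positive root, so squaring removes all branch bookkeeping.

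\textbf{The paper's route.} The paper avoids unbounded characteristic functions and continuation altogether. In the eigenbasis of $e^{\hat H_1}$ it rewrites $\mathrm{Tr}[e^{\hat H_1}e^{-\hat H_2}]=\mathrm{Tr}[e^{\frac12\hat H_1}e^{-\hat H_2}e^{\frac12\hat H_1}]=\mathrm{Tr}[e^{\hat H_3}]$, a positive trace-class Gaussian operator with $H_3>0$. It then reads off this trace from Proposition~12 of Ref.~\cite{seshadreesan2018renyi}, using $H_3>0$ in place of $V_2>V_1$.
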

\begin{proof}
Because of the assumption $ V_{\hat\rho_2} - V_{\hat\rho_1} +\Omega^T (V_{\hat\rho_2}^{-1} - V_{\hat\rho_1}^{-1}) \Omega > 0$, it follows from Theorem~\ref{theorem:TraceClassCondition} that $e^{\hat H_1}e^{-\hat H_2}$ is trace class.
     For a trace-class operator, the value of the trace is independent of the chosen orthonormal basis. Let us choose the convenient orthonormal basis $\vert \mathbf{n}\rangle$ which is the eigenbasis of $e^{\hat H_1}$, i.e., $e^{\hat H_1}\vert \mathbf{n}\rangle = c_{\mathbf{n}} \vert \mathbf{n}\rangle$. With this, we find:
\begin{align}
    \text{Tr}[  e^{\hat H_1}e^{-\hat H_2}]  = \sum_{\mathbf{n}} \langle \mathbf{n}\vert e^{\hat H_1}e^{-\hat H_2}\vert \mathbf{n}\rangle  
    &= \sum_{\mathbf{n}} \langle \mathbf{n}\vert e^{\frac{1}{2}\hat H_1}\left( e^{\frac{1}{2}\hat H_1} e^{-\hat H_2} e^{\frac{1}{2}\hat H_1}\right) e^{-\frac{1}{2}\hat H_1}\vert \mathbf{n}\rangle  \\
    &= \sum_{\mathbf{n}} \langle \mathbf{n}\vert c_{\mathbf{n}}^{\frac{1}{2}}\left( e^{\frac{1}{2}\hat H_1} e^{-\hat H_2} e^{\frac{1}{2}\hat H_1}\right) c_{\mathbf{n}}^{-\frac{1}{2}}\vert \mathbf{n}\rangle  \\
     &= \sum_{\mathbf{n}} \langle \mathbf{n}\vert \left( e^{\frac{1}{2}\hat H_1} e^{-\hat H_2} e^{\frac{1}{2}\hat H_1}\right)  \vert \mathbf{n}\rangle  \\
     &= \text{Tr}[e^{\frac{1}{2}\hat H_1} e^{-\hat H_2} e^{\frac{1}{2}\hat H_1}] = \text{Tr}[e^{\hat H_3}] .
\end{align}
In the first line after the second equals sign, we used $e^{\hat H_1} =e^{\frac{1}{2}\hat H_1}e^{\frac{1}{2}\hat H_1}$ and furthermore we inserted an identity  $\vert \mathbf{ n}\rangle = e^{\frac{1}{2}\hat H_1}e^{-\frac{1}{2}\hat H_1}\vert \mathbf{ n}\rangle$ . Note that inserting an identity $BB^{-1}$ can be problematic when $B^{-1}$ is unbounded. In our case, however, the operator $B^{-1}$ acts on a state that is in its domain and therefore there are no issues. In the second line, we used that $e^{\pm \frac{1}{2}\hat H_1}\vert\mathbf{n}\rangle = c_{\mathbf{n}}^{\pm \frac{1}{2}} \vert\mathbf{n}\rangle$, following from $e^{\hat H_1}\vert \mathbf{n}\rangle = c_{\mathbf{n}} \vert \mathbf{n}\rangle$. In the third line we canceled the coefficients.  In the fourth line, we introduced $\hat H_3 = -\frac{1}{2} \mathbf{\hat x}^T H_3\mathbf{\hat x}$, where $H_3$ is given by $e^{i\Omega H_3} =e^{i\Omega H_2/2} e^{i\Omega (-H_1)}e^{i\Omega H_2/2}$ \cite{seshadreesan2018renyi}. Now, since
$
\left|\operatorname{Tr}\!\left[e^{\hat H_1}e^{-\hat H_2}\right]\right|
=
\left|\operatorname{Tr}\!\left[e^{\frac12\hat H_1}e^{-\hat H_2}e^{\frac12\hat H_1}\right]\right|
<\infty,
$
it follows that \(e^{\hat H_3}\) is trace class, and hence \(H_3>0\) (note that  $e^{\hat H_3} \text{ is trace class}\Leftrightarrow H_3>0$). Now, Ref.~\cite{seshadreesan2018renyi}  calculated $\text{Tr}[e^{\hat H_3}]=  \sqrt{\frac{\mathrm{Det}[\frac{ V_{\hat\rho_1}+i\Omega}{2}]\mathrm{Det}[\frac{ V_{\hat\rho_2}+i\Omega}{2}]}{ \mathrm{Det}[\frac{ V_{\hat\rho_2}- V_{\hat\rho_1}}{2}]}}$ in Proposition 12, by assuming $V_2>V_1$ without assuming $H_3>0$. However, their proof still works by using the assumption $H_3>0$ instead of $V_2>V_1$. This can be seen by noting that $H_3>0$ implies that $I-e^{i\Omega H_3} = I-e^{i\Omega H_2/2} e^{i\Omega (-H_1)}e^{i\Omega H_2/2} $, which appears in Eq.~(121) of Ref.~\cite{seshadreesan2018renyi}, is invertible, which then allows us to use the same steps as in Ref.~\cite{seshadreesan2018renyi} to derive $\text{Tr}[e^{\hat H_3}]=  \sqrt{\frac{\mathrm{Det}[\frac{ V_{\hat\rho_1}+i\Omega}{2}]\mathrm{Det}[\frac{ V_{\hat\rho_2}+i\Omega}{2}]}{ \mathrm{Det}[\frac{ V_{\hat\rho_2}- V_{\hat\rho_1}}{2}]}}$ only using $H_3>0$.
% Also note that $I\pm e^{i\Omega G}$ is invertible if $G$ real symmetric with $G>0$ or $G$, which is somethin used in Eq.~(89) and Eq.~(90).
In fact, it also follows that $V_2>V_1$ whenever $H_3 >0$, equivalently whenever $e^{\hat H_3}$ is trace class. We thus have
\begin{align}
      \text{Tr}[  e^{\hat H_1}e^{-\hat H_2}]  = \sqrt{\frac{\mathrm{Det}[\frac{ V_{\hat\rho_1}+i\Omega}{2}]\mathrm{Det}[\frac{ V_{\hat\rho_2}+i\Omega}{2}]}{ \mathrm{Det}[\frac{ V_{\hat\rho_2}- V_{\hat\rho_1}}{2}]}} .
\end{align}
\end{proof}
\begin{corollary}\label{corollary:Invertibility}
    Let $V_{\hat\rho_1}$ and $V_{\hat\rho_2}$ be covariance matrices of (unnormalized) Gaussian states $\hat \rho_1 = e^{\hat H_1}$ and $\hat \rho_2 = e^{\hat H_2}$ that satisfy $ V_{\hat\rho_2} - V_{\hat\rho_1} +\Omega^T (V_{\hat\rho_2}^{-1} - V_{\hat\rho_1}^{-1}) \Omega > 0$ or equivalently $V_{\hat\rho_2^2} - V_{\hat\rho_1^2} >0$. It then follows that 
    \begin{align}
        V_{\hat\rho_2} -V_{\hat\rho_1}
    \end{align}
    is invertible.
\end{corollary}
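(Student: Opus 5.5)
The plan is to reduce invertibility of $V_{\hat\rho_2}-V_{\hat\rho_1}$ to invertibility of $I-e^{i\Omega H_3}$, where $H_3$ is the matrix from the proof of Theorem~\ref{theorem:TraceValue}. I would use the $W$-parametrization of Appendix~\ref{app:GaussianRelations} to connect the two.

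\textbf{Step 1: $H_3>0$.} By the hypothesis and Theorem~\ref{theorem:TraceClassCondition}, $e^{\hat H_1}e^{-\hat H_2}$ is trace class. As in the proof of Theorem~\ref{theorem:TraceValue}, $\operatorname{Tr}[e^{\frac12\hat H_1}e^{-\hat H_2}e^{\frac12\hat H_1}]$ is then finite. This operator equals $e^{\hat H_3}$, and it is trace class, so $H_3>0$. Here $H_3$ is defined by $e^{i\Omega H_3}=e^{i\Omega H_2/2}e^{-i\Omega H_1}e^{i\Omega H_2/2}$. For a real symmetric $G>0$, the matrix $i\Omega G$ is similar to the Hermitian matrix $G^{1/2}i\Omega G^{1/2}$. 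That matrix is invertible and has real nonzero eigenvalues. Hence $e^{i\Omega G}$ has no eigenvalue equal to $1$, and it has no eigenvalue equal to $-1$ either. Applying this to $G=H_3$ shows that $I-e^{i\Omega H_3}$ is invertible. Conjugating by $e^{i\Omega H_2/2}$, it follows that $I-e^{i\Omega H_2}e^{-i\Omega H_1}$ is invertible as well.

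\textbf{Step 2: rewrite in terms of $W$.} By Appendix~\ref{app:GaussianRelations}, $W_j=(I+e^{i\Omega H_j})(I-e^{i\Omega H_j})^{-1}=-V_{\hat\rho_j}i\Omega$. Inverting gives $e^{i\Omega H_j}=(W_j-I)(W_j+I)^{-1}$ and $e^{-i\Omega H_j}=(W_j+I)(W_j-I)^{-1}$. The factors $W_j\pm I$ are invertible by the same spectral argument applied to $H_j>0$, since all matrices involved are rational functions of $e^{i\Omega H_j}$. Write $I=(W_2+I)^{-1}(W_2+I)(W_1-I)(W_1-I)^{-1}$. Then
\begin{align}
I-e^{i\Omega H_2}e^{-i\Omega H_1}=(W_2+I)^{-1}\big[(W_2+I)(W_1-I)-(W_2-I)(W_1+I)\big](W_1-I)^{-1},
\end{align}
where I use that $W_2\pm I$ commute with $(W_2+I)^{-1}$. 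Expanding the bracket, the $W_2W_1$ and constant terms cancel. What remains is $2(W_1-W_2)=2(V_{\hat\rho_2}-V_{\hat\rho_1})i\Omega$.

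\textbf{Step 3: conclude.} The left-hand side is invertible by Step 1, and the outer factors are invertible. Therefore $(V_{\hat\rho_2}-V_{\hat\rho_1})i\Omega$ is invertible. Since $i\Omega$ is invertible, $V_{\hat\rho_2}-V_{\hat\rho_1}$ is invertible.

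The main obstacle is Step 1, namely justifying rigorously that trace-classness of $e^{\hat H_3}$ forces $H_3>0$. This is the statement "$e^{\hat H_3}$ trace class $\Leftrightarrow H_3>0$" already used in Theorem~\ref{theorem:TraceValue}, and I would invoke it there. If that feels too thin, the fallback is a direct argument. Suppose $(V_{\hat\rho_2}-V_{\hat\rho_1})\mathbf v=0$. Step 2 then produces an eigenvector of $e^{i\Omega H_3}$ with eigenvalue $1$. This corresponds to a symplectic direction in which $\hat H_3$ is not confining, and in that case $e^{\hat H_3}$ cannot be trace class.
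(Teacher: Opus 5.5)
Your proof is correct. It shares its first half with the paper's proof but handles the final step differently.

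The paper's proof is a one-line appeal to Theorems~\ref{theorem:TraceClassCondition} and~\ref{theorem:TraceValue}. The hypothesis makes $e^{\hat H_1}e^{-\hat H_2}$ trace class, hence $H_3>0$. The paper then uses the remark at the end of the proof of Theorem~\ref{theorem:TraceValue} that $H_3>0$ forces $V_{\hat\rho_2}>V_{\hat\rho_1}$, from which invertibility is trivial. That positivity remark is asserted there without derivation.

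You instead go from $H_3>0$ to the conclusion through two checkable facts. First, $e^{i\Omega G}$ has no eigenvalue $1$ when $G>0$, because $i\Omega G$ is similar to the invertible Hermitian matrix $G^{1/2}i\Omega G^{1/2}$. Second, the Cayley-transform identity $I-e^{i\Omega H_2}e^{-i\Omega H_1}=2(W_2+I)^{-1}(V_{\hat\rho_2}-V_{\hat\rho_1})i\Omega\,(W_1-I)^{-1}$ holds; I verified it.

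What your route buys is that everything after $H_3>0$ is explicit and independent of the unproved positivity claim. What it gives up is the sign information about $V_{\hat\rho_2}-V_{\hat\rho_1}$. Something like $\mathrm{Det}[V_{\hat\rho_2}-V_{\hat\rho_1}]>0$ is implicitly needed for the real square root in Theorem~\ref{theorem:TraceValue}. Invertibility is, however, all the corollary states, and all that Corollary~\ref{corollary:CharacteristicInverse} and the Gaussian theorem of Section~\ref{sec:Gaussian} use.

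Both arguments rest equally on the paper's claim that trace-classness of $e^{\hat H_3}$ implies $H_3>0$. Your fallback sketch for that step is only heuristic, but you do not need it.

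One small point. The formula $e^{i\Omega H_3}=e^{i\Omega H_2/2}e^{-i\Omega H_1}e^{i\Omega H_2/2}$, which you took from the proof of Theorem~\ref{theorem:TraceValue}, seems to have $H_1$ and $H_2$ interchanged relative to the homomorphism $e^{\hat H}\mapsto e^{-i\Omega H}$ of Appendix~\ref{app:GaussianRelations}. That homomorphism gives $e^{i\Omega H_3}=e^{i\Omega H_1/2}e^{-i\Omega H_2}e^{i\Omega H_1/2}$ for $e^{\frac12\hat H_1}e^{-\hat H_2}e^{\frac12\hat H_1}$. This is harmless for your argument, for three reasons:
\begin{itemize}
\item $I-M$ and $I-M^{-1}$ are invertible simultaneously.
\item $I-XY$ and $I-YX$ are invertible simultaneously.
\item With the indices swapped, the same manipulation produces the bracket $2(W_2-W_1)$, giving the same conclusion.
\end{itemize}
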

\begin{proof}
This is an immediate consequence of Theorem~\ref{theorem:TraceClassCondition} and \ref{theorem:TraceValue}, where we showed that $V_{\hat\rho_2} - V_{\hat\rho_1} +\Omega^T (V_{\hat\rho_2}^{-1} - V_{\hat\rho_1}^{-1}) \Omega > 0$ implies that $e^{ \hat H_1} e^{-\hat H_2}$ is trace class, from which eventually follows $ V_{\hat\rho_2} -V_{\hat\rho_1}>0$. Thus, $ V_{\hat\rho_2} -V_{\hat\rho_1}$ is invertible.
\end{proof}

\begin{theorem}\label{theorem:CharacteristicProd}
   Consider the trace-class  operator $\hat O = \left( \hat D(-\mathbf s_1) e^{\hat H_1} \hat D(\mathbf s_1) \right)\left(  \hat D(-\mathbf s_2) e^{\hat H_2} \hat D(\mathbf s_2) \right)$, which is the product of two unnormalized Gaussian states with mean vectors $\mathbf s_1,\mathbf s_2$ and covariance matrices $V_1,V_2$. Note that  $\hat H_i = - \frac{1}{2} \hat{\mathbf x}^T H_i \hat{\mathbf{x}}$, where $H_i> 0$ for $i=1,2$. The characteristic function of $\hat O$ is given by
   \begin{align}
    \chi_{\hat O} (\mathbf z) = e^{- \delta\mathbf s^T (V_1+V_2)^{-1} \delta\mathbf s} e^{- \frac{1}{4}\mathbf z^T \Omega^T V_3 \Omega \mathbf z } e^{i\mathbf{z}^T \Omega^T\left( + \mathbf s_2   + (V_2+i\Omega) (V_2+V_1)^{-1}   \delta \mathbf s  \right)} \sqrt{\frac{\mathrm{Det}[\frac{V_1+i\Omega}{2}]\mathrm{Det}[\frac{V_2+i\Omega}{2}]}{ \mathrm{Det}[\frac{V_2+V_1}{2}]}} ,
\end{align}
where
\begin{align}
    V_3 = -i\Omega  + (V_2+i\Omega) (V_1+V_2)^{-1}(V_1+i\Omega) .
\end{align}
\end{theorem}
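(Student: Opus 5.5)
The plan is to compute $\chi_{\hat O}(\mathbf z)=\mathrm{Tr}[\hat O\hat D(-\mathbf z)]$ by first collapsing the product into a single exponential operator sandwiched between displacements. Then the remaining trace is a Gaussian integral/trace of a single quadratic exponential.

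\emph{Step 1: remove the outer displacements.} By cyclicity of the trace (legitimate since $\hat O$ is trace class and displacements are unitary) and Eq.~\eqref{eq:DispProd}, conjugate everything by $\hat D(\mathbf s_2)$. This gives
\begin{align}
\chi_{\hat O}(\mathbf z)=e^{i\mathbf z^T\Omega^T\mathbf s_2}\,\mathrm{Tr}\!\left[\hat D(-\delta\mathbf s)e^{\hat H_1}\hat D(\delta\mathbf s)\,e^{\hat H_2}\hat D(-\mathbf z)\right],
\end{align}
where $\delta\mathbf s=\mathbf s_1-\mathbf s_2$ up to the sign convention fixed by the final formula. The phase factor $e^{i\mathbf z^T\Omega^T\mathbf s_2}$ accounts for the leading $\mathbf s_2$ in the mean vector.

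\emph{Step 2: absorb the relative displacement.} Use Corollary 14 of Ref.~\cite{seshadreesan2018renyi}, Eq.~\eqref{eq:DispSand}, read backwards. The sandwich $\hat D(-\delta\mathbf s)e^{\hat H_1}\hat D(\delta\mathbf s)$ equals $\hat D(\mathbf l)e^{\hat H_1}e^{-\frac14\mathbf l^Ti\Omega W_1\mathbf l}$ with $\mathbf l=(e^{-i\Omega H_1}-I)\delta\mathbf s$. Next merge $e^{\hat H_1}e^{\hat H_2}=e^{\hat H_3}$ (Proposition 6 of \cite{seshadreesan2018renyi}), where $e^{-i\Omega H_3}=e^{-i\Omega H_1}e^{-i\Omega H_2}$. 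Then apply Eq.~\eqref{eq:DispSand} once more to write $\hat D(\mathbf l)e^{\hat H_3}$ as a displaced $e^{\hat H_3}$ times $e^{\frac14\mathbf l^Ti\Omega W_3\mathbf l}$. The two scalar exponents combine via Lemma 16, Eq.~\eqref{eq:PropertyW}, into $e^{-\delta\mathbf s^T(V_1+V_2)^{-1}\delta\mathbf s}$. This produces the first factor.

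\emph{Step 3: characteristic function of a displaced single exponential.} We are left with $\mathrm{Tr}[\hat D(-\mathbf t)e^{\hat H_3}\hat D(\mathbf t)\hat D(-\mathbf z)]$ with $\mathbf t=(e^{-i\Omega H_3}-I)^{-1}\mathbf l$, a complex vector. This is the characteristic function of an (unnormalized, complex-parameter) Gaussian operator. It has the form $\mathrm{Tr}[e^{\hat H_3}]\,e^{-\frac14\mathbf z^T\Omega^TV_3\Omega\mathbf z}e^{i\mathbf z^T\Omega^T\mathbf t}$ with $V_3=-W_3(i\Omega)^{-1}$. The normalization $\mathrm{Tr}[e^{\hat H_3}]=\sqrt{\mathrm{Det}[\tfrac{V_1+i\Omega}{2}]\mathrm{Det}[\tfrac{V_2+i\Omega}{2}]/\mathrm{Det}[\tfrac{V_1+V_2}{2}]}$ comes from Proposition 12 of \cite{seshadreesan2018renyi}; this is the analogue of Theorem~\ref{theorem:TraceValue} with $+H_2$. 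It then remains to express everything in covariance-matrix language:
\begin{itemize}
\item writing $e^{i\Omega H_i}$ through $V_i$ via Eq.~\eqref{eq:VofH} (i.e.\ $e^{i\Omega H_i}=(V_ii\Omega-I)^{-1}(V_ii\Omega+I)$ or its variant) should turn $W_3$ into $V_3=-i\Omega+(V_2+i\Omega)(V_1+V_2)^{-1}(V_1+i\Omega)$;
\item the same substitution should turn $\mathbf t$ into $(V_2+i\Omega)(V_1+V_2)^{-1}\delta\mathbf s$.
\end{itemize}

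The main obstacle is this last algebraic translation. It requires a Möbius-type identity for the product $e^{-i\Omega H_1}e^{-i\Omega H_2}$ in terms of $V_1,V_2$ and the invertibility of $V_1+V_2$, which holds since both are $\geq$ positive. I would first verify the $V_3$ identity by computing $(I-e^{-i\Omega H_3})^{-1}$ in terms of $(V_1+V_2)^{-1}$, then reuse that expression for $\mathbf t$. For the convergence and trace manipulations I would also check that $e^{\hat H_3}$ is trace class ($H_3>0$), which holds because $\hat O$ is assumed trace class.
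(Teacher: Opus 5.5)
Your proposal is correct and follows essentially the same route as the paper: conjugating by $\hat D(\mathbf s_2)$, absorbing the relative displacement with Eq.~\eqref{eq:DispSand}, merging $e^{\hat H_1}e^{\hat H_2}=e^{\hat H_3}$, applying Eq.~\eqref{eq:DispSand} a second time, and using Eq.~\eqref{eq:PropertyW} together with the trace formula from Ref.~\cite{seshadreesan2018renyi}. The differences are minor: you apply the second sandwich identity to $\hat D(\mathbf l)e^{\hat H_3}$ alone rather than to $\hat D(-\mathbf z+\mathbf l)e^{\hat H_3}$, so the mean shift comes out directly as $\mathbf t=(e^{-i\Omega H_3}-I)^{-1}\mathbf l=(V_3+i\Omega)(V_1+i\Omega)^{-1}\delta\mathbf s=(V_2+i\Omega)(V_1+V_2)^{-1}\delta\mathbf s$ instead of through the paper's $E_1$ expansion; you also propose to verify the composition formula for $V_3$ (which does check out), whereas the paper takes it as given.
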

\begin{proof}
Recall that the characteristic function is defined as $\text{Tr}\left[  \hat O \hat D(-\mathbf z)\right] $. We thus get
\begin{align}
     \chi_{\hat O}(\mathbf z) &= \text{Tr}\left[ \hat D(-\mathbf s_1) e^{\hat H_1} \hat D(\mathbf s_1)   \hat D(-\mathbf s_2) e^{\hat H_2} \hat D(\mathbf s_2)  \hat D(-\mathbf z)\right] \label{eq:C1}\\
     &= \text{Tr}\left[\hat D(-\mathbf s_2)\hat D(\mathbf s_2)\hat D(-\mathbf s_1) e^{\hat H_1} \hat D(\mathbf s_1)   \hat D(-\mathbf s_2) e^{\hat H_2} \hat D(\mathbf s_2)  \hat D(-\mathbf z)\right] \\
     &= \text{Tr}\left[\hat D(-\mathbf s_2) \hat D(-\delta\mathbf{s}) e^{\hat H_1} \hat D(\delta\mathbf{s})    e^{\hat H_2} \hat D(\mathbf s_2)  \hat D(-\mathbf z)\right] \\
     &= e^{-\frac{1}{4}\mathbf l^T i\Omega W_1 \mathbf l}\text{Tr}\left[\hat D(-\mathbf s_2) \hat D( \mathbf{l}) e^{\hat H_1}      e^{\hat H_2} \hat D(\mathbf s_2)  \hat D(-\mathbf z)\right] \label{eq:C4}\\
      &= e^{-\frac{1}{4}\mathbf l^T i\Omega W_1 \mathbf l}\text{Tr}\left[ \hat D(\mathbf s_2)\hat D(-\mathbf z)\hat D(-\mathbf s_2) \hat D( \mathbf{l})    e^{\hat H_3}   \right] \\
       &= e^{-\frac{1}{4}\mathbf l^T i\Omega W_1 \mathbf l}e^{(-\mathbf z)^T i\Omega \mathbf s_2}\text{Tr}\left[  \hat D(-\mathbf z)  \hat D( \mathbf{l})  e^{\hat H_3}   \right] \\
        &= e^{-\frac{1}{4}\mathbf l^T i\Omega W_1 \mathbf l}e^{(-\mathbf z)^T i\Omega \mathbf s_2}  e^{-\frac{1}{2} (-\mathbf z)^T i\Omega \mathbf l}\text{Tr}\left[     \hat D(-\mathbf z +\mathbf{l})  e^{\hat H_3}    \right] \\
         &= e^{-\frac{1}{4}\mathbf l^T i\Omega W_1 \mathbf l}e^{(-\mathbf z)^T i\Omega \mathbf s_2}  e^{-\frac{1}{2} (-\mathbf z)^T i\Omega \mathbf l} e^{\frac{1}{4}(-\mathbf z +\mathbf{l})^T i\Omega W_3 (-\mathbf z +\mathbf{l})}  \text{Tr}\left[     \hat D(-\mathbf t) e^{\hat H_3} \hat D(\mathbf t)  \right] \\
           &= e^{-\frac{1}{4}\mathbf l^T i\Omega W_1 \mathbf l}e^{(-\mathbf z)^T i\Omega \mathbf s_2}  e^{-\frac{1}{2} (-\mathbf z)^T i\Omega \mathbf l} e^{\frac{1}{4}(-\mathbf z +\mathbf{l})^T i\Omega W_3 (-\mathbf z +\mathbf{l})}  \text{Tr}\left[    e^{\hat H_3}  \right] .
\end{align}
  In the second line, we introduced an identity $\hat I= \hat D(-\mathbf s_2)\hat D(\mathbf s_2)$.
    In the third line, we denote $\delta\mathbf{s} = \mathbf s_1- \mathbf s_2$. We used Eq.~\eqref{eq:DispProd} twice, both exponential functions cancel. We have according to Eq.~\eqref{eq:DispProd} $\hat D(\mathbf s)\hat D(\mathbf t)= \hat D(\mathbf s+\mathbf t)e^{-\frac{1}{2} \mathbf s^T i\Omega \mathbf t}$ and thus also $\hat D(\mathbf t)\hat D(\mathbf s)= \hat D(\mathbf s+\mathbf t)e^{-\frac{1}{2} \mathbf t^T i\Omega \mathbf s}$.  Now consider the exponents. We have $\mathbf s^T \Omega \mathbf t = \mathbf s^T (\Omega \mathbf t) = (\Omega \mathbf t)^T \mathbf s = \mathbf t^T \Omega^T \mathbf s = - \mathbf t^T \Omega \mathbf s$. Thus, the exponents indeed cancel.
    In the fourth line, we used Eq.~\eqref{eq:DispSand} with $\mathbf l= (\exp[-i\Omega H_1]-I)\delta \mathbf s$ and  $W_1 = \frac{I+\exp(i\Omega H_1)}{I-\exp(i\Omega H_1)}$ and  recall $W_1= - V_1 i\Omega$.
    In the fifth line, we used the cyclic property of the trace which can be used because $ e^{\hat H_1}    e^{\hat H_2} = e^{\hat H_3}$ is trace class and the displacement operators are bounded operators.
     In the sixth line, we used $\hat D(\mathbf s_2)\hat D(-\mathbf z)\hat D(-\mathbf s_2) = \hat D(\mathbf s_2)\exp [ (-\mathbf{z})^T i \Omega \hat{ \mathbf{x}}]\hat D(-\mathbf s_2) = \exp [ (-\mathbf{z})^T i \Omega \hat D(\mathbf s_2)\hat{ \mathbf{x}}D(-\mathbf s_2)] =\exp [ (-\mathbf{z})^T i \Omega (\mathbf{x} + \mathbf s_2)]= \hat D(-\mathbf{z}) e^{(-\mathbf z)^T i\Omega \mathbf s_2} $, where we used the definition of the displacement operator in Eq.~\eqref{eq:defDisp} and the relation in Eq.~\eqref{eq:quadTrafDisp}. 
    In the seventh line, we use Eq.~\eqref{eq:DispProd}: $ \hat D(-\mathbf z)  \hat D( \mathbf{l}) =  \hat D(-\mathbf z +\mathbf{l}) e^{-\frac{1}{2} (-\mathbf z)^T i\Omega \mathbf l}$.
    In the eighth line, we again use Eq.~\eqref{eq:DispSand}: $\hat D(-\mathbf z +\mathbf{l})  e^{\hat H_3} = \hat D(-\mathbf t) e^{\hat H_3} \hat D(\mathbf t) e^{\frac{1}{4}(-\mathbf z +\mathbf{l})^T i\Omega W_3 (-\mathbf z +\mathbf{l})}$. The value of $\mathbf t$ is irrelevant.
     In the ninth line, we used the cyclic property of the trace which is allowed because $e^{\hat H_3}$ is trace class
 with $ W_3 = \frac{I+\exp(i\Omega H_3)}{I-\exp(i\Omega H_3)}$
and we have $W_3= - V_3 i\Omega$.

Let us now simplify the exponents. First, we can merge the product of the exponential functions into one single exponential function whose argument is:
\begin{align}
   E &:= -\frac{1}{4}\mathbf l^T i\Omega W_1 \mathbf l +(-\mathbf z)^T i\Omega \mathbf s_2  -\frac{1}{2} (-\mathbf z)^T i\Omega \mathbf l +\frac{1}{4}(-\mathbf z +\mathbf{l})^T i\Omega W_3 (-\mathbf z +\mathbf{l})  .
   \end{align}
Now we can simplify the last term:
\begin{align}
    \frac{1}{4}(-\mathbf z +\mathbf{l})^T i\Omega W_3 (-\mathbf z +\mathbf{l})   &= \frac{1}{4} \Big( +\mathbf z^T i\Omega W_3 \mathbf z +\mathbf l^T i\Omega W_3 \mathbf l - \mathbf z^T i\Omega W_3 \mathbf l - \mathbf l^T i\Omega W_3 \mathbf z \Big) \\
    &= \frac{1}{4} \Big( +\mathbf z^T i\Omega W_3 \mathbf z +\mathbf l^T i\Omega W_3 \mathbf l - \mathbf z^T i\Omega W_3 \mathbf l - (i\Omega W_3 \mathbf z)^T\mathbf l \Big) \\
    &= \frac{1}{4} \Big( +\mathbf z^T i\Omega W_3 \mathbf z +\mathbf l^T i\Omega W_3 \mathbf l - \mathbf z^T i\Omega W_3 \mathbf l -  \mathbf z^T (i\Omega W_3)^T \mathbf l \Big) .
\end{align}
With this we can rewrite
\begin{align}
    E= \underbrace{-\frac{1}{4}\mathbf l^T i\Omega W_1 \mathbf l + \frac{1}{4}  \mathbf l^T i\Omega W_3 \mathbf l}_{=: E_0} \underbrace{- \mathbf{z}^T \left( i\Omega \mathbf s_2 - \frac{1}{2} i\Omega \mathbf l+ \frac{1}{4}i\Omega W_3\mathbf l + \frac{1}{4}(i\Omega W_3)^T \mathbf l \right)}_{=: E_1} \underbrace{+ \frac{1}{4}\mathbf z^T i\Omega W_3 \mathbf z}_{=:E_2} .
\end{align}
Let us now simplify each term. 
First, we have
\begin{align}
    E_0 = -\frac{1}{4}\mathbf l^T i\Omega W_1 \mathbf l + \frac{1}{4}  \mathbf l^T i\Omega W_3 \mathbf l =  - \delta\mathbf s^T (V_1+V_2)^{-1} \delta\mathbf s ,
\end{align}
using Eq.~\eqref{eq:PropertyW}. 

Now, for the terms $E_1,E_2$ we note that
\begin{align}
    i\Omega W_3 = i\Omega (- V_3 i\Omega) = \Omega V_3\Omega = - \Omega^T V_3 \Omega .
\end{align}
We thus have
\begin{align}
    E_2 = + \frac{1}{4}\mathbf z^T i\Omega W_3 \mathbf z = - \frac{1}{4}\mathbf z^T \Omega^T V_3 \Omega \mathbf z .
\end{align}
Furthermore, we have
\begin{align}
    E_1 &= - \mathbf{z}^T \left( i\Omega \mathbf s_2 - \frac{1}{2} i\Omega \mathbf l+ \frac{1}{4}i\Omega W_3\mathbf l + \frac{1}{4}(i\Omega W_3)^T \mathbf l \right) \\
    &= - \mathbf{z}^T \left( -i\Omega^T \mathbf s_2 + \frac{1}{2} i\Omega^T \mathbf l+ \frac{1}{4} (- \Omega^T V_3 \Omega )\mathbf l + \frac{1}{4}(- \Omega^T V_3 \Omega )^T \mathbf l \right) \\
    &= - \mathbf{z}^T \left( -i\Omega^T \mathbf s_2 + \frac{1}{2} i\Omega^T \mathbf l - \frac{1}{2}   \Omega^T V_3 \Omega \mathbf l  \right) \\
     &=    \mathbf{z}^T \left( +i\Omega^T \mathbf s_2 - \frac{1}{2} i\Omega^T \mathbf l + \frac{1}{2}   \Omega^T V_3 \Omega \mathbf l  \right) \\
     &=    i\mathbf{z}^T \Omega^T\left( + \mathbf s_2 - \frac{1}{2}   \mathbf l + \frac{1}{2i}  V_3 \Omega \mathbf l  \right) \\
     &=    i\mathbf{z}^T \Omega^T\left( + \mathbf s_2 - \frac{1}{2}   \mathbf l - i\frac{1}{2}  V_3 \Omega \mathbf l  \right) .
\end{align}
In the third line, we used that $(\Omega^T V \Omega)^T = \Omega^T V \Omega$.

Now, recall $\mathbf l= (\exp[-i\Omega H_1]-I)\delta \mathbf s$. Let us rewrite $(\exp[-i\Omega H_1]-I)$. From Eq.~(31) of Ref.~\cite{seshadreesan2018renyi}, we know that  $ \exp[+i\Omega H] = (W-I) (W+I)^{-1}$. Thus we have $ \exp[-i\Omega H] = (W+I)(W-I)^{-1} $ and
\begin{align}
   \exp[-i\Omega H] - I &= (W+I)(W-I)^{-1} -I = (W+I)(W-I)^{-1} - (W-I)(W-I)^{-1} \\
   &= (W+I -(W-I))(W-I)^{-1}   = 2 (W-I)^{-1} = 2 (- Vi\Omega-I)^{-1} = -2 ( Vi\Omega +I)^{-1} \\
   &= -2 ( Vi\Omega + i\Omega i\Omega)^{-1} = -2 ( [V + i\Omega]i\Omega  )^{-1} \\
   &=  -2 (i\Omega)^{-1}( V + i\Omega  )^{-1} =  -2 \frac{1}{i}(-\Omega) ( V + i\Omega  )^{-1} = (-1)^2 2  (-i) \Omega ( V + i\Omega )^{-1} \\
   &= - 2  i \Omega ( V + i\Omega  )^{-1} ,
\end{align}
where we used $i\Omega i\Omega = i^2 \Omega^2 = +I$, and $\Omega^{-1}=-\Omega$.

    With this we find
\begin{align}
    - i\frac{1}{2}  V_3 \Omega \mathbf l &= - i\frac{1}{2}  V_3 \Omega  (\exp[-i\Omega H_1]-I)\delta \mathbf s \\
    &= - i\frac{1}{2}  V_3 \Omega  (-2 i\Omega (V_1+i\Omega)^{-1})\delta \mathbf s \\
    &= (-1)^2 \frac{2}{2} i^2 V_3 \Omega \Omega (V_1+i\Omega)^{-1}\delta \mathbf s \\
    &=     - V_3 \Omega \Omega (V_1+i\Omega)^{-1}\delta \mathbf s \\
    &=     + V_3  (V_1+i\Omega)^{-1}\delta \mathbf s \\
    &= + [ -i\Omega + (V_2+i\Omega) (V_2+V_1)^{-1} (V_1+i\Omega)]  (V_1+i\Omega)^{-1}\delta \mathbf s \\
    &= + [ -i\Omega (V_1+i\Omega)^{-1}  \delta \mathbf s + (V_2+i\Omega) (V_2+V_1)^{-1}   \delta \mathbf s]  \\
    &= + [+ \frac{1}{2}\mathbf l + (V_2+i\Omega) (V_2+V_1)^{-1}   \delta \mathbf s] \\
    &= + \frac{1}{2}\mathbf l + (V_2+i\Omega) (V_2+V_1)^{-1}   \delta \mathbf s ,
\end{align}
where we used $\Omega \Omega = -I$, and $V_3 = -i\Omega + (V_2+i\Omega) (V_2+V_1)^{-1} (V_1+i\Omega)$.

Finally, we obtain
\begin{align}
    E_1 &=  i\mathbf{z}^T \Omega^T\left( + \mathbf s_2 - \frac{1}{2}   \mathbf l - i\frac{1}{2}  V_3 \Omega \mathbf l  \right)  \\
    &= i\mathbf{z}^T \Omega^T\left( + \mathbf s_2 - \frac{1}{2}   \mathbf l + \frac{1}{2}\mathbf l + (V_2+i\Omega) (V_2+V_1)^{-1}   \delta \mathbf s  \right)\\
    &= i\mathbf{z}^T \Omega^T\left( + \mathbf s_2   + (V_2+i\Omega) (V_2+V_1)^{-1}   \delta \mathbf s  \right) .
\end{align}

We finally have simplified all the terms in $E$ sufficiently. We thus obtain
\begin{align}
    \chi_{\hat O} (\mathbf z) = e^{- \delta\mathbf s^T (V_1+V_2)^{-1} \delta\mathbf s} e^{- \frac{1}{4}\mathbf z^T \Omega^T V_3 \Omega \mathbf z } e^{i\mathbf{z}^T \Omega^T\left( + \mathbf s_2   + (V_2+i\Omega) (V_2+V_1)^{-1}   \delta \mathbf s  \right)} \mathrm{Tr}[e^{\hat H_3}] .
\end{align}
The value of $\mathrm{Tr}[e^{\hat H_3}]$ was calculated in Theorem~\ref{theorem:TraceValue}.
\end{proof}

\begin{corollary}\label{corollary:CharacteristicInverse}
     Consider the    operator $\hat O = \left( \hat D(-\mathbf s_1) e^{\hat H_1} \hat D(\mathbf s_1) \right)\left(  \hat D(-\mathbf s_2) e^{-\hat H_2} \hat D(\mathbf s_2) \right)$, which is the product of an unnormalized Gaussian state $ \hat D(-\mathbf s_1) e^{\hat H_1} \hat D(\mathbf s_1)$, with mean $\mathbf s_1$ and covariance matrix $V_1$, and the inverse $(\hat D(-\mathbf s_2) e^{\hat H_2} \hat D(\mathbf s_2))^{-1}=\hat D(-\mathbf s_2) e^{-\hat H_2} \hat D(\mathbf s_2)$ of an unnormalized Gaussian state $\hat D(-\mathbf s_2) e^{\hat H_2} \hat D(\mathbf s_2)$ with mean $\mathbf s_2$ and covariance matrix $V_2$. Let 
     \begin{align}
        V_{2} - V_{1} +\Omega^T (V_{2}^{-1} - V_{1}^{-1}) \Omega > 0 .
    \end{align}
     The characteristic function of $\hat O$ is given by
   \begin{align}
    \chi_{\hat O} (\mathbf z) = e^{- \delta\mathbf s^T (V_1-V_2)^{-1} \delta\mathbf s} e^{- \frac{1}{4}\mathbf z^T \Omega^T V_3 \Omega \mathbf z } e^{i\mathbf{z}^T \Omega^T\left( + \mathbf s_2   + (-V_2+i\Omega) (-V_2+V_1)^{-1}   \delta \mathbf s  \right)} \sqrt{\frac{\mathrm{Det}[\frac{V_1+i\Omega}{2}]\mathrm{Det}[\frac{V_2+i\Omega}{2}]}{ \mathrm{Det}[\frac{V_2-V_1}{2}]}} ,
\end{align}
where
\begin{align}
    V_3 = -i\Omega  + (-V_2+i\Omega) (V_1-V_2)^{-1}(V_1+i\Omega) .
\end{align}
\end{corollary}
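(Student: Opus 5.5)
The plan is to rerun the chain of identities from the proof of Theorem~\ref{theorem:CharacteristicProd} with $e^{\hat H_2}$ replaced by $e^{-\hat H_2}$, and track how each ingredient changes. As noted after Eq.~\eqref{eq:PropertyW} (Lemma 7 of Ref.~\cite{seshadreesan2018renyi}), replacing $H_2\to -H_2$ amounts to $V_2\to -V_2$ and $W_2\to -W_2$ in all formulas built from the $W$-matrices. This is because $W=-Vi\Omega$ and $W(-H)=-W(H)$. So the target formulas should follow from those of Theorem~\ref{theorem:CharacteristicProd} by the substitution $V_2\to-V_2$. The one exception is the prefactor, which must be computed separately from Theorem~\ref{theorem:TraceValue}.

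\textbf{Step 1: displacement manipulations.} First I would repeat lines \eqref{eq:C1}--\eqref{eq:C4} verbatim. The insertion of $\hat D(-\mathbf s_2)\hat D(\mathbf s_2)$ and the merging of displacements use only Eq.~\eqref{eq:DispProd}. The move of $\hat D(\delta\mathbf s)$ through $e^{\hat H_1}$ via Eq.~\eqref{eq:DispSand} involves only $H_1$, so it is unchanged. This gives
\begin{align}
\chi_{\hat O}(\mathbf z)=e^{-\frac14\mathbf l^Ti\Omega W_1\mathbf l}\,\mathrm{Tr}\big[\hat D(-\mathbf s_2)\hat D(\mathbf l)e^{\hat H_1}e^{-\hat H_2}\hat D(\mathbf s_2)\hat D(-\mathbf z)\big].
\end{align}
The cyclic-trace step needs $e^{\hat H_1}e^{-\hat H_2}$ to be trace class. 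This follows from the hypothesis together with Theorem~\ref{theorem:TraceClassCondition}, since the hypothesis is exactly the condition stated there.

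\textbf{Step 2: the combined quadratic exponential.} Next I write $e^{\hat H_1}e^{-\hat H_2}=e^{\hat H_3}$, with $e^{-i\Omega H_1}e^{+i\Omega H_2}=e^{-i\Omega H_3}$ and $W_3=-V_3 i\Omega$. Here $H_3$ is complex symmetric, but Eq.~\eqref{eq:DispSand} is stated for complex symmetric $H$, so it still applies to move $\hat D(-\mathbf z+\mathbf l)$ through $e^{\hat H_3}$. The remaining algebra for $E_0,E_1,E_2$ goes through unchanged. Three points need care.
\begin{itemize}
\item $E_0=-\delta\mathbf s^T(V_1-V_2)^{-1}\delta\mathbf s$, by the sign-flipped form of Eq.~\eqref{eq:PropertyW}. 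Invertibility of $V_1-V_2$ comes from Corollary~\ref{corollary:Invertibility}.
\item $E_2$ has the same form, now with $V_3=-i\Omega+(-V_2+i\Omega)(V_1-V_2)^{-1}(V_1+i\Omega)$. This is the $V_2\to-V_2$ image of the formula in Theorem~\ref{theorem:CharacteristicProd}.
\item $E_1$ has the same form. The identity $\exp[-i\Omega H_1]-I=-2i\Omega(V_1+i\Omega)^{-1}$ involves only $V_1$, so the cancellation of $\pm\frac12\mathbf l$ works exactly as before. This leaves the mean $\mathbf s_2+(-V_2+i\Omega)(-V_2+V_1)^{-1}\delta\mathbf s$.
\end{itemize}

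\textbf{Step 3: the prefactor.} The final trace $\mathrm{Tr}[e^{\hat H_1}e^{-\hat H_2}]$ is supplied directly by Theorem~\ref{theorem:TraceValue}. This explains why the determinant factor is $\sqrt{\mathrm{Det}[\frac{V_1+i\Omega}{2}]\mathrm{Det}[\frac{V_2+i\Omega}{2}]/\mathrm{Det}[\frac{V_2-V_1}{2}]}$. It is not the naive substitution of $-V_2$ into the $\mathrm{Det}[(V_2+i\Omega)/2]$ factor.

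\textbf{Main obstacle.} I expect the hardest part to be justifying the $V_2\to -V_2$ substitution in the $W$-calculus, namely Eq.~\eqref{eq:PropertyW} and the Baker--Campbell--Hausdorff composition defining $W_3$. The concern is that $-H_2$ is not positive, so one must check that all the inverses $(I-e^{i\Omega H_3})^{-1}$ and $(V_1-V_2)^{-1}$ exist. My first approach would be to argue as in the proof of Theorem~\ref{theorem:TraceValue}. Trace-classness of $e^{\hat H_3}$ forces $\mathrm{Re}\,H_3>0$, which makes $I-e^{i\Omega H_3}$ invertible. Corollary~\ref{corollary:Invertibility} then handles the invertibility of $V_1-V_2$. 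With these in place, the identities of Ref.~\cite{seshadreesan2018renyi} extend by analytic continuation in $H_2$.
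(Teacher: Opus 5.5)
Your proposal is correct and is essentially the paper's proof. You rerun the proof of Theorem~\ref{theorem:CharacteristicProd} under $V_2\to-V_2$, $W_2\to-W_2$, with the cyclic-trace step justified by Theorem~\ref{theorem:TraceClassCondition} and the inverse $(V_1-V_2)^{-1}$ supplied by Corollary~\ref{corollary:Invertibility}. One small correction to your Step~3: the naive substitution \emph{does} give the right prefactor, because $\mathrm{Det}[\tfrac{-V_2+i\Omega}{2}]=\mathrm{Det}[\tfrac{V_2+i\Omega}{2}]$ (take the transpose, use $\Omega^T=-\Omega$ and even dimension $2n$) and likewise $\mathrm{Det}[\tfrac{V_1-V_2}{2}]=\mathrm{Det}[\tfrac{V_2-V_1}{2}]$; this is exactly the remark the paper makes, though reading the prefactor off Theorem~\ref{theorem:TraceValue} as you do is equally valid.
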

\begin{proof}
    The proof is completely analogous to the proof of Theorem~\ref{theorem:CharacteristicProd}. The condition $ V_{2} - V_{1} +\Omega^T (V_{2}^{-1} - V_{1}^{-1})\Omega>0$ ensures that $e^{\hat H_1}e^{-\hat H_2}$ is trace class, which allows us to use the cyclic property of the trace as in the  proof of Theorem~\ref{theorem:CharacteristicProd}. The condition $ V_{2} - V_{1} +\Omega^T (V_{2}^{-1} - V_{1}^{-1})\Omega>0$ also ensures  that the inverse  $(V_{2} - V_{1})^{-1}$ exists which follows from Corollary~\ref{corollary:Invertibility}.
    As shown in Ref.~\cite{seshadreesan2018renyi}, all identities used in the proof of Theorem~\ref{theorem:CharacteristicProd} are still valid under $e^{\hat H_2} \rightarrow e^{-\hat H_2}$ if we make the exchanges $V_2 \rightarrow -V_2$ and $W_2\rightarrow -W_2$. Note that $\mathrm{Det}[\frac{-V_2+i\Omega}{2}] = \mathrm{Det}[\frac{V_2+i\Omega}{2}]$.
\end{proof}

\section{Classical Barankin bound of Gaussian probability density}\label{app:ClassicalBarankinGaussian}
Let us here derive classical Barankin bound of classical Gaussian probability density. This allows us to compare the classical result to its quantum generalization derived in Section~\ref{sec:Gaussian}.
First, we define the family $\{p(\mathbf x\vert \tilde\lambda)\}_{\tilde\lambda\in\Lambda}$ of multivariate Gaussian distributions
\begin{align}
    p(\mathbf x\vert \tilde\lambda) = \frac{\exp\left( - \frac{1}{2} (\mathbf  x- \mathbf s_{\tilde\lambda})^T V_{\tilde\lambda}^{-1} (\mathbf  x- \mathbf s_{\tilde\lambda})\right)}{\sqrt{(2\pi)^{m}\mathrm{Det}[V_{\tilde\lambda}]}} ,
\end{align}
where $\mathbf x \in \mathds R^{m}$, $\mathbf s_{\tilde\lambda} \in \mathds R^{m}$ and $V_{\tilde\lambda}$ is a $m\times m$ real symmetric positive definite matrix. Let us now calculate the classical Barankin matrix
\begin{align}
     (C_{\mathrm{BB}})_{kl} &= \int_{\mathds R^{m}}\mathrm{d}^{m}x\,\frac{p(\mathbf x\vert\lambda_k)p(\mathbf x\vert\lambda_l)}{p(\mathbf x\vert\lambda)} \\
     &= \sqrt{\frac{(2\pi)^{m}\mathrm{Det}[V_{\lambda}]}{ (2\pi)^{m}\mathrm{Det}[V_{\lambda_k}] (2\pi)^{m}\mathrm{Det}[V_{\lambda_l}]}} \int_{\mathds R^{m}}\mathrm{d}^{m}x\,\frac{\exp\left( - \frac{1}{2} (\mathbf  x- \mathbf s_{ \lambda_k})^T V_{ \lambda_k}^{-1} (\mathbf  x- \mathbf s_{ \lambda_k})\right) \exp\left( - \frac{1}{2} (\mathbf  x- \mathbf s_{ \lambda_l})^T V_{ \lambda_l}^{-1} (\mathbf  x- \mathbf s_{ \lambda_l})\right)}{\exp\left( - \frac{1}{2} (\mathbf  x- \mathbf s_{ \lambda})^T V_{ \lambda}^{-1} (\mathbf  x- \mathbf s_{ \lambda})\right)} \\
     &= \sqrt{\frac{\mathrm{Det}[V_{\lambda}]}{ (2\pi)^{m}\mathrm{Det}[V_{\lambda_k}]  \mathrm{Det}[V_{\lambda_l}]}} \int_{\mathds R^{m}}\mathrm{d}^{m}x\, \exp\left(-\mathbf x^T A \mathbf x+ \mathbf x^T \mathbf b + c\right) ,
\end{align}
where $A= \frac{1}{2} (V_{\lambda_k}^{-1}+V_{\lambda_l}^{-1} - V_\lambda^{-1})$, $\mathbf b = V_{\lambda_k}^{-1} \mathbf{s_{\lambda_k}} + V_{\lambda_l}^{-1} \mathbf{s_{\lambda_l}} -  V_{\lambda}^{-1} \mathbf{s_{\lambda }}$ and $c= -\frac{1}{2}(\mathbf s_{\lambda_k}^T V_{\lambda_k}^{-1}\mathbf s_{\lambda_k}+\mathbf s_{\lambda_l}^T V_{\lambda_l}^{-1}\mathbf s_{\lambda_l}-\mathbf s_{\lambda}^T V_{\lambda}^{-1}\mathbf s_{\lambda})$. Next, we use $\int_{\mathds R^{m}}\mathrm d^{m} x\, e^{-\mathbf x^T A \mathbf x+ \mathbf x^T \mathbf b} = \frac{\pi^{m/2}}{\sqrt{\mathrm{Det}[A]}} e^{\frac{1}{4}\mathbf b^T A^{-1}\mathbf b}$ which is valid when $A>0$, or equivalently $V_{\lambda_k}^{-1}+V_{\lambda_l}^{-1} - V_\lambda^{-1} >0$, from which follows
\begin{align}
     (C_{\mathrm{BB}})_{kl} &= \sqrt{\frac{\mathrm{Det}[V_{\lambda}]}{ (2\pi)^{m}\mathrm{Det}[V_{\lambda_k}]  \mathrm{Det}[V_{\lambda_l}]}} \frac{\pi^{m/2}}{\sqrt{\mathrm{Det}[A]}} e^{c} e^{\frac{1}{4}\mathbf b^T A^{-1}\mathbf b}  \\
     &= \sqrt{\frac{\mathrm{Det}[V_{\lambda}]}{ (2\pi)^{m}\mathrm{Det}[V_{\lambda_k}]  \mathrm{Det}[V_{\lambda_l}]}} \frac{\pi^{m/2}}{\sqrt{\mathrm{Det}[\frac{1}{2} (V_{\lambda_k}^{-1}+V_{\lambda_l}^{-1} - V_\lambda^{-1})]}}  e^{-\frac{1}{2}(\mathbf s_{\lambda_k}^T V_{\lambda_k}^{-1}\mathbf s_{\lambda_k}+\mathbf s_{\lambda_l}^T V_{\lambda_l}^{-1}\mathbf s_{\lambda_l}-\mathbf s_{\lambda}^T V_{\lambda}^{-1}\mathbf s_{\lambda})} \nonumber \\
     & \quad\quad \quad \quad \quad\quad \quad \quad \quad\quad \quad \quad \quad\times  e^{\frac{1}{4} (V_{\lambda_k}^{-1} \mathbf{s_{\lambda_k}} + V_{\lambda_l}^{-1} \mathbf{s_{\lambda_l}} -  V_{\lambda}^{-1} \mathbf{s_{\lambda }})^T (\frac{1}{2} (V_{\lambda_k}^{-1}+V_{\lambda_l}^{-1} - V_\lambda^{-1}))^{-1}(V_{\lambda_k}^{-1} \mathbf{s_{\lambda_k}} + V_{\lambda_l}^{-1} \mathbf{s_{\lambda_l}} -  V_{\lambda}^{-1} \mathbf{s_{\lambda }})} \\
     &= \sqrt{\frac{\mathrm{Det}[V_{\lambda}]}{  \mathrm{Det}[V_{\lambda_k}]  \mathrm{Det}[V_{\lambda_l}]\mathrm{Det}[V_{\lambda_k}^{-1}+V_{\lambda_l}^{-1} - V_\lambda^{-1}]}}     e^{-\frac{1}{2}(\mathbf s_{\lambda_k}^T V_{\lambda_k}^{-1}\mathbf s_{\lambda_k}+\mathbf s_{\lambda_l}^T V_{\lambda_l}^{-1}\mathbf s_{\lambda_l}-\mathbf s_{\lambda}^T V_{\lambda}^{-1}\mathbf s_{\lambda})} \nonumber \\
     & \quad\quad \quad \quad \quad\quad \quad \quad \quad\quad \quad \quad \quad\times  e^{\frac{1}{2} (V_{\lambda_k}^{-1} \mathbf{s_{\lambda_k}} + V_{\lambda_l}^{-1} \mathbf{s_{\lambda_l}} -  V_{\lambda}^{-1} \mathbf{s_{\lambda }})^T (  V_{\lambda_k}^{-1}+V_{\lambda_l}^{-1} - V_\lambda^{-1})^{-1}(V_{\lambda_k}^{-1} \mathbf{s_{\lambda_k}} + V_{\lambda_l}^{-1} \mathbf{s_{\lambda_l}} -  V_{\lambda}^{-1} \mathbf{s_{\lambda }})} .
\end{align}
By comparing this classical Barankin matrix to the quantum version given in Eq.~\eqref{eq:QBBgaussianFormaula}, we clearly notice a structural similarity.

\bibliography{barankin}

\end{document}